\newcommand{\envalias}[2]{\newenvironment{#1}{\begin{#2}}{\end{#2}}}
\newtheorem{claim}{Claim}
\crefname{thm}{theorem}{theorems}
\crefname{cor}{corollary}{corollaries}
\crefname{lem}{lemma}{lemmas}
\crefname{prop}{proposition}{propositions}
\crefname{rem}{remark}{remarks}
\crefname{exa}{example}{examples}
\crefname{defi}{definition}{definitions}
\crefname{claim}{claim}{claims}
\crefname{conj}{conjecture}{conjectures}
\algnewcommand\algorithmicfixed{\textbf{Fixed:}}
\algnewcommand\Fixed{\item[\algorithmicfixed]}
\algnewcommand\algorithmicswitch{\textbf{switch}}
\algnewcommand\algorithmiccase{\textbf{case}}
\algnewcommand\algorithmicforeach{\textbf{foreach}}
\algnewcommand\algorithmicnondet{\textbf{nondet}}
\algnewcommand\algorithmicor{\textbf{or}}
\algnewcommand\algorithmicassert{\textbf{assert}}
\algnewcommand\algorithmiclet{\textbf{let}}
\algnewcommand\Assert[1]{\State \algorithmicassert(#1)}
\colorlet{jorge}{red!50!black!80}
\colorlet{alessio}{blue!70!black!80}
\newcommand{\newextmathcommand}[2]{%
    \newcommand{#1}{\ensuremath{#2}\xspace}
}
\newcommand{\renewextmathcommand}[2]{%
    \renewcommand{#1}{\ensuremath{#2}\xspace}
}
\newcommand{\labeltext}[2]{%
  #1%
  \@bsphack%
  \csname phantomsection\endcsname 
  \def\@currentlabel{#1}{\label{#2}}%
  \@esphack%
}
\newextmathcommand{\N}{\mathbb{N}}
\newextmathcommand{\Np}{\Nat_+}
\newextmathcommand{\Z}{\mathbb{Z}}
\newextmathcommand{\Q}{\mathbb{Q}}
\newextmathcommand{\R}{\mathbb{R}}
\newextmathcommand{\B}{\mathbb{B}}
\newextmathcommand{\D}{\mathbb{D}}
\newextmathcommand{\FP}{\mathbb{D}_{\textit{fp}}}
\newextmathcommand{\fptime}{\textup{\textsc{FP}}}
\newextmathcommand{\ptime}{\textup{\textsc{P}}}
\newextmathcommand{\np}{\textup{\textsc{NP}}}
\newextmathcommand{\pspace}{\textup{\textsc{PSpace}}}
\newextmathcommand{\nexptime}{\textup{\textsc{NExpTime}}}
\newextmathcommand{\exptime}{\textup{\textsc{ExpTime}}}
\newextmathcommand{\twoexptime}{\textup{\textsc{2Exp}}}
\newextmathcommand{\threeexptime}{\textup{\textsc{3Exp}}}
\newextmathcommand{\expspace}{\textup{\textsc{ExpSpace}}}
\newextmathcommand{\twoexpspace}{\textup{\textsc{2ExpSpace}}}
\newextmathcommand{\tower}{\textup{\textsc{Tower}}}
\newextmathcommand{\cclass}{\textup{\textsc{C}}}
\newextmathcommand{\poly}{\textup{poly}}
\renewextmathcommand{\phi}{\varphi}
\renewcommand{\vec}{\bm}
\newextmathcommand{\lcm}{{\rm lcm}}
\newcommand{\abs}[1]{\ensuremath{\left|#1\right|}\xspace}
\newcommand{\ceil}[1]{\ensuremath{\left\lceil#1\right\rceil}\xspace}
\newcommand{\floor}[1]{\ensuremath{\left\lfloor#1\right\rfloor}\xspace}
\newextmathcommand{\dom}{\textup{dom}}
  \DeclareSymbolFont{stix@largesymbols}{LS2}{stixex}{m}{n}
  \DeclareMathDelimiter{\lBrace}{\mathopen} {stix@largesymbols}{"E8}%
                                            {stix@largesymbols}{"0E}
  \DeclareMathDelimiter{\rBrace}{\mathclose}{stix@largesymbols}{"E9}%
                                            {stix@largesymbols}{"0F}
\newextmathcommand{\defeq}{\coloneqq}
\newextmathcommand{\eqdef}{\defeq}
\newcommand{\sub}[2]{\ensuremath{[#1\,/\,#2]}\xspace}
\definecolor{light-gray}{gray}{0.95}
\newcommand{\myguess}{\textbf{guess}\xspace}
\newcommand{\myreturn}{\textbf{return}\xspace}
\algnewcommand\algorithmicndbranchoutput{\textbf{Output of each branch ($\beta$):}}
\algnewcommand\NDBranchOutput{\item[\algorithmicndbranchoutput]}
\algnewcommand\algorithmicglobalspec{\textbf{Ensuring:}}
\algnewcommand\GlobalSpec{\item[\algorithmicglobalspec]}
\newextmathcommand{\Cmap}{\textup{\textsc{C}}}
\newextmathcommand{\SIGN}{\textup{\textsc{Sign}}}
\newextmathcommand{\ERisk}{\textup{\textsc{ERisk}}}
\newextmathcommand{\cn}{\xi} 
\newextmathcommand{\alg}{\alpha}
\newextmathcommand{\dotb}{\mathbin{\scalebox{0.7}{$\bullet$}}}
\newextmathcommand{\height}{\textup{h}}
\newextmathcommand{\size}{\textup{size}}
\newcommand{\ipow}[1]{\ensuremath{{#1}^{\Z}}\xspace}
\begin{document}

\title[On $\exists \mathbb{R}$ with Integer Powers of a Computable Number]{On the Existential Theory of the Reals Enriched with Integer Powers of a Computable Number}
\titlecomment{{\lsuper*}This is an extended version of the conference paper~\cite{GallegoM25}.}

\author{Jorge Gallego-Hernández\lmcsorcid{0009-0002-2240-1107}}[a]	
\author{Alessio Mansutti\lmcsorcid{0000-0002-1104-7299}}[b]	

\address{IMDEA Software Institute and Universidad Politécnica de Madrid, Madrid, Spain}	
\email{jorge.gallego@imdea.org, j.gallegoh@alumnos.upm.es}  

\address{IMDEA Software Institute, Madrid, Spain}	
\email{alessio.mansutti@imdea.org}  

\keywords{Theory of the reals with exponentiation, decision procedures, computability} 

\begin{abstract}
  This paper investigates the extension of the existential theory of 
  the reals by an additional unary predicate~${\cn}^{\Z}$ for the integer 
  powers of a fixed computable real
  number~$\cn > 0$. 
  We denote this extension by $\exists\R(\ipow{\cn})$.
  If all we have access to is a Turing machine computing $\cn$,
  it is not possible to decide whether an input formula from this theory
  is satisfiable. However, we show an algorithm to decide this problem when 
  \begin{itemize}
    \item $\cn$ is known to be transcendental, or
    \item $\cn$ is a root of some given integer polynomial (that is, $\cn$ is
    algebraic).
  \end{itemize}
  In other words, knowing the algebraicity of $\cn$ suffices to circumvent
  undecidability. Furthermore, we establish complexity results under the proviso
  that~$\cn$ enjoys what we call a \emph{polynomial root barrier}. Using this
  notion, we show that the satisfiability problem of~$\exists\R(\ipow{\cn})$
  \begin{itemize}
    \item is in~\nexptime if $\cn$ is a natural number,
    \item is in~\expspace if~$\cn$ is an algebraic number, and
    \item is in~\threeexptime if~$\cn$ belongs to a family of transcendental numbers including~$\pi$ and~Euler's $e$.
  \end{itemize}

  To establish our results, we first observe that the satisfiability problem
  of~$\exists\R(\ipow{\cn})$ reduces in exponential time to the problem of
  solving quantifier-free instances of the theory of the reals where variables
  range over~${\cn}^{\Z}$. We then prove that these instances have a \emph{small
  witness property}: only finitely many integer powers of $\cn$ must be
  considered to find whether a formula is satisfiable. Our complexity results
  are shown by relying on well-established machinery from Diophantine
  approximation and transcendental number theory, such as bounds for the
  transcendence measure of numbers.
  
  As a by-product of our results, we are able to remove the appeal to Schanuel's
  conjecture from the proof of decidability of the entropic risk
  threshold problem for stochastic games with rational probabilities, rewards, and
  threshold~[Baier et al., \textit{MFCS}, 2023]: when the base of the entropic risk is $e$
  and the aversion factor is a fixed algebraic number, the problem is
  (unconditionally) in~\exptime.
\end{abstract}

\maketitle

\noindent
\begin{minipage}{0.85\linewidth}
\paragraph*{Funding.}
This work is part of a project co-funded by the European Union (GA 101154447) 
and by MCIN/AEI (GA PID2022-138072OB-I00).
Views and opinions expressed are however those of the authors only and do not necessarily reflect those of the European Union or European Commission. Neither the European Union nor the granting authority can be held responsible for them.
\end{minipage}%
\begin{minipage}{0.15\linewidth}
\flushright
\includegraphics[scale=0.25]{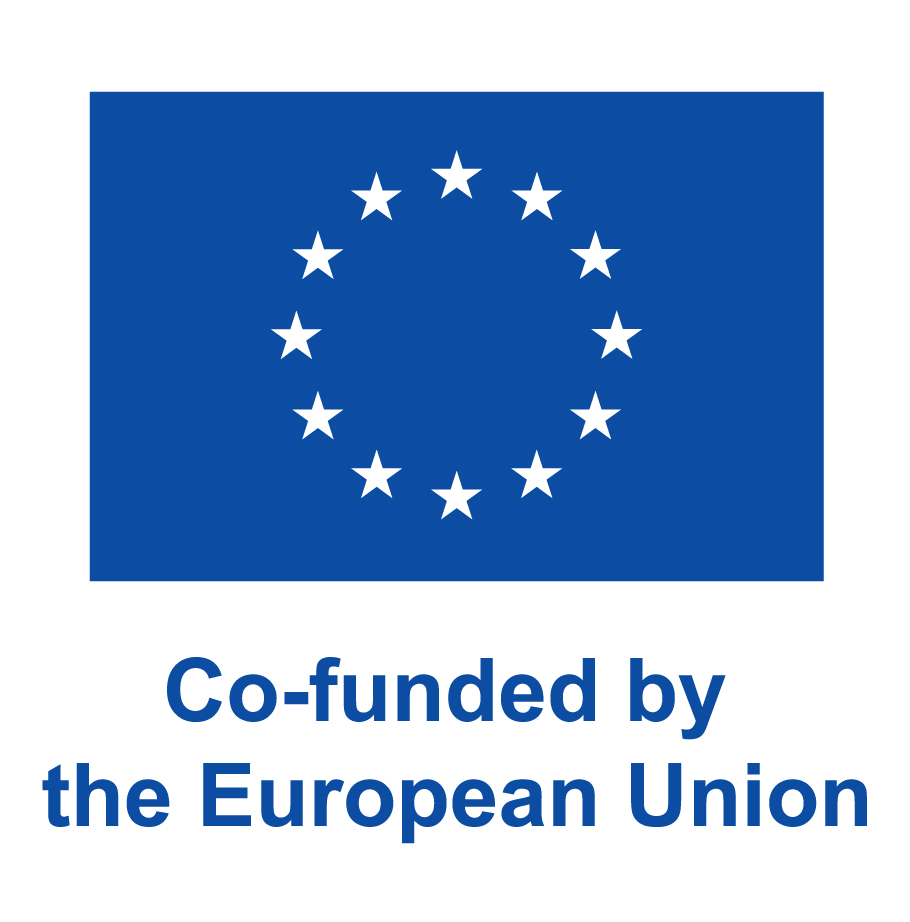}
\end{minipage}

\section{Introduction}
\label{sec:introduction}

Tarski's exponential function problem asks to determine the decidability of the
validity problem from the first-order (FO) theory of the structure $(\R; 0, 1,
+, \cdot, e^x, <, =)$. This theory, hereinafter denoted $\R(e^x)$, extends the
FO theory of the reals (a.k.a.~Tarski arithmetic) with the exponential
function~${x \mapsto e^x}$. A celebrated result by Macintyre and Wilkie
establishes an affirmative answer to Tarski's problem
conditionally to the truth of Schanuel's conjecture, a profound conjecture from
transcendental number theory~\cite{MacWilkie96}. Recent years have seen this
result being used as a black-box to establish conditional decidability results
for numerous problems stemming from dynamical systems~\cite{Dantam21,Almagor22}
automata theory~\cite{Daviaud21,ChistikovKMP22}, neural networks
verification~\cite{HankalaHKV24,IsacZBK23}, the theory of stochastic
games~\cite{BaierCMP23}, and differential privacy~\cite{BartheCKS021}. 

As it is often the case when appealing to a result as a black-box, some of the
computational tasks resolved by relying on the work in~\cite{MacWilkie96} do not
require the full power of~$\R(e^x)$. Consequently, it is natural to ask whether
some of these tasks can be tackled without relying on unproven conjectures,
perhaps by reduction to tame fragments or variants of~$\R(e^x)$. A~few~results
align with this question:
\begin{itemize}
  \item In the papers~\cite{AnaiW00,AchatzMW08,McCallumW12}, Achatz,
    Anai, McCallum and Weispfenning introduce a procedure to decide sentences of
    the form $\exists x \exists y : y = \text{trans}(x) \land \phi(x,y)$, where~$\phi$ 
    is a formula from Tarski arithmetic, and $x \mapsto \text{trans}(x)$
    is any analytic and strongly transcendental function
    (see~\cite[Section~2]{McCallumW12} for the precise definition). 
    Since $x \mapsto e^x$ enjoys such
    properties, this result shows a non-trivial fragment of $\R(e^x)$ that is
    unconditionally decidable. The procedure is implemented in the tool
    Redlog~\cite{DolzmannS97}. No complexity bound is known.

  \item In~\cite{Dries1986}, van den Dries proves decidability of the extension
    of Tarski arithmetic with the unary predicate~$\ipow{2}$ interpreted as the
    set $\{2^i : i \in \Z\}$, i.e., the set of all integer powers of $2$. While
    this result is achieved by model-theoretic arguments, an effective
    quantifier elimination procedure was later given by Avigad and
    Yin~\cite{AvigadY07}. Their procedure runs in~\tower, and in fact it
    requires non-elementary time already for the elimination of a single
    quantified variable. The choice of the base $2$ for the integer powers is
    somewhat arbitrary: in~\cite{DriesG06}, the decidability is extended to 
    any fixed algebraic number (i.e., a number that is root of some polynomial equation;
    see~\Cref{sec:preliminaries} for background knowledge on computable, algebraic and transcendental numbers), and in fact
    Avigad and Yin's procedure is also effective for any such number. 
    Considering any two $\alpha,\beta \in \R$
    satisfying~$\ipow{\alpha} \cap \ipow{\beta} = \{1\}$ yields undecidability,
    as shown by Hieronymi in~\cite{Hieronymi10}.
\end{itemize}

When comparing the two lines of work discussed above, it becomes apparent that
there is a balance to be struck between reasoning about transcendental numbers,
the path followed by the first set of works, and developing algorithms that are
well-behaved from a complexity standpoint, the path taken in particular in~\cite{AvigadY07}. Our aim with this paper is to somewhat bridge this gap: we add
to the second line of work by studying predicates for integer powers of
bases that may be transcendental, all the while maintaining complexity upper bounds. 

From now on, we write $\exists\R(\ipow{\cn})$ to denote the existential fragment
of the FO theory of the structure $(\R; 0, 1, \cn, +, \cdot, \ipow{\cn}, <,=)$,
where $\cn > 0$ is a fixed real number. In this paper, we examine the complexity
of deciding the satisfiability problem of $\exists\R(\ipow{\cn})$ for different
choices of the number $\cn$. The following theorem summarises our results.

\vbox{%
\begin{restatable}{theorem}{TheoremRootBarrier}
  \label{theorem:result-root-barrier}
    Fix a real number $\cn > 0$. The satisfiability problem for
    $\exists\R(\ipow{\cn})$ is
    \begin{enumerate}
      \item\label{theorem:result-root-barrier:point0} in \nexptime whenever
      $\cn$ is a natural number;
      \item\label{theorem:result-root-barrier:point1} in \expspace whenever
      $\cn$ is an algebraic number;
      \item\label{theorem:result-root-barrier:point2} in \threeexptime if
        $\cn \in \{\pi,\, e^\pi,\, e^\eta,\, \alg^\eta,\, \ln(\alg),\, 
        \frac{\ln(\alg)}{\ln(\beta)} : \alg,\beta,\eta \text{ algebraic with } \alg > 0 \text{ and } 1 \neq \beta > 0\}$;
      \item\label{theorem:result-root-barrier:point3} decidable whenever $\cn$
      is a computable transcendental number.
    \end{enumerate}
\end{restatable}%
}


\noindent
\Cref{theorem:result-root-barrier} has a catch, however.
To be effective, the algorithm for deciding~$\exists\R(\ipow{\cn})$
requires: 
\begin{itemize}
  \item For~\Cref{theorem:result-root-barrier}.\ref{theorem:result-root-barrier:point1},
  to have access to a canonical representation (see~\Cref{sec:preliminaries}) of $\cn$.
  \item In the cases covered
  by~\Cref{theorem:result-root-barrier}.\ref{theorem:result-root-barrier:point2},
  to have access to representations of~$\alg$,~$\beta$, and~$\eta$.
  \item In the case of $\cn$ computable transcendental number
  (\Cref{theorem:result-root-barrier}.\ref{theorem:result-root-barrier:point3}),
  to have access to a Turing machine $T$ that computes $\cn$ (that is,~given an
  input~$n \in \N$ written in unary, $T$ returns a rational number $T_n$ such
  that $\abs{\cn-T_n} \leq 2^{-n}$).
\end{itemize}
In summary, \Cref{theorem:result-root-barrier} shows that $\exists\R(\ipow{\cn})$ is
decidable for every fixed computable number $\cn > 0$, as long as it is known
whether $\cn$ is algebraic or transcendental, and in the former case having
access to a canonical representation of~$\cn$.


The results in~\Cref{theorem:result-root-barrier} are obtained by
\textbf{(i)} reducing the satisfiability problem for~$\exists\R(\ipow{\cn})$ to the
problem of solving instances of~$\exists\R(\ipow{\cn})$ where all variables
range over~$\ipow{\cn}$, and \textbf{(ii)}~showing that a solution over $\ipow{\cn}$ can
be found by only looking at a ``small'' set of integer powers of~$\cn$ (a
\emph{small witness property}). In proving Step~(ii), we also obtain a
quantifier elimination procedure for \emph{sentences}
of~$\exists\R(\ipow{\cn})$, that is formulae where no variable occurs free. This
procedure provides a partial answer to the question raised in~\cite{AvigadY07}
regarding the complexity of removing a single existential variable
in Tarski arithmetic extended with $\ipow{2}$:
within sentences of the existential fragment, such an elimination step 
can be performed in elementary time.



Coming back to our initial question on identifying computational tasks that
might not need the full power of $\R(e^x)$, as a by-product of our results we
show that the entropic risk threshold problem for stochastic games
studied by Baier, Chatterjee, Meggendorfer and Piribauer~\cite{BaierCMP23} is
unconditionally decidable in~\exptime even when the base of the entropic risk is
$e$ (or algebraic) and the aversion factor is any (fixed) algebraic number.

\section{Approaching complexity bounds with root barriers}\label{section:intro-to-root-barriers}
Theorems~\ref{theorem:result-root-barrier}.\ref{theorem:result-root-barrier:point1}
and \ref{theorem:result-root-barrier}.\ref{theorem:result-root-barrier:point2}
are instances of a more general result concerning classes of computable real
numbers. To properly introduce this result, it is beneficial to go back to
Macintyre and Wilkie's work on $\R(e^x)$. The exact statement made
in~\cite{MacWilkie96} is that~$\R(e^x)$ is decidable as soon as the following
computational problem, implied by Schanuel's conjecture, is established:

\begin{conjecture}
  \label{conjecture:WSC}
  There is a procedure that for input $f_1,\dots,f_n,g \in
  \Z[x_1,\dots,x_n,e^{x_1}\!,\dots,e^{x_n}]$, with $n \geq 1$, returns a positive
  integer $t$ with the following property: for every non-singular\footnote{A
  solution~$\vec \alpha$ of $\bigwedge_{i=1}^nf_i(\vec x) = 0$ is said to be
  non-singular whenever the determinant of the $n \times n$ Jacobian matrix
  $\frac{\partial(f_1,\dots,f_n)}{\partial(x_1,\dots,x_n)}$ is, once evaluated
  at $\vec \alpha$, non-zero. We give this definition only for completeness of
  the discussion on~\Cref{conjecture:WSC}. It is not used in this paper.}
  solution $\vec \alpha \in \R^n$ of the system of equalities
  $\bigwedge_{i=1}^nf_i(\vec x) = 0$, either $g(\vec \alpha) = 0$ or
  $\abs{g(\vec \alpha)} > t^{-1}$.
\end{conjecture}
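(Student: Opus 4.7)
The plan is to attack Conjecture~\ref{conjecture:WSC} by combining three ingredients: Khovanskii's theory of fewnomials to control the non-singular solution set, an effective transcendence theory to bound $|g(\vec\alpha)|$ from below when it is non-zero, and Schanuel's conjecture itself to bridge between the algebraic and the transcendental cases. The overall structure is: first localise the non-singular solutions to finitely many effectively bounded regions, then produce a computable lower bound on $|g|$ across those regions.

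First I would invoke Khovanskii's bound on non-degenerate real solutions of exponential polynomial systems: the number of non-singular $\vec\alpha \in \R^n$ satisfying $\bigwedge_{i=1}^n f_i(\vec\alpha) = 0$ is bounded by an explicit function of the number of monomials and the total degree of the~$f_i$. At every non-singular point the Jacobian is non-zero, so the implicit function theorem makes $\vec\alpha$ an isolated point of the variety, and the total number of candidates is finite and computable from the input. A separate argument, combining coefficient-wise bounds on the~$f_i$ with the monotonicity of~$e^x$, should yield an effective upper bound~$R$ on $\|\vec\alpha\|_\infty$ across all non-singular solutions, so that we are left with finitely many bounded boxes to consider.

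Next, on each such box I would produce an effective lower bound on $|g(\vec\alpha)|$ in the case $g(\vec\alpha) \neq 0$. If $\vec\alpha$ happens to have algebraic coordinates, classical height estimates over the number field $\Q(\alpha_1,\dots,\alpha_n,e^{\alpha_1},\dots,e^{\alpha_n})$ suffice. In general, however, the coordinates may be transcendental: here one appeals to Schanuel's conjecture to control the algebraic relations among $\alpha_1,\dots,\alpha_n,e^{\alpha_1},\dots,e^{\alpha_n}$, together with a quantitative transcendence measure for this tuple over~$\Q$ in the spirit of Baker--W\"ustholz or Nesterenko. The desired~$t$ is then obtained by taking, over the finitely many non-singular solutions, a maximum of the inverses of these lower bounds.

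The hard part --- and precisely the reason Conjecture~\ref{conjecture:WSC} remains open --- is this last step. Schanuel's conjecture is purely qualitative: it asserts an inequality on transcendence degree but does not by itself yield an effective lower bound on $|g(\vec\alpha)|$. What is truly needed is a quantitative version of Schanuel providing a computable transcendence measure as a function of the degrees and heights of $f_1,\dots,f_n,g$. The currently available effective transcendence results (Baker on linear forms in logarithms, Gelfond--Schneider, Nesterenko on values of modular functions) handle only restricted combinations of exponentials and algebraic numbers, so a realistic plan at present is either to establish the conjecture conditionally on a quantitative Schanuel, or to restrict to input systems for which the known effective bounds already apply --- which is essentially the philosophy pursued in the remainder of this paper by replacing~$e^x$ with the sparser predicate~$\ipow{\cn}$.
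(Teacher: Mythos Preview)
The statement you were asked to prove is labelled \emph{Conjecture}~\ref{conjecture:WSC} for a reason: the paper does not prove it, and does not claim to. It is presented purely as background --- it is the computational problem whose solvability Macintyre and Wilkie showed would entail decidability of $\R(e^x)$, and which is implied by Schanuel's conjecture. The paper only remarks that the \emph{existence} of $t$ follows from Khovanskii's finiteness theorem, while its \emph{computability} is open. There is therefore no ``paper's own proof'' to compare against.

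Your proposal is not a proof either, and you say as much in your final paragraph: you correctly isolate the obstruction (Schanuel is qualitative, whereas an effective lower bound on $|g(\vec\alpha)|$ would require a quantitative version), and you conclude that the conjecture remains open. That is an accurate assessment, but it means the first two paragraphs are a plan that cannot currently be completed, not a proof. One technical point in your sketch is also shaky: the claim that ``coefficient-wise bounds on the $f_i$ with the monotonicity of $e^x$'' yield an effective a~priori bound $R$ on $\|\vec\alpha\|_\infty$ across all non-singular solutions is not obviously true for general exponential-polynomial systems, and you give no argument for it. Khovanskii bounds the \emph{number} of non-singular solutions, not their magnitude; locating them effectively is part of the difficulty, not a preliminary step one can wave through.
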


\noindent
Above, $\Z[x_1,\dots,x_n,e^{x_1}\!,\dots,e^{x_n}]$ is the set of all $n$-variate
exponential-polynomials with integer coefficients. As remarked
in~\cite{MacWilkie96}, $t$ is guaranteed to exist by Khovanskii's
theorem~\cite{Khovanskii91}, hence the crux of the problem concerns how
to effectively compute such a number starting from $f_1,\dots,f_n$ and $g$. 
The purpose of the dichotomy ``either $g(\vec \alpha) = 0$ or
${\abs{g(\vec{\alpha})} > t^{-1}}$'' is in part to resolve what is a
fundamental problem when working with computable real numbers. Let
$\vec{\alpha}$ to be a vector of computable numbers. Consider the problem of
establishing, given in input a polynomial $p$ with integer coefficients, whether
$p(\vec{\alpha})$ is positive, negative, or zero. This \emph{polynomial sign
evaluation} task is a well-known undecidable problem. Intuitively, the
undecidability arises from the possibility that any approximation
$\vec{\alpha}^*$ of $\vec \alpha$ might yield $p(\vec{\alpha}^*) \neq 0$, even
though $p(\vec \alpha) = 0$. However, when working under the hypothesis that
either $p(\vec \alpha) = 0$ or $\abs{p(\vec \alpha)} > t^{-1}$, the problem
becomes decidable: it suffices to compute an approximation $\vec \alpha^*$
enjoying $|p(\vec \alpha) - p(\vec{\alpha}^*)| < (2t)^{-1}$, and then check
whether $|p(\vec \alpha^*)| \leq (2t)^{-1}$. If the answer is positive, then
$p(\vec{\alpha}) = 0$, otherwise $p(\vec{\alpha})$ and $p(\vec \alpha^*)$ have
the same sign.

The same issue occurs in~$\exists\R(\ipow{\cn})$: under the sole hypothesis that
$\cn$ is computable, we cannot even check if $\cn = 2$ holds. However, what we
can do is to draw some inspiration from~\Cref{conjecture:WSC}, and introduce as
a further assumption the existence of what we call a \emph{root barrier} of
$\cn$. Below, $\N_{\geq 1} = \{1,2,3,\dots\}$, and given a polynomial $p$ we
write $\deg(p)$ for its \emph{degree} and $\height(p)$ for its \emph{height}
(i.e., the maximum absolute value of a coefficient~of~$p$).

\begin{definition}
  \label{definition:intro:root-barrier}
  A function $\sigma \colon (\N_{\geq 1})^2 \to \N$ is a root barrier of $\cn
  \in \R$ if for every non-constant polynomial $p(x)$ with integer coefficients, $p(\cn) =
  0$ or $\ln \abs{p(\cn)} \geq {-\sigma(\deg(p), \height(p))}$.
\end{definition}

To avoid non-elementary bounds on the runtime of our algorithms, we focus on
computable numbers having root barriers $\sigma(d,h)$ that are polynomial
expressions of the form ${c \cdot (d + \ceil{\ln h})^k}$, where $c,k \in \N$
are some positive constants and $\ceil{\cdot}$ is the ceiling function. We call such
functions~\emph{polynomial root barriers},  
highlighting the fact that then $\sigma(\deg(p),\height(p))$
in~\Cref{definition:intro:root-barrier} is bounded by a polynomial in the bit
size of $p$. The
aforestated~\Cref{theorem:result-root-barrier}.\ref{theorem:result-root-barrier:point2} is obtained by instantiating the following \Cref{theorem:general-result-root-barrier}.\ref{theorem:general-result-root-barrier:point2} to natural choices of~$\cn$.

\begin{restatable}{theorem}{TheoremGeneralResultRootBarrier}
  \label{theorem:general-result-root-barrier}
    Let $\cn > 0$ be a real number computable by a polynomial-time Turing machine, 
    and let~$\sigma(d,h) \coloneqq {c
    \cdot (d + \ceil{\ln h})^k}$ be a root barrier of~$\cn$, 
    for some $c,k \in \N_{\geq 1}$.
    \begin{enumerate}
      \item\label{theorem:general-result-root-barrier:point1} If $k = 1$, then
      the satisfiability problem for $\exists\R(\ipow{\cn})$ is in \twoexptime.
      \item\label{theorem:general-result-root-barrier:point2} If $k > 1$, then
      the satisfiability problem for $\exists\R(\ipow{\cn})$ is in
      \threeexptime.
    \end{enumerate}
\end{restatable}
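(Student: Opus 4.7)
The plan is to execute the two-step strategy mentioned right after Theorem~\ref{theorem:result-root-barrier}: \textbf{(i)} reduce, in exponential time, the satisfiability of an $\exists\R(\ipow\cn)$-formula $\phi$ to the satisfiability of a quantifier-free formula $\psi(\vec y)$ over variables $\vec y$ each constrained to lie in $\ipow\cn$; and \textbf{(ii)} establish a small witness property for $\psi$ whose quantitative form is governed by the root barrier of $\cn$.

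For step \textbf{(i)}, I would introduce fresh variables $y_1,\dots,y_n$ for the subterms of $\phi$ appearing inside an $\ipow\cn$-atom, and then apply standard quantifier elimination for the existential theory of the reals (with $\cn$ treated as a constant symbol) to remove the real-valued variables of $\phi$. The resulting $\psi$ is a Boolean combination of polynomial (in)equalities in $\vec y$, with each $y_j$ further constrained to $\ipow\cn$; its size and the degrees and heights of the polynomials involved grow at most singly exponentially in $|\phi|$.

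For step \textbf{(ii)}, substitute $y_j = \cn^{z_j}$ with $z_j \in \Z$. Each atom of $\psi$ becomes the assertion that some expression $\sum_{i=1}^m b_i\, \cn^{L_i(\vec z)}$, with $b_i \in \Z[\cn,\cn^{-1}]$ and integer affine forms $L_i$, is positive, negative, or zero. I would classify candidate solutions $\vec z$ by their \emph{type}: the data of, for each atom, a partition of $\{1,\dots,m\}$ into groups of indices sharing the same value of $L_i(\vec z)$, together with a total order on the groups by this value. Once the type is fixed, signing an atom splits into \textbf{(a)} signing a group sum of the form $p(\cn)$ with $p \in \Z[x, x^{-1}]$, and \textbf{(b)} checking that the dominant group's exponent exceeds every subdominant one by a slack $\Delta$ sufficient to absorb the ratio of coefficient magnitudes. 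Under the root barrier hypothesis, $|p(\cn)| \geq \exp(-\sigma(\deg p, \height p))$ whenever $p(\cn) \neq 0$, and combined with the polynomial-time approximability of $\cn$ this makes the sign evaluation in \textbf{(a)} effective in time $\poly(\sigma(\deg p, \height p))$. The same lower bound on $|p(\cn)|$ is what controls the slack $\Delta$ needed in \textbf{(b)}.

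A type thus corresponds to a system of linear equalities and strict inequalities on $\vec z$ whose right-hand sides are of magnitude $\Delta$. By standard bounds on integer programming (e.g.\ Borosh--Treybig), such a feasible system admits a solution with $\|\vec z\|_\infty$ bounded by a single exponential in the bit size of its coefficients. The exponent $k$ of the root barrier enters precisely through $\Delta$: for $k = 1$ the overall witness is doubly exponential in $|\phi|$ and can be enumerated and tested in \twoexptime, whereas for $k > 1$ the extra power pushes the bound to triply exponential and thus \threeexptime. I expect the main obstacle to be the quantitative small witness argument itself --- justifying that enumerating types suffices, bounding $\Delta$ sharply in terms of $\sigma$, and tracking how the degrees and heights of the polynomials produced by step \textbf{(i)} interact with $\sigma$ --- so that the threshold between $k = 1$ and $k > 1$ aligns exactly with the claimed complexity jump.
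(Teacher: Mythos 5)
Your step (i) matches the paper's: both normalise to a quantifier-free formula over variables confined to $\ipow{\cn}$ using real quantifier elimination (the paper factors each $x_i$ as $u_i v_i$ with $v_i \in \{0\} \cup [1,\cn) \cup ({-}\cn,{-}1]$ and eliminates the $v_i$ via Basu--Pollack--Roy). Step (ii) is a genuinely different route. The paper runs a one-variable-at-a-time elimination \`a la Avigad--Yin, built around the successor-in-$\ipow{\cn}$ map $\lambda$ (Lemmas 8--11), carefully tracking how degrees and heights of polynomials inflate under iterated substitution. You instead propose a one-shot reduction: substitute $y_j = \cn^{z_j}$, classify $\vec z$ by the combinatorial ``type'' (partition of monomial exponents by equality, plus their total order), and for each type solve a bounded system of linear constraints over $\Z$.

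The one-shot reduction has a concrete gap. A satisfying $\vec z^*$ may have consecutive exponent groups whose slack is \emph{below} $\Delta$; in that case the type alone does not determine the sign of the atom, because the subleading group can partially cancel the leading one. Without a ``repair'' lemma that converts any satisfying assignment into one of clean type, you cannot conclude that enumerating types suffices --- you flag this concern but do not resolve it. Worse, the repair you would need (merging adjacent exponent groups when slack is $<\Delta$) introduces polynomial expressions in $\cn$ of degree $\deg + O(\Delta)$, on which the root barrier must be applied again; this creates exactly the iterated blow-up the paper's step-by-step elimination formalises. Relatedly, your account of where $k$ enters does not survive a back-of-the-envelope check: after step (i), $\psi$ has singly exponential degree and height in $\size(\phi)$, and one application of $\sigma(d,h) = c(d + \lceil \ln h\rceil)^k$ to such polynomials is still singly exponential for every fixed $k$ (raising $2^{\mathrm{poly}}$ to a constant power $k$ does not change its exponential class). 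So the Borosh--Treybig witness bound would put you in $\twoexptime$ for \emph{all} $k$, with no $k>1$ threshold at all. The $k$-dependent jump in the paper's bound emerges from compounding $\sigma$ across $n$ elimination rounds, where each round's output degree feeds the next round's barrier --- an iteration entirely absent from your write-up. Either the missing repair step secretly contains such an iteration (and you need to supply it), or the argument is incorrect as sketched.
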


\noindent
As we will see in~\Cref{sec:poly-evaluation}, whenever algebraic, the base~$\cn$ has a root barrier with exponent~${k =
1}$, and the related satisfiability problem for~$\exists(\ipow{\cn})$ thus lie in~\twoexptime.
However, a small trick will allow us to further improve this result to
\expspace,
establishing~\Cref{theorem:result-root-barrier}.\ref{theorem:result-root-barrier:point1}. 
When $\cn \in \N$, this result can be refined to the \nexptime bound given in~\Cref{theorem:result-root-barrier}.\ref{theorem:result-root-barrier:point0}.

\section{Preliminaries}
\label{sec:preliminaries}

In this section, we fix our notation, introduce background knowledge on
computable, algebraic and transcendental numbers, and define the
existential theory $\exists\R(\ipow{\cn})$.

\paragraph{Sets, vectors, and basic functions.}
Given a finite set $S$, we write $\abs{S}$ for its cardinality. Given $a,b \in
\R$, we write $[a,b]$ for the closed interval $\{ c \in \R : a \leq c \leq b
\}$. We use parenthesis \mbox{$($ and $)$} for open intervals, hence writing,
e.g., $[a,b)$ for the set $\{ c \in \R : a \leq c < b\}$. We write $[a..b]$ for
the set of integers $[a,b] \cap \Z$. Given $A \subseteq \R$, $c \in \R$, and a
binary relation $\sim$ (e.g., $\geq$), we define $A_{\sim c} \coloneqq {\{a \in
A : a \sim c\}}$. The \emph{endpoints} of $A$ are its supremum and infimum, if
they exist. For instance, the endpoints of the interval $[a,b)$ are the numbers
$a$ and $b$, while the endpoints of $[a..b]$ are the numbers $\ceil{a}$ and
$\floor{b}$, where $\floor{\cdot}$ stands for the floor function. 

Given a positive real number $b$ with $b \neq 1$, we write $\log_b(\cdot)$ for
the logarithm function of base $b$. We abbreviate $\log_2(\cdot)$ and
$\log_e(\cdot)$ as $\log(\cdot)$ and $\ln(\cdot)$, respectively. 

Unless stated explicitly, all integers encountered by our algorithms are encoded
in binary; note that $n \in \Z$ can be represented using $1+\ceil{\log(n+1)}$
bits. Similarly, each rational is encoded as a ratio  $\frac{n}{d}$ of two
coprime integers $n$ and $d$ encoded in binary, with $d \geq 1$.

\paragraph{Integer polynomials.}
An \emph{integer polynomial} in variables $\vec x = (x_1,\dots,x_n)$ is an
expression $p(\vec x) \coloneqq \sum_{j = 1}^m (a_j \cdot \prod_{i=1}^n
x_i^{d_{j,i}})$, where $a_j \in \Z$ and $d_{j,i} \in \N$ for every $j \in
[1..m]$ and $i \in [1..n]$. In the context of algorithms, we assume the
coefficients $a_j$ to be given in binary encoding, and the exponents $d_{i,j}$
to be given in unary encoding. We rely on the following notions:
\begin{itemize}
  \item The \emph{height} of $p$, denoted $\height(p)$, is defined as
  $\max\{|a_j| : j \in [1..m]\}$. 
  \item  The \emph{degree} of $p$, denoted $\deg(p)$, is defined as
  $\max\{\sum_{i=1}^n d_{j,i} : j \in [1..m]\}$.
  \item Given $i \in [1..n]$, the \emph{partial degree of $p$ in} $x_i$, denoted
  $\deg(x_i,p)$, is $\max\{ d_{j,i} : j \in
  [1..m] \}$. 
  \item The \emph{bit size} of $p$, denoted $\size(p)$, is defined as $m \cdot
  (\ceil{\log(\height(p)+1)} + n \cdot \deg(p))$.
\end{itemize}

\paragraph{Computable numbers, and algebraic and transcendental numbers.}
A real number $\cn \in \R$ is said to be \emph{computable} whenever there is a
(deterministic) Turing machine $T \colon \N \to \Q$ that given in input $n \in
\N$ written in unary (e.g., over the alphabet $\{1\}^*$) returns a rational
number~$T_n$ (represented as described above) such that $\abs{\cn-T_n} \leq
2^{-n}$. We thus have $\cn = \lim_{n \to \infty} T_n$, and for this reason $\cn$
is said to be \emph{computed by} $T$ (or $T$ \emph{computes} $\cn$).
The computable numbers form a field~\cite{Rice54}; 
we will later need the following two statements regarding
their closure under product and reciprocal.

\begin{restatable}{lemma}{LemmaTuringMachineProducts}
  \label{lemma:turing-machine-products}
  Given Turing machines $T$ and $T'$ computing reals $a$ and $b$,
  one can construct a Turing machine $T''$ computing $a \cdot b$.
  If $T$ and $T'$ run in polynomial time, then so does~$T''$.
\end{restatable}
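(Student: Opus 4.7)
The plan is to construct $T''$ so that, on input $n \in \N$, it outputs an approximation of $a \cdot b$ obtained by multiplying sufficiently precise approximations of $a$ and $b$ produced by $T$ and $T'$. The key identity driving the analysis is
\[
\abs{ab - T_m T'_m} = \abs{a(b - T'_m) + (a - T_m) T'_m} \leq (\abs{a} + \abs{T'_m}) \cdot 2^{-m},
\]
which motivates picking $m$ slightly larger than $n$ in order to absorb the factor $\abs{a} + \abs{T'_m}$ into the error budget.

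Concretely, $T''$ first invokes $T$ and $T'$ on input $0$ to retrieve $T_0$ and $T'_0$, from which it derives the \emph{a priori} bounds $\abs{a} \leq \abs{T_0} + 1$ and $\abs{b} \leq \abs{T'_0} + 1$. It then sets $M \coloneqq \ceil{\log(\abs{T_0} + \abs{T'_0} + 3)}$ and $m \coloneqq n + M$, invokes $T$ and $T'$ on input $m$ to compute $T_m$ and $T'_m$, and outputs the rational $T_m \cdot T'_m$. Combining the estimate above with $\abs{T'_m} \leq \abs{b} + 2^{-m} \leq \abs{T'_0} + 2$ yields $\abs{ab - T_m T'_m} \leq (\abs{T_0} + \abs{T'_0} + 3) \cdot 2^{-m} \leq 2^{M-m} = 2^{-n}$, which is the required accuracy.

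For the complexity claim, I would argue that if $T$ and $T'$ run in polynomial time, then $T_0$ and $T'_0$ have bit-size that is a constant depending only on the two machines, so $M$ is a constant and $m = n + M$ is linear in $n$. Consequently $T_m$ and $T'_m$ have bit-size polynomial in $n$, and their product (followed by a single GCD reduction) can be computed in polynomial time. The argument is essentially routine; the only subtlety I expect is that the precision $m$ must be chosen \emph{in advance} from the one-off bounds supplied by $T_0$ and $T'_0$, since one cannot read off $\abs{a}$ or $\abs{b}$ exactly from $T$ or $T'$.
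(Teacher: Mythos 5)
Your proposal is correct and follows essentially the same route as the paper: both fix the offset $M = \ceil{\log(\abs{T_0}+\abs{T'_0}+3)}$ from the zeroth approximations, query $T$ and $T'$ at precision $n+M$, and bound the product error against $\abs{T_0}+\abs{T'_0}+3$. The only cosmetic difference is the algebraic decomposition: the paper expands $ab-(a+\epsilon_1)(b+\epsilon_2)$ into three terms $a\epsilon_2+b\epsilon_1+\epsilon_1\epsilon_2$, while you use the telescoping identity $ab-T_mT'_m=a(b-T'_m)+(a-T_m)T'_m$, which sidesteps the $\epsilon_1\epsilon_2$ cross-term.
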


\begin{proof}
  Let $\ell \coloneqq \ceil{\log(\abs{T_0}+\abs{T_0'}+3)}$.
  We define $T''$ as the Turing machine that on input $n$ returns the 
  rational number $T_{n+\ell} \cdot T_{n+\ell}'$. 
  Clearly, $T''$ runs in time polynomial in $n$.
  We show that $\abs{a \cdot b - T''} \leq \frac{1}{2^n}$ for every $n \in \N$, 
  i.e., $T''$ computes $a \cdot b$. 
  Let $\epsilon_1 \coloneqq \abs{T_{n+\ell}-a}$ and $\epsilon_2 \coloneqq \abs{T_{n+\ell}'-b}$. Recall that $\epsilon_1,\epsilon_2 \leq \frac{1}{2^{n+\ell}}$. 
  Then, 
  \begin{align*}
    \abs{a \cdot b - T''} &=
    \abs{a \cdot b - T_{n+\ell} \cdot T_{n+\ell}'}\\
    &\leq \abs{a} \cdot \epsilon_2 + \abs{b} \cdot \epsilon_2 + \epsilon_2 \cdot \epsilon_2\\
    &< \frac{\abs{a}+\abs{b}+1}{2^{n+\ceil{\log(\abs{T_0}+\abs{T_0'}+3)}}}
    &\text{since $\epsilon_2,\epsilon_2 \leq \frac{1}{2^{n+\ell}}$
    and def.~of~$\ell$}\\
    &\leq \frac{1}{2^{n}}\frac{\abs{a}+\abs{b}+1}{\abs{a}+\abs{b}+1}\leq \frac{1}{2^n}
    &\text{since $\abs{a} \leq \abs{T_0}+1$ and $\abs{b} \leq \abs{T_0'}+1$}
    &&&\hfill\qedhere
  \end{align*}
  \end{proof}

\vspace{-8pt}
\begin{restatable}{lemma}{LemmaTuringMachineReciprocal}
  \label{lemma:turing-machine-reciprocal}
  Given a Turing machine $T$ computing a non-zero real number $r$,
  one can construct a Turing machine $T'$ computing $\frac{1}{r}$. 
  If $T$ runs in polynomial time, then so does $T'$.
\end{restatable}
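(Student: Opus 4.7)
The plan is to design $T'$ as a two-phase algorithm on input $n$: first to bootstrap a provable nonzero lower bound on $|r|$ from $T$, and then to query $T$ at a sufficiently high precision and output the reciprocal of the returned rational.

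For the bootstrap phase, I would iterate $m = 0, 1, 2, \dots$, querying $T$ on each input and stopping at the smallest $m_0$ with $|T_{m_0}| > 2^{-m_0}$. Since $T_m \to r \neq 0$, this loop terminates at an index $m_0$ that depends only on $r$ and $T$, not on $n$. The rational $\ell \coloneqq |T_{m_0}| - 2^{-m_0}$ is then a strictly positive, effectively computable lower bound on $|r|$, by the triangle inequality applied to $|T_{m_0} - r| \leq 2^{-m_0}$.

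In the second phase I would use the identity $|1/r - 1/T_m| = |r - T_m|/(|r|\cdot|T_m|)$. Whenever $2^{-m} \leq \ell/2$, the triangle inequality gives $|T_m| \geq \ell/2$, so the error is at most $2 \cdot 2^{-m}/\ell^2$. Picking $m$ to be the smallest integer larger than $n + 1 + 2\log(1/\ell)$ therefore drives the error below $2^{-n}$. Writing the returned value $T_m$ in its lowest-terms form $p/q$ with $q \geq 1$ and $p \neq 0$ (guaranteed by $|T_m| > 2^{-m}$), $T'$ outputs $q/p$, renormalised so that the denominator is positive.

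For the complexity, the bootstrap runs in time depending only on $r$ and $T$, hence $O(1)$ in the input size $n$; the second phase queries $T$ on a unary input of length $n + O(1)$ and performs one rational inversion, both polynomial in $n$ when $T$ is polynomial-time. The only conceptual care needed is to confirm that a usable lower bound $\ell$ is genuinely extractable from $T$ without any \emph{a priori} knowledge of $r$ beyond its nonzeroness, which is exactly what the bootstrap delivers; the rest of the argument is arithmetic bookkeeping.
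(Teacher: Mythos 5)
Your proposal is correct in substance and takes essentially the same approach as the paper: first bootstrap, by querying $T$ at a slowly increasing precision, a certified positive lower bound $\ell$ on $\abs{r}$ (the loop terminates because $T_m \to r \neq 0$); then fix, once and for all, a precision offset so that on input $n$ one queries $T$ at precision $n + O(1)$ and inverts. The paper's construction differs only in how it guards against the queried approximation $T_m$ being too close to $0$: it returns $1/\max(\abs{T_{n+\ell'}},\, T_k - 2^{-k})$, where $T_k - 2^{-k}$ is the bootstrapped positive rational, so the denominator is clamped away from $0$ by construction and no side condition is needed.

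One small thing to tighten in your write-up: when you set $m$ to be the smallest integer greater than $n + 1 + 2\log(1/\ell)$, you have not in fact guaranteed the side condition $2^{-m} \leq \ell/2$ on which your estimate $\abs{T_m} \geq \ell/2$ (and hence $T_m \neq 0$ and the error bound $2 \cdot 2^{-m}/\ell^2$) rests. If $\ell > 1$ (so $\log(1/\ell) < 0$), then for $n = 0$ the formula can produce $m = 0$ even while $\ell/2 < 1 = 2^{-0}$; concretely $\ell \in (\sqrt{2}, 2)$ and $n = 0$ trigger this. The fix is standard bookkeeping — impose $m \geq \lceil 1 + \log(1/\ell)\rceil$ in addition, or use a $\max$-clamp on the denominator as the paper does — and does not affect the structure of your argument or the polynomial-time claim.
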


\begin{proof}
  Compute the smallest $k \geq 2$ such that
  $\frac{1}{2^k} < \abs{T_k}$; its existence follows from the fact
  that $\lim_{n \to \infty} T_n = r \neq 0$, whereas $\lim_{n \to \infty}
  \frac{1}{2^n} = 0$. Since $\abs{r - T_k} \leq \frac{1}{2^k}$, we have 
  that and $T_k$ and $r$ have the same sign, and $0 < \abs{T_k}-\frac{1}{2^k} \leq \abs{r}$. 
  Let $T_k = \frac{p}{q}$, where $p \in \Z \setminus \{0\}$ and $q \geq 1$, 
  and let ${\ell \coloneqq 2(k + \ceil{\log(q)})}$.
  We first give a construction of $T'$ that depends on the~sign~of~$T_k$.

  \begin{description}
    \item[case: $T_k > 0$] 
      We define $T'$ as the Turing machine that on input $n$ returns the
      rational $\frac{1}{\max(\abs{T_{n+\ell}},T_k - 2^{-k})}$. Clearly, if $T$
      runs in time polynomial in $n$, so does $T'$. We
      prove that $T'$ computes~$\frac{1}{r}$. First, observe that
      $\abs{r-T_{n+\ell}} \leq \frac{1}{2^{n+\ell}}$ and $r > 0$ imply
      $\abs{r-\abs{T_{n+\ell}}} \leq \frac{1}{2^{n+\ell}}$. Then, because $0 <
      T_k - 2^{-k} \leq r$, we have $\abs{r-\max(\abs{T_{n+\ell}},T_{k}-2^{-k})}
      \leq \frac{1}{2^{n+\ell}}$.

      For
      every $n \in \N$,
      \begin{align*}
        \abs{\frac{1}{r} - T_n'}
        &= \abs{\frac{r - \max(\abs{T_{n+\ell}},T_{k}-2^{-k})}{r \cdot \max(\abs{T_{n+\ell}},T_{k}-2^{-k})}}
        \leq \frac{1}{2^{n+\ell}} 
        \cdot \frac{1}{r \cdot \max(\abs{T_{n+\ell}},T_{k}-2^{-k})}\\
        &\leq \frac{1}{2^{n+\ell} \cdot (T_k-2^{-k})^2} 
        \leq \frac{1}{2^{n+\ell+2\log(T_k - 2^{-k})}}
        &\hspace{-10cm}\text{since } 0 < T_k - 2^{-k} \leq r
      \end{align*}
      To conclude the proof it suffices to show $\ell + 2\log(T_k-2^{-k}) \geq 0$:
      \begin{align*}
        & \ell + 2\log(T_k-2^{-k})\\
        ={}& \ell + 2\log((2^{k}T_k-1)2^{-k})
        = \ell + 2\log\Big(\Big(\frac{2^{k}p-q}{q}\Big)2^{-k}\Big)\\
        ={}& \ell + 2\log(2^{k}p-q) -2\log(q) -2k\\
        \geq{}& \ell -2\log(q) -2k 
        & \text{since $2^kp-q$ is an integer,}\\
        && \hspace{-1.8cm}\text{from $\frac{1}{2^k} < T_k$ we get $\log(2^{k}p-q) \geq 0$}\\
        ={}& 2(k+\ceil{\log(q)}-\log(q)-k)\geq{} 0.
        &\text{by def.~of~$\ell$}\\
      \end{align*}
    \item[case: $T_k < 0$] 
      Since $\abs{r}$ is computed by the machine that on input $n$ returns $\abs{T_n}$, 
      by following the previous case of the proof
      we conclude that $\frac{1}{\abs{r}}$ 
      is computed by the Turing machine that on input $n$
      returns the positive rational $\frac{1}{\max(\abs{T_{n+\ell}},\abs{T_k} - 2^{-k})}$.
      Then, the Turing machine that on input $n$
      returns the negative rational $\frac{-1}{\max(\abs{T_{n+\ell}},\abs{T_k} - 2^{-k})}$
      computes $\frac{1}{r}$.
  \end{description}
  Putting the two cases together we conclude that $\frac{1}{r}$ 
  is computed by the Turing machine 
  that on input $n$ returns the non-zero rational number $\frac{s}{\max(\abs{T_{n+\ell}},\abs{T_k} - 2^{-k})}$, 
  where $s = +1$ if $T_k > 0$, and otherwise $s = -1$.
\end{proof}

A real number $\cn$ is \emph{algebraic} if it is a root of some univariate
non-zero integer polynomial. Otherwise, $\cn$ is \emph{transcendental}. We often
denote algebraic numbers by $\alg,\beta,\eta,\dots$\,. Throughout the paper, we
consider the following canonical representation: an algebraic number $\alg$ is
represented by a triple $(q,\ell,u)$ where $q$ is a non-zero integer polynomial
and $\ell,u$ are (representations of) rational numbers such that $\alg$ is the
only root of $q$ belonging to~$[\ell,u]$. 

\paragraph{The theory~$\exists\R(\ipow{\cn})$.}
Let $\cn > 0$ be a computable real number. We consider the structure $(\R; 0, 1,
\cn, +, \cdot, \ipow{\cn}, <,=)$ extending the signature of the FO theory of the
reals with the constant $\cn$ and the unary \emph{integer power} predicate
$\ipow{\cn}$ interpreted as $\{ \cn^{i} : i \in \Z \}$. Formulae from the
existential theory of this structure, denoted $\exists \R(\ipow{\cn})$, are
built from the grammar
\begin{align*} 
  \phi,\psi &\,\Coloneqq\,  p(\cn,\vec x) \sim 0  \,\mid\, \ipow{\cn}(x) \,\mid\, \top \,\mid\, \bot \,\mid\, \phi \lor \psi \,\mid\, \phi \land \psi \,\mid\, \exists x \, \phi \,,
\end{align*}
where $\sim$ belongs to $\{<,=\}$,  the argument $x$ of the predicate
$\ipow{\cn}(x)$ is a variable, and $p$ is an integer polynomial involving $\cn$
and variables $\vec{x}$. For convenience of notation, $\cn$ is in this context
seen as a variable of the polynomial $p$, so that we can rely on the previously
defined notions of height, degree and bit size. We remark that, then, $\height(p)$
is independent of $\cn$ whereas $\deg(p)$ depends on the integers occurring as
powers of $\cn$. The bit size of a formula $\phi$, denoted as $\size(\phi)$, is
the number of bits required to write down $\phi$ (where~$\cn$ is stored
symbolically, using a constant number of~symbols). Similarly, we write
$\deg(\phi)$ and $\height(\phi)$ for the maximum degree and height of
polynomials occurring in $\phi$, respectively.

The semantics of formulae from $\exists \R(\ipow{\cn})$ is standard; 
it is the one of the FO theory of the reals, plus a rule stating
that $\ipow{\cn}(x)$ is true whenever $x \in \R$ belongs to the set $\ipow{\cn}$.
The grammar above features disjunctions~($\lor$), conjunctions ($\land$), true
($\top$) and false~($\bot$), but it does not feature negation ($\lnot$) on top
of atomic formulae. 
This restriction is w.l.o.g.: $\lnot \ipow{\cn}(x)$ is
equivalent to the formula $x \leq 0 \lor \exists y : \ipow{\cn}(y) \land y < x
\land x < \cn \cdot y$ stating that $x$ is either non-positive or strictly
between two successive integer powers of~$\cn$, whereas $\lnot (p(\cn,\vec x) <
0)$ and $\lnot (p(\cn,\vec x) = 0)$ are equivalent to $p(\cn, \vec x) = 0 \lor
-p(\cn,\vec x) < 0$, and $p(\cn,\vec x) < 0 \lor -p(\cn,\vec x) < 0$,
respectively. We still sometimes write negations in formulae, but these
occurrences should be seen as shortcuts. The grammar also avoids polynomials in
the scope of~$\ipow{\cn}(\cdot)$, since
$\ipow{\cn}(p(\cn,\vec{x}))$ is equivalent to $\exists y : y = p(\cn,\vec{x})
\land \ipow{\cn}(y)$. We write $\phi \models \psi$ whenever $\phi$
\emph{entails} $\psi$.

For simplicity of exposition, we sometimes consider 
formulae from the full first-order theory $\R(\ipow{\cn})$, rather than just
$\exists\R(\ipow{\cn})$. The grammar of $\R(\ipow{\cn})$ is obtained from 
that of $\exists\R(\ipow{\cn})$ by allowing arbitrary negations. 
This is convenient, for instance, when dealing with implications 
between formulae of $\exists\R(\ipow{\cn})$: 
given two such formulae $\phi$ and $\psi$, 
the formula $\phi \implies \psi$ belongs to $\R(\ipow{\cn})$ 
(and, after a simple rewriting, to $\exists\R(\ipow{\cn})$ if $\phi$ is quantifier-free). 
We also use $\iff$ to denote the double implication.

\paragraph{Refined representation for fixed algebraic numbers}Since, in our case, we
are fixing the base~$\cn$ of $\exists\R(\ipow{\cn})$, sometimes it is convenient to
improve how \emph{fixed} algebraic numbers (i.e.,~those that
do not depend on the input of our procedures) are represented. 
In these cases, we impose the following
restriction on $\ell$ and~$u$: either $\ell = u$, or $\alpha \in
(\ell,u)$ and $(\ell,u) \cap \Z = \emptyset$ (note: this is in addition to the
property that $\alg$ is the only root of $q$ in $[\ell,u]$). This restriction is
without loss of generality. Indeed, given a triple $(q,\ell,u)$ not satisfying
it, we can apply dichotomy search to refine the interval $[\ell,u]$ to an
interval $[\ell',u']$ such that $u'-\ell' < 1$. This refinement is done by tests
of the form $\exists x : q(x) = 0 \land \ell < x \leq \frac{u-\ell}{2}$, which
can be performed (in fact, in polynomial time) by, e.g., \cite[Theorem 1.3.1]{BasuPR96},
whose statement we recall in \Cref{theorem:basu}.
After computing $[\ell',u']$, we reason as follows: 
\begin{itemize}
  \item if $[\ell',u']$ does not contain an integer, $(q,\ell',u')$ is the
  required representation of $\alpha$.
  \item if $[\ell',u']$ contains $k \in \Z$ and $q(k) = 0$, then $(q,k,k)$ the
  required representation of~$\alpha$.
  \item if $[\ell',u']$ contains $k \in \Z$ and $q(k) \neq 0$, then one among
  $(q,\ell,k)$ and $(q,k,u)$ is the required representation of~$\alpha$. It then
  suffices to check where $\alpha$ lies, which can be done by testing $\exists x
  : q(x) = 0 \land \ell < x \leq k$, again with, e.g., the algorithm
  in~\cite[Theorem 1.3.1]{BasuPR96}.
\end{itemize}
Once more, we stress the fact that this representation is only used for algebraic numbers that are \emph{fixed}, and so the above refinement of $\ell$ and $u$ takes constant time.


\section{An algorithm for deciding \texorpdfstring{$\exists\R(\ipow{\cn})$}{the existential theory}}
\label{sec:the-algorithm}

\emph{Fix a computable number $\cn > 0$ that is either transcendental or has a
polynomial root barrier}. In this section, we discuss our procedure for deciding
the satisfiability of formulae in~$\exists\R(\ipow{\cn})$. For simplicity, we
assume for now $\cn > 1$. The general case of $\cn > 0$ is handled
in~\Cref{subsection:small-bases}.

The pseudocode of the procedure is given in~\Cref{algo:main-procedure}. To keep
it as simple as possible, we use nondeterminism in line~\ref{algo:line8} instead
of implementing, e.g., a routine backtracking algorithm. The procedure assumes
the input formula~$\phi(x_1,\dots,x_n)$ to be quantifier-free (this is without
loss of generality, since~$\exists\R(\ipow{\cn})$ is an existential theory), 
and it is split into three steps, which we discuss in the forthcoming three
subsections.

\begin{algorithm}[t]
  \caption{A procedure deciding the satisfiability problem for $\exists\R(\ipow{\cn})$.}\label{algo:main-procedure}
  \begin{algorithmic}[1]
    \Fixed $\cn > 1$ computable number that is transcendental or has a polynomial root barrier.
    \Require $\phi(x_1,\dots,x_n)$ : quantifier-free formula from
    $\exists\R(\ipow{\cn})$.
    \Ensure True $(\top)$ if $\phi$ is satisfiable, and otherwise False
    $(\bot)$.
    \medskip
    \For{$i \in [1..n]$}
    \label{algo:line1}
    \Let $u_i$ and $v_i$ be two fresh variables
    \label{algo:line2}
    \EndLet
    \State update $\phi$: replace every occurrence of $\ipow{\cn}(x_i)$ with $v_i=1$
    \label{algo:line3}
    \State update $\phi$: replace every occurrence of $x_i$ with $u_i\cdot v_i$
    \label{algo:line4}
    \State $\phi \gets \phi \land (v_i = 0 \lor 1 \leq \abs{v_i} < \cn)$
    \label{algo:line5}
    \EndFor
    \State $\psi(u_1,\dots,u_n)\gets{}$ \textsc{RealQE}$(\,\exists v_1 \dots
      \exists v_n : \phi\,)$
    \Comment{eliminate $v_1,\dots,v_n$ (see~\Cref{theorem:basu})}
    \label{algo:line6}
    \For{$i \in [1..n]$}
    \label{algo:line7}
    \Comment{$g_i$ below is encoded in unary}
    \State \myguess $g_i \gets{}$an element of $P_{\psi}$  
    \Comment{$P_{\psi} \subseteq \Z$ is the set from in~\Cref{theorem:small-model-property}}
    \label{algo:line8}
    \EndFor
    \State \myreturn evaluate whether the assignment $(u_1 = \cn^{g_1}, \dots, u_n = \cn^{g_n})$ is a solution to $\psi$
    \label{algo:line9}
  \end{algorithmic}
\end{algorithm}

\subsection{Step I (lines~\ref{algo:line1}--\ref{algo:line6}): reducing the
variables to integer powers of
\texorpdfstring{$\xi$}{the base}}\label{subsection:reduction-to-substructure}

The first step reduces the problem of finding a solution over $\R$ to the
problem of finding a solution over $\ipow{\cn}$. Below, we denote by
$\exists\ipow{\cn}$ the existential theory of the structure $(\ipow{\cn}; 0, 1,
\cn, +, \cdot, <,=)$. Formulae from this theory are built from the grammar of
$\exists\R(\ipow{\cn})$, except they do not feature predicates $\ipow{\cn}(x)$,
as they are now trivially true.

For reducing $\exists\R(\ipow{\cn})$ to $\exists\ipow{\cn}$, we observe that
every $x \in \R$ can be factored as $u \cdot v$ where $u$ belongs to
$\ipow{\cn}$ and $v$ is either $0$ (if $x = 0$) or it belongs, in absolute
value, to the interval~$[1,\cn)$. In the case of $x \neq 0$, this factorisation
is unique, and $u$ corresponds to the largest element of $\ipow{\cn}$ that is
less or equal to the absolute value of~$x$, i.e., $u \leq \abs{x} < \cn \cdot
u$. The procedure uses this fact to replace every occurrence of a variable $x_i$
in the input formula $\phi(x_1,\dots,x_n)$ with two fresh variables $u_i$ and
$v_i$ (see the \textbf{for} loop of line~\ref{algo:line1}), where $v_i$ is set
to satisfy either $v_i = 0$ or $1 \leq \abs{v_i} < \cn$ (the latter is short for
the formula $(1 \leq v_i < \cn) \lor (-\cn < v_i \leq -1)$), and $u_i$ is
(implicitly) assumed to belong to $\ipow{\cn}$. This allows to replace all
occurrences of the predicate $\ipow{\cn}(x_i)$ with $v_i = 1$
(line~\ref{algo:line3}). We obtain in this way an equivalent formula from the
existential theory of the reals, but where the variables $u_1,\dots,u_n$ are
assumed to range over~$\ipow{\cn}$.

After the updates performed by the \textbf{for} loop, the procedure eliminates
the variables $v_1,\dots,v_n$ by appealing to a quantifier elimination procedure
for the FO theory of the reals, named \textsc{RealQE} in the pseudocode. We
remind the reader that a quantifier elimination procedure is an algorithm that, 
from an input (quantified) formula, produces an \emph{equivalent} quantifier-free
formula. Since such a procedure preserves formula equivalence, we can use it to
eliminate $v_1,\dots,v_n$ even if $u_1,\dots,u_n$ are assumed to range over
$\ipow{\cn}$.
The constant $\cn$ appearing in the formula is treated as an
additional free variable by \textsc{RealQE}. The output formula
$\psi(u_1,\dots,u_n)$ belongs to $\exists\ipow{\cn}$, as required. This
concludes the first step of the~algorithm.

To perform the quantifier elimination step, we rely on the
quantifier elimination procedure for the (full) FO theory of the reals developed
by Basu, Pollack and Roy~\cite{BasuPR96}. This procedure achieves the
theoretically best-known bounds for the output formula, both for 
arbitrary
quantifier alternation and for the existential
fragment (i.e., setting~$\omega = 1$ below).

\begin{restatable}[{\cite[Theorem 1.3.1]{BasuPR96}}]{theorem}{TheoremBasu}
  \label{theorem:basu}
  There is an algorithm with the specification:
  \begin{center}
    \begin{minipage}{0.95\linewidth}
      \begin{description}
        \item[\textup{Input}] A formula~$\phi(\vec y)$ from the first-order
          theory of $(\R; 0, 1, +, \cdot, <, =)$.
        \item[\textup{Output}] A quantifier-free formula $\gamma(\vec y) =
          \bigvee_{i = 1}^I \bigwedge_{j = 1}^J p_{i,j}(\vec y) \sim_{i,j} 0$
          equivalent to $\phi$,\\ 
          where every $\sim_{i,j}$ is from $\{<,=\}$.
      \end{description}
    \end{minipage}
  \end{center}
  Suppose the input formula~$\phi$ to be of the form $Q_1 \vec x_1 \in \R^{n_1}
  \dots Q_\omega \vec x_\omega \in \R^{n_\omega} : \psi(\vec y, \vec x_1, \dots,
  \vec x_\omega)$, where $\vec y = (y_1,\dots,y_k)$, every $Q_i$ is $\exists$ or
  $\forall$, and $\psi$ is a quantifier-free formula with $m$ atomic formulae
  $g_i \sim 0$ satisfying $\deg(g_i) \leq d$ and $\height(g_i) \leq h$. Then,
  the output formula~$\gamma$ satisfies
  \begin{center}
      $\begin{aligned} I &\leq (m \cdot d + 1 )^{(k+1) \Pi_{i=1}^\omega
        O(n_i)}\,, &\hspace{1cm} \deg(p_{i,j}) &\leq d^{\Pi_{i=1}^\omega
        O(n_i)}\,, \\
        J &\leq (m \cdot d + 1)^{\Pi_{i=1}^\omega O(n_i)}\,, & \height(p_{i,j})
        &\leq (h+1)^{d^{(k+1) \Pi_{i=1}^\omega O(n_i)}}\,, \end{aligned}$
        \vspace{-3pt}
  \end{center}
  and the algorithm runs in time $\size(\phi)^{O(1)}(m \cdot d
  +1)^{(k+1)\Pi_{i=1}^\omega O(n_i)}$.
\end{restatable}

\subsection{Step II (lines~\ref{algo:line7} and~\ref{algo:line8}):
solving \texorpdfstring{$\exists\ipow{\cn}$}{the existential theory over integer powers of the base}}\label{subsection:algorithm-step-two}

The second step of the procedure searches for a solution to the quantifier-free
formula $\psi$ in line~\ref{algo:line6}. For every variable $u_i$ in $\psi$, the
algorithm guesses an integer $g_i$, encoded in unary, from a finite set $P_{\psi}$.
Implicitly, this guess is setting $u_i = \cn^{g_i}$. The next proposition shows
that $P_{\psi}$ can be computed from $\psi$ and the base~$\cn$, i.e.,
$\exists\ipow{\cn}$ has a \emph{small witness property}.%

\begin{restatable}{proposition}{TheoremSmallModelProperty}
  \label{theorem:small-model-property}
  Fix $\cn > 1$. There is an algorithm with the following specification: 
  \begin{center}
    \begin{minipage}{0.95\linewidth}
      \begin{tabular}{rp{0.8\linewidth}}
        \textbf{Input:}& A quantifier-free formula $\psi(u_1,\dots,u_n)$ from $\exists \ipow{\cn}$.\\ 
        \textbf{Output:}& A finite set $P_{\psi} \subseteq \Z$ such that $\psi$ is satisfiable if and only if
        $\psi$ has a solution in the set $\{(\cn^{j_1},\dots,\cn^{j_n}) : j_1, \dots, j_n \in P_{\psi} \}$.
      \end{tabular}
    \end{minipage}
  \end{center}
  To be effective, the algorithm requires knowing either that $\cn$ is a
  computable transcendental number, or two integers  $c,k \in \N_{\geq 1}$ for
  which~$\sigma(d,h) \coloneqq c \cdot (d+\ceil{\ln(h)})^k$ is a root barrier of
  $\cn$. In the latter case, the elements in $P_{\psi}$ are bounded in absolute
  value by $(2^{c}\ceil{\ln(H)})^{D^{2^5n^2} k^{D^{8n}}}$, where $H \coloneqq
  \max(8,\height(\psi))$ and $D \coloneqq \deg(\psi)+2$.
\end{restatable}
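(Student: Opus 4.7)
The plan is to bound, via a dominant-monomial analysis, the size of a witness $\vec j \in \Z^n$ for $\psi(\cn^{j_1}, \dots, \cn^{j_n})$, using the root barrier (or transcendence of $\cn$) to control when dominant-term approximations are tight. I would first put $\psi$ into disjunctive normal form and handle each conjunction $\gamma = \bigwedge_i p_i(\cn, \vec u) \sim_i 0$ separately. The substitution $u_i \mapsto \cn^{j_i}$ turns each atomic polynomial into
\[
  p_i(\cn, \cn^{\vec j}) \;=\; \sum_{m} c_{i,m}\, \cn^{e_{i,m}(\vec j)},
\]
where every $e_{i,m}$ is an affine function $\Z^n \to \Z$ with integer coefficients dictated by the multi-degrees of the monomials of $p_i$. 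The sign of this expression, as $\vec j$ varies, is the quantity we must analyse.

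Next I would consider the hyperplane arrangement $\mathcal{A} \subseteq \mathbb{R}^n$ cut out by all equalities $e_{i,m} = e_{i,m'}$. In every open cell the order of the exponent functions is constant, so for each atomic formula there is a well-defined monomial of strictly largest exponent, with nonzero integer coefficient; on a lower-dimensional face several monomials tie and their coefficients combine into an effective leading coefficient. Given a cell and atomic formula, if the leading coefficient $C$ is nonzero and the exponent gap to the next-largest exponent is $\delta$, the crude bound $\abs{p_i(\cn, \cn^{\vec j})} \geq \abs{C}\,\cn^{e_{\mathrm{lead}}} - M H\, \cn^{e_{\mathrm{lead}} - \delta}$ forces the sign of $p_i$ to agree with $\mathrm{sign}(C)$ as soon as $\delta \geq 1 + \log_{\cn}(M H)$, where $M$ is the number of monomials and $H$ a height bound. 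This supplies a \emph{stability radius} around each boundary of $\mathcal{A}$, outside of which sign patterns are determined combinatorially by the cell and can be enumerated.

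The delicate case is when the combined leading coefficient on a face cancels, which can certainly happen if $\cn$ satisfies algebraic relations. Here I would peel off the leading layer and re-inspect the next exponent level. For algebraic $\cn$, the root barrier gives an unconditional lower bound of the shape $\cn^{v_{\min}} \cdot \exp(-\sigma(D, H))$ on the nonzero values taken by the univariate polynomial in $\cn$ produced by grouping monomials by exponent, with $D$ and $H$ controllable in the formula size; this is what keeps the stability estimates effective. For transcendental $\cn$, a nonzero integer polynomial cannot vanish at $\cn$, so the leading group cancels only if it is identically zero as a polynomial in $\cn$; that yields an integer linear condition on $\vec j$ cutting out a lower-dimensional stratum of $\mathcal{A}$, in which the analysis is repeated inductively.

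Unwinding this recursion, each layer contributes a factor to $P_\psi$: cell enumeration of $\mathcal{A}$ is singly exponential in $n$, the stability radii inside each cell grow polynomially in $c$, $k$, and $\ceil{\ln H}$, and the descent into vanishing strata can iterate up to $n$ times, each time raising the previous bound through the root barrier. Tracking these contributions carefully should yield the quoted bound $(2^c\, \ceil{\ln(H)})^{D^{2^5 n^2}\, k^{D^{8n}}}$. The main obstacle I anticipate is precisely this bookkeeping: at each recursive level the polynomials produced by grouping must retain degree and height controllable in the original formula size, so that the root barrier keeps giving useful lower bounds, and the iteration must be organised so that the final blow-up stays triply exponential rather than non-elementary.
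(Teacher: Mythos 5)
Your proposal takes a genuinely different route from the paper. The paper eliminates the variables $u_1,\dots,u_n$ one at a time via a quantifier-elimination-style procedure (Lemmas~\ref{lemma:lambda-close-to-variable-body}, \ref{lemma:last-lambda-lemma}, \ref{lemma:relativise-quantifiers} and~\ref{lemma:remove-u}), each step characterising which integer powers of $\cn$ lie ``close'' to roots of univariate intermediate polynomials, and then backpropagating the resulting substitutions to bound the witness. You instead work directly in $\Z^n$: substitute $u_i = \cn^{j_i}$, set up the affine exponent arrangement $\mathcal{A}$, and argue that far from its walls the dominant-monomial sign is determined, recursing into faces where leading coefficients tie or cancel. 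Both ultimately rest on a dominant-term analysis (the paper does exactly this inside Lemma~\ref{lemma:lambda-close-to-variable-body}), but yours is global and multivariate while the paper's is iterated and univariate. A payoff of your route is that it sidesteps the root-extraction step of Lemma~\ref{lemma:remove-u} entirely, since you never need to split a product like $u^j = \cn^k \vec y^{\vec\ell}$; a payoff of the paper's route is that each elimination step produces a \emph{univariate} polynomial on which the root barrier can be invoked cleanly, making the bookkeeping tractable.

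That said, as written the proposal has genuine gaps. First, even granting that signs stabilise once $\vec j$ is a distance $\geq R$ from every wall, you need to show that an integer point at that distance can be found within the claimed bound on $|\vec j|$. Cells of $\mathcal{A}$ are cones, possibly with very small inner angle: with $O(mM^2)$ hyperplanes whose normals have entries of magnitude up to $\deg(\psi)$, the distance one must travel along the recession cone before clearing all walls by $R$ can be much larger than $R$ itself, and this geometric blow-up has to be tracked explicitly; the sketch never does so. Second, the recursion into lower-dimensional strata is the crux, and you flag it yourself: when a leading group's combined coefficient vanishes (identically for transcendental $\cn$, or at $\cn$ for algebraic $\cn$), the ``next exponent level'' polynomial has a height that can grow by accumulating the tied coefficients and whose wall structure changes, so one needs an invariant showing that degree and height stay within reach of the root barrier across $n$ levels of descent. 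The paper obtains this via the explicit backpropagation in \Cref{subsection:quantifier-relativisation} (\Cref{claim:ugly-bounds,claim:less-ugly-bounds,claim:bounds-on-dih}); your sketch replaces that with the sentence ``tracking these contributions carefully should yield the quoted bound,'' which is precisely where the proof lives. Third, the stability argument is framed in terms of making inequalities \emph{robust}, but you also need to treat atoms $p=0$, which in your framework are satisfiable only on walls or on sub-faces where entire exponent groups cancel; this is compatible with your recursion but is never argued, and for algebraic $\cn$ one must in addition decide algorithmically whether a tied coefficient polynomial vanishes at $\cn$, which should be stated. Until these three points are filled in, the proposal is a plausible program rather than a proof of the proposition and its specific numerical bound.
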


\noindent
We defer the proof of~\Cref{theorem:small-model-property} 
(the main technical contribution of the paper)
to~\Cref{sec:solving-substructure}. Note that the bound on~$P_{\psi}$ given in
the final statement of~\Cref{theorem:small-model-property} is in general triply
exponential in $\size(\psi)$, but it becomes doubly exponential if the root
barrier $\sigma$ is such that $k = 1$. The two statements
in~\Cref{theorem:general-result-root-barrier} stem from this distinction.

\subsection{Step III (line~\ref{algo:line9}): polynomial sign
evaluation}\label{subsection:polynomial-sign-evaluation} 
The last step of the procedure checks if the assignment $u_1 =
\cn^{g_1},\dots,u_n = \cn^{g_n}$ is a solution to $\psi(u_1,\dots,u_n)$. Observe
that $\psi(\cn^{g_1},\dots,\cn^{g_n})$ is a Boolean combination of polynomial
(in)equalities $p(\cn) \sim 0$, where $\cn$ may occur with negative powers (as
some $g_i$ may be negative). This is unproblematic, as one can make all powers
non-negative by rewriting each (in)equality $p(\cn) \sim 0$ as $\cn^{-d} \cdot p
\sim 0$, where $d$ is the smallest negative integer occurring as a power of
$\cn$ in~$p$ (or $0$ if such an integer does not exist).
After this small update, line~\ref{algo:line9} boils down to determining the
sign that each polynomial in the formula has when evaluated at $\cn$. This
enables us to simplify all inequalities to either $\top$ or $\bot$, to then
return $\top$ or $\bot$ depending on the Boolean structure of~$\psi$. Let us
thus focus on the required sign evaluation problem, which we denote
by~$\SIGN_{\cn}$. Its specification is the following:

\begin{center}
  \begin{minipage}{0.95\linewidth}
      \begin{tabular}{rp{0.8\linewidth}}
      \textbf{Input:}& A univariate integer polynomial $p(x)$.\\
      \textbf{Output:}& The symbol $\sim$ from $\{<,>,=\}$
      such that $p(\cn) \sim 0$.
      \end{tabular}
  \end{minipage}
\end{center}

\paragraph*{Solving {\rm$\SIGN_{\cn}$} when $\cn \in \R$ is transcendental.}
It is a standard fact that~$\SIGN_{\cn}$ becomes solvable whenever~$\cn$ is any
computable transcendental number. Indeed, in this case $p(\cn)$ must be
different from~$0$, and one can rely on the fast-convergence sequence of
rational numbers~$T_0,T_1,\dots$ to find $n \in \N$ such that $|p(\cn) -
p(T_{n})|$ is guaranteed to be less than $|p(T_{n})|$. The sign of $p(\cn)$ then
agrees with the sign of~$p(T_{n})$, and the latter can be easily computed. In
general, the asymptotic running time of this algorithm cannot be bounded.

\paragraph*{Solving {\rm$\SIGN_{\cn}$} when $\cn \in \R$ has a (polynomial) root barrier.}
A similar algorithm as the one given for transcendental numbers can be defined
for numbers with a polynomial root barrier; and in this case its running time
can be properly analysed. The pseudocode of such a procedure is given
in~\Cref{algo:sign-evaluation}, and it should be self-explanatory. We stress
that running this algorithm requires access to the root barrier $\sigma$ and the
Turing machine $T$.

To prove that \Cref{algo:sign-evaluation} respects its specification 
we need the following lemma.

\begin{algorithm}[t]
  \caption{Algorithm for solving $\SIGN_{\cn}$ when $\cn$ has a root barrier.}
  \label{algo:sign-evaluation}
  \begin{algorithmic}[1]
    \Fixed A number $\cn \in \R$ computed by a Turing machine $T$ and having a root barrier $\sigma$.
    \Require A univariate integer polynomial $p(x)$ of degree $d$ and height $h$.
    \Ensure The symbol $\sim$ from $\{<,>,=\}$ such that $p(\cn) \sim 0$.
    \medskip
    \State $n \gets 1 + 2 \sigma(d,h) + 3 d\ceil{\log(h+4)}$ 
      \label{algo:sign-evaluation:bound-on-n}
    \If{$\abs{p(T_n)} \leq 2^{-2 \sigma(d,h)-1}$ and $\abs{T_n} < h+2$}
      \textbf{return} the symbol $=$
      \label{algo:sign-evaluation:small-pTn}
    \Else \
      \textbf{return} the sign of $p(T_n)$
      \label{algo:sign-evaluation:large-pTn}
    \EndIf 
  \end{algorithmic}
\end{algorithm}%


\begin{lemma}\label{lemma:approx-univ-polynomial}
  Let $p(x)$ be an integer polynomial, 
and let ${r \in \R}$ with $\abs{r} \leq K$ for some $K \geq 1$.
Consider $L,M \in \N$ satisfying 
$M\geq L + \log(\height(p)+1) + 2 \deg(p) \cdot \log(K+1)$.
For every $r^* \in \R$, if $\abs{r-r^*} \leq 2^{-M}$, 
then $\abs{p(r)-p(r^*)} \leq 2^{-L}$.
\end{lemma}
\begin{proof}
  Let $p(x) \coloneqq \sum_{j=0}^d a_i \cdot x^j$, and
  suppose $\abs{r-r^*} \leq 2^{-M}$.
  If $d = 0$, then $p$ is a constant polynomial and $\abs{p(r)-p(r^*)} = 0$, 
  which proves the lemma. 
  Below, we assume $d \geq 1$.

  To show that $\abs{p(r)-p(r^*)} \leq 2^{-L}$, let us start by bounding the maximum of the absolute value
  that the first derivative $p'(x) = \sum_{j=1}^d a_j \cdot j \cdot x^{j-1}$ of
  $p$ takes in the interval $I \coloneqq [-(K+1),K+1]$. For every $x \in \R$, $\abs{p'(x)} \leq
  g(x) \coloneqq \sum_{j=1}^d \abs{a_j \cdot j \cdot x^{j-1}}$. Since the
  function $g$ is monotonous over $\R_{\geq 0}$, and $g(y) = g(-y)$ for every $y
  \in \R$, we conclude that for every $x \in I$, $\abs{p'(x)} \leq g(K+1) =
  \sum_{j=1}^d \abs{a_j} \cdot j \cdot (K+1)^{j-1} \leq d^2 \height(p) (K+1)^{d-1}$. 
  
  From $\abs{r} \leq K$ and $\abs{r-r^*} \leq 2^{-M}$, where $M \geq 0$, 
  we have that both $r$ and $r^*$ belong to~$I$.
  This implies $\frac{\abs{p(r) -
  p(r^*)}}{\abs{r-r^*}} \leq \max\{\abs{p'(x)} : x \in I \} \leq
  d^2 \height(p) (K+1)^{d-1}$. So,
  {\allowdisplaybreaks
  \begin{align*}
    & \abs{p(r) - p(r^*)}\\
    \leq{}& d^2 \height(p) (K+1)^{d-1} \abs{r-r^*}\\
    \leq{}& 2^{2 \log(d)+\log(\height(p))+(d-1)\log(K+1)-(L + \log(\height(p)+1) + 2 d \cdot \log(K+1))}
    & \text{bound on $M$}\\
    \leq{}& 2^{2 \log(d)-L-(d+1) \cdot \log(K+1)} \leq 2^{-L}
    &\hspace{-4.5cm}\text{since $2 \log(d) \leq d$ and $\log(K+1) \geq 1$}&\qedhere
  \end{align*}}
\end{proof}

\begin{restatable}{lemma}{LemmaSignPolyRootBarrier}
  \label{lemma:sign-poly-root-barrier}
  \Cref{algo:sign-evaluation} respects its specification.
\end{restatable}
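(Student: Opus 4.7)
The plan is to show that with the chosen $n = 1 + 2\sigma(d,h) + 3d\lceil \log(h+4)\rceil$, the rational approximation $T_n$ is close enough to $\cn$ that the error $|p(T_n) - p(\cn)|$ is bounded by $2^{-2\sigma(d,h)-1}$ whenever $|T_n| < h+2$. This error bound, paired with the root barrier, will let us reliably detect whether $p(\cn) = 0$ and, when it is not, agree on the sign with $p(T_n)$. Constant polynomials are trivially handled by the algorithm: for $p = 0$, $|p(T_n)| = 0$ passes the first test; for a non-zero constant $p = a_0$, $|p(T_n)| = |a_0| \geq 1 > 2^{-2\sigma(d,h)-1}$ fails the first condition, so the algorithm returns $\text{sign}(a_0)$. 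From here on I assume $p$ is non-constant, so the root barrier applies.

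The core error estimate comes from $x^i - y^i = (x - y)(x^{i-1} + \dots + y^{i-1})$, giving $|p(T_n) - p(\cn)| \leq 2^{-n} \cdot d^2 h \cdot M^{d-1}$ with $M = \max(1, |T_n|, |\cn|)$. Assuming $|T_n| < h+2$ forces $M \leq h+3$ (since $|\cn| \leq |T_n| + 2^{-n}$), and the choice of $n$ then yields $|p(T_n) - p(\cn)| \leq 2^{-2\sigma(d,h)-1}$ after routine manipulation (essentially because $d^2 h (h+3)^{d-1} \leq (h+4)^{2d+1}$, which is easy to verify separately for $d \leq h+4$ and $d > h+4$). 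I then proceed by case analysis on whether $p(\cn) = 0$. If $p(\cn) = 0$, Cauchy's bound on polynomial roots gives $|\cn| \leq 1+h$ (the leading integer coefficient has absolute value at least $1$), hence $|T_n| \leq h+1+2^{-n} < h+2$; combining with the error bound yields $|p(T_n)| \leq 2^{-2\sigma(d,h)-1}$, so both conditions on line~\ref{algo:sign-evaluation:small-pTn} hold and the algorithm correctly returns $=$. Otherwise, the root barrier gives $|p(\cn)| \geq e^{-\sigma(d,h)} > 2^{-2\sigma(d,h)}$ (using $e < 4$, equivalently $e^{-\sigma} > 4^{-\sigma}$).

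Two subcases finish off the $p(\cn) \neq 0$ case. (i)~If $|T_n| < h+2$, the error bound gives $|p(T_n)| \geq |p(\cn)| - 2^{-2\sigma(d,h)-1} > 2^{-2\sigma(d,h)-1}$, so the first condition fails and the algorithm returns $\text{sign}(p(T_n))$; since the error $2^{-2\sigma(d,h)-1}$ is strictly smaller than $|p(T_n)|$, the signs agree. (ii)~If $|T_n| \geq h+2$, the first condition again fails. Here Cauchy's bound tells us that $p$ has no real roots in $\{x \in \R : |x| > h+1\}$, and since $|T_n - \cn| \leq 2^{-n} < 1$ both $T_n$ and $\cn$ sit in the same connected component of that set (they share a sign and both exceed $h+1$ in modulus), so continuity alone forces $\text{sign}(p(T_n)) = \text{sign}(p(\cn))$. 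The delicate step throughout is calibrating the constants so that twice $2^{-2\sigma(d,h)-1}$ still fits strictly inside the root barrier gap above $2^{-2\sigma(d,h)}$; the subcase $|T_n| \geq h+2$, which could have demanded a direct bound on $|p'|$ at large inputs, is sidestepped via Cauchy in a single line.
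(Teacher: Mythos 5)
Your proof is correct and follows essentially the same path as the paper's: you combine the same three ingredients (a polynomial-perturbation bound showing $\abs{p(\cn)-p(T_n)} \leq 2^{-2\sigma(d,h)-1}$ when $\abs{T_n} < h+2$, Cauchy's bound on root magnitudes for the regime $\abs{T_n} \geq h+2$, and the root-barrier gap $\abs{p(\cn)} > 2^{-2\sigma(d,h)}$), differing only cosmetically in that you branch first on whether $p(\cn)=0$ while the paper branches first on whether $\abs{T_n} \geq h+2$, and in that you derive the perturbation estimate from the factoring of $x^i - y^i$ rather than via the derivative bound of the paper's auxiliary Lemma~\ref{lemma:approx-univ-polynomial}.
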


\begin{proof}
  Let $p(x) = \sum_{j=0}^d a_j \cdot x^j$ be input integer polynomial, having
  degree $d = \deg(p) \geq 1$ and height $h = \height(p)$. Recall that, from the
  definition of root barrier, whenever $p(\cn) \neq 0$ we have $\abs{p(\cn)}
  \geq e^{-\sigma(d,h)} > 2^{-2 \sigma(d,h)}$, where the last inequality
  follows from $\sigma(d,h) \geq 0$. Following
  line~\ref{algo:sign-evaluation:bound-on-n}, define $n \coloneqq 1 +
  2 \sigma(d,h) + 3d\ceil{\log(h+4)}$. 
  Note that $n \geq 5$.

  Let us first assume that $\abs{T_n} \geq h + 2$. In this case, the algorithm
  returns the sign of $p(T_n)$ (line~\ref{algo:sign-evaluation:large-pTn}). We
  show that $p(T_n)$ and $p(\cn)$ have the same sign. Since $\abs{\cn-T_n}
  \leq 2^{-1}$, we have $\abs{\cn} > h+1$.  
  By a result of Cauchy~\cite[Chapter 8]{Rahman02}, $h+1$ is an upper bound to
  the absolute value of every root of $p$. This implies that there are no roots
  of $p$ in the interval $[\cn,T_n]$, so in particular $\cn$ and $T_n$ are
  not roots of $p$, and $p(\cn)$ and $p(T_n)$ have the same sign.
  
  Let us consider now the case $\abs{T_n} < h + 2$, and so $\abs{\cn} \leq K
  \coloneqq h + 3$. 
  From~\Cref{lemma:approx-univ-polynomial}, the definition of $n$ and the fact that $\abs{\cn-T_n}
  \leq 2^{-n}$, we conclude that
  $\abs{p(\cn) - p(T_n)} \leq 2^{-2 \sigma(d,h)-1}$. This implies that if
  $\abs{p(T_n)} \leq 2^{-2 \sigma(d,h)-1}$ then $p(\cn) = 0$, and
  otherwise $p(T_n)$ and $p(\cn)$ have the same sign;  
  which concludes the proof of the lemma (see
  lines~\ref{algo:sign-evaluation:small-pTn}
  and~\ref{algo:sign-evaluation:large-pTn}). Indeed,  
  \begin{itemize}
    \item If $p(\cn) = 0$ then $\abs{p(T_n)} \leq 2^{-2 \sigma(d,h)-1}$
    (from $\abs{p(\cn) - p(T_n)} \leq 2^{-2 \sigma(d,h)-1}$).
    \item If $p(\cn) \neq 0$, then:
    \begin{align*}
      \abs{p(T_n)} 
      & \geq \abs{p(\cn)} - \abs{p(\cn) - p(T_n)}
      & \text{from properties of the absolute value}\\
      & > 2^{-2 \sigma(d,h)} - 2^{-2 \sigma(d,h)-1}
      & \text{bounds on $\abs{p(\cn)}$ and $\abs{p(\cn) - p(T_n)}$}\\
      & = 2^{-2 \sigma(d,h)-1}.
    \end{align*}
    Moreover, $\abs{p(\cn) - p(T_n)} \leq 2^{-2 \sigma(d,h)-1}$ and
    $\abs{p(T_n)} > 2^{-2 \sigma(d,h)-1}$ imply that $p(\cn) > 0$ if and
    only if $p(T_n) > 0$.
    \qedhere
  \end{itemize}
\end{proof}

When $\sigma$ is a polynomial root barrier, the integer $n$ from
line~\ref{algo:sign-evaluation:bound-on-n} can be written in unary using
polynomially many digits with respect to~$\size(p)$. This yields the following
lemma.

\begin{restatable}{lemma}{LemmaRuntimeSignPolyRootBarrier}
  \label{lemma:runtime-sign-poly-root-barrier}
  Let $\cn \in \R$ be a number computed by a Turing machine $T$ and having a polynomial
  root barrier~$\sigma$. If $T$ runs in polynomial time, then so
  does~\Cref{algo:sign-evaluation}.
\end{restatable}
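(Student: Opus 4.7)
The plan is to walk line by line through Algorithm~\ref{algo:sign-evaluation} and verify that each step takes time polynomial in $\size(p)$, given the hypotheses on $\cn$, $T$, and $\sigma$. The key ingredient, already foreshadowed in the text preceding the lemma, is that the polynomial root barrier forces the integer $n$ computed on line~\ref{algo:sign-evaluation:bound-on-n} to be polynomially small.

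First I would bound $n$: since $\sigma(d,h) = c \cdot (d + \ceil{\ln h})^k$ is a polynomial expression in $d$ and $\ln h$ (with $c,k$ fixed constants), and since both $d$ and $\ceil{\log h}$ are $O(\size(p))$, the value $n = 1 + 2\sigma(d,h) + 3d\ceil{\log(h+4)}$ is polynomial in $\size(p)$. In particular, writing $n$ in unary takes polynomial space, which is critical since $T$ expects its input in unary. Next, the call $T_n$ runs in time polynomial in the unary input length by the hypothesis on $T$, and produces a rational of bit size polynomial in $n$, hence in $\size(p)$.

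Then I would evaluate $p(T_n)$ via Horner's scheme, which requires $O(d)$ additions and multiplications over rationals obtained by combining $T_n$ (polynomial bit size) with integer coefficients of $p$ (bit size $O(\log h)$). Finally, the tests $\abs{p(T_n)} \leq 2^{-2\sigma(d,h)-1}$ and $\abs{T_n} < h + 2$ reduce to comparisons of rationals whose representations have polynomial bit size: the threshold $2^{-2\sigma(d,h)-1}$ is a dyadic rational with polynomial-size representation precisely because $\sigma(d,h)$ is polynomial; and extracting the sign of $p(T_n)$ is similarly routine.

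The only step that requires care — and thus the main (albeit minor) obstacle — is confirming that the intermediate rationals during the Horner evaluation retain polynomial bit size. This is standard bookkeeping on the growth of numerators and denominators under rational arithmetic: each of the polynomially many multiplications and additions enlarges bit sizes only by a polynomial amount, so the overall bound goes through. Once this is verified, every line of Algorithm~\ref{algo:sign-evaluation} runs in polynomial time, and the lemma follows.
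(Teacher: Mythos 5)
Your proof is correct and follows essentially the same route as the paper's: both hinge on the observation that, because~$\sigma$ is a polynomial root barrier, the integer~$n$ written in unary has size polynomial in~$\size(p)$, so $T_n$ is a polynomial-bit-size rational computable in polynomial time, after which evaluating~$p$ and performing the comparisons is routine. The Horner-scheme and rational-bookkeeping remarks you add merely spell out what the paper's proof states more briefly.
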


\begin{proof}
  When encoded in unary, the number $n$ defined in
  line~\ref{algo:sign-evaluation:bound-on-n} has size polynomial in the size of
  the input polynomial $p$. Then, to compute $T_n$ only requires polynomial time
  in $\size(p)$. Observe that this implies $T_n = \frac{q}{d}$ for some integers
  $q$ and $d$ encoded in binary using polynomially many bits with respect to $\size(p)$. Evaluating a
  polynomial at such a rational point can be done in polynomial time in the size of
  the polynomial and of the bit size of the rational. 
  This means that also lines~\ref{algo:sign-evaluation:small-pTn}
  and~\ref{algo:sign-evaluation:large-pTn} run in polynomial time in $\size(p)$.
\end{proof}

\subsection{Handling small bases}\label{subsection:small-bases}

We now extend our algorithm so that it works assuming $\cn > 0$ instead of just
$\cn > 1$. Let $\cn$ be computable and either
transcendental or with a polynomial root barrier. First, observe that we can
call the procedure for $\SIGN_\cn$ on input $x-1$ in order to check if $\cn \in
(0,1)$, $\cn = 1$ or $\cn > 1$. 

If $\cn = 1$, we
replace in the input formula $\phi$ every occurrence of $\ipow{\cn}(x)$ with $x
= 1$, obtaining a formula from the existential theory of the reals, which we can
solve by~\Cref{theorem:basu}. If~$\cn > 1$, we
call~\Cref{algo:main-procedure}. Suppose then $\cn \in (0,1)$. In this case, we
replace every occurrence of $\ipow{\cn}(x)$ with
$\ipow{\big(\frac{1}{\cn}\big)}(x)$, and opportunely multiply by integer powers of
$\frac{1}{\cn}$ both sides of polynomials inequalities in order to eliminate the
constant $\cn$. In this way, we obtain from $\phi$ an equivalent formula
in~$\exists\R(\ipow{\big(\frac{1}{\cn}\big)})$. Since $\frac{1}{\cn} > 1$, we
can now call~\Cref{algo:main-procedure}; provided we first establish the
properties of $\frac{1}{\cn}$ required to run this algorithm. These properties
indeed hold:
\begin{enumerate}
  \item\label{small-bases:i2} If $\cn$ is transcendental, then so is
  $\frac{1}{\cn}$. This is because the algebraic numbers form a field.
  \item\label{small-bases:i3} If $\cn$ has a polynomial root barrier~$\sigma$,
  then $\sigma$ is also a root barrier of $\frac{1}{\cn}$. Indeed, consider an
  integer polynomial $p(x) = \sum_{i=0}^d a_i \cdot x^i$ with height $h$, and
  assume $p(\frac{1}{\cn}) \neq 0$. Since $\sigma$ is a root barrier of $\cn$,
  we have $\cn^d \cdot |{p(\frac{1}{\cn})}| = |{\sum_{i=0}^d a_i \cdot
  \cn^{d-i}}| \geq e^{-\sigma(h,d)}$, which in turns implies that
  $|{p(\frac{1}{\cn})}| \geq e^{-\sigma(h,d)} \cdot \cn^{-d} \geq
  e^{-\sigma(h,d)}$, where the last inequality uses~$\frac{1}{\cn} \geq 1$.
  \item\label{small-bases:i1} From a Turing machine $T$ computing $\cn$, we can
  construct a Turing machine $T'$ computing~$\frac{1}{\cn}$.
  \Cref{lemma:turing-machine-reciprocal} gives this construction, and shows that
  $T'$ runs in polynomial time if $T$ does. 
\end{enumerate}

\subsection{Correctness of Algorithm~\ref{algo:main-procedure} 
(Proof of \Cref{theorem:result-root-barrier}.\ref{theorem:result-root-barrier:point3})}
\label{subsection:correctness-algorithm}

Since~lines~\ref{algo:line1}--\ref{algo:line5} preserve the satisfiability the
input formula, by chaining~\Cref{theorem:basu},
\Cref{theorem:small-model-property}, and \Cref{lemma:sign-poly-root-barrier}, we
conclude that~\Cref{algo:main-procedure} is correct. 
\begin{restatable}{lemma}{LemmaCorrectnessAlgorithmOne}
  \label{lemma:correctness-algorithm-1}
  \Cref{algo:main-procedure} respects its specification.
\end{restatable}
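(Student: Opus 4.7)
The plan is to verify correctness of Algorithm \ref{algo:main-procedure} by chaining the three ingredients built in the previous subsections, one per step, and then checking that the nondeterministic acceptance convention wires them together correctly.

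First I would argue that lines \ref{algo:line1}--\ref{algo:line5} preserve satisfiability. The key fact is a \emph{unique factorisation}: every $x \in \R$ equals $u \cdot v$ for some $u \in \ipow{\cn}$ and some $v$ with either $v = 0$ or $1 \leq \abs{v} < \cn$, uniquely when $x \neq 0$. Given this, the substitution $x_i \mapsto u_i \cdot v_i$ (with $u_i$ implicitly ranging over $\ipow{\cn}$ and $v_i$ constrained by line~\ref{algo:line5}) is complete; and the replacement of $\ipow{\cn}(x_i)$ with $v_i = 1$ at line~\ref{algo:line3} is sound, because with $u_i \in \ipow{\cn}$ the product $u_i \cdot v_i$ lies in $\ipow{\cn}$ iff $v_i = 1$. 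So, writing $\phi'$ for the updated formula, $\phi$ is satisfiable iff $\phi'$ has a solution with each $u_i \in \ipow{\cn}$ and $v_i \in \R$.

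Second, line~\ref{algo:line6} invokes \Cref{theorem:basu} to eliminate $v_1,\dots,v_n$, producing a quantifier-free $\psi(u_1,\dots,u_n)$ logically equivalent to $\exists v_1 \dots \exists v_n \phi'$ over the reals. Since equivalence over $\R$ persists under any restriction of the free variables, $\phi$ is satisfiable iff $\psi$ has a solution with all $u_i \in \ipow{\cn}$---a pure $\exists\ipow{\cn}$ task. \Cref{theorem:small-model-property} then tells us that such a solution exists iff some tuple $(\cn^{g_1},\dots,\cn^{g_n})$ with $g_i \in P_\psi$ satisfies $\psi$.

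Finally, lines~\ref{algo:line7}--\ref{algo:line8} nondeterministically enumerate exactly these candidates, and line~\ref{algo:line9}, after the normalisation by $\cn^{-d}$ described in \Cref{subsection:polynomial-sign-evaluation} to eliminate negative powers of $\cn$, decides each candidate via the sign-evaluation routine (\Cref{lemma:sign-poly-root-barrier} in the root-barrier case; the standard routine in the transcendental case) atom-by-atom, reducing $\psi(\cn^{g_1},\dots,\cn^{g_n})$ to a Boolean combination of $\top$ and $\bot$. Under the nondeterministic acceptance convention (the algorithm outputs $\top$ iff some branch outputs $\top$), the chain of equivalences yields: the algorithm returns $\top$ iff some tuple works, iff $\psi$ is satisfiable in $\ipow{\cn}^n$, iff $\phi$ is satisfiable. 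The only subtle verification is in Step~I---namely, that \textsc{RealQE}, which a~priori operates over the full reals, remains faithful once the $u_i$'s are later restricted to the substructure $\ipow{\cn}$; everything else is direct chaining of the previously proved results.
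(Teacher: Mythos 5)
Your proof is correct and follows essentially the same decomposition as the paper's: establish that lines 1--5 preserve satisfiability via the unique factorisation $x = u\cdot v$, note that \textsc{RealQE} preserves equivalence over $\R$ and hence remains faithful when the $u_i$'s are later restricted to $\ipow{\cn}$, invoke the small-witness property for the guessing step, and finish with sign evaluation. You even flag the same subtle point the paper emphasises (that quantifier elimination, being equivalence-preserving, can be applied before the $u_i$'s are restricted to the substructure).
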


\begin{proof}
  Consider an input
  formula $\phi(x_1,\dots,x_n)$, and let $\phi'(u_1,\dots,u_n,v_1,\dots,v_n)$ be
  the formula obtained from it at the completion of the \textbf{for} loop of
  line~\ref{algo:line1}. Note that if $\phi$ and $\phi'$ are equisatisfiable,
  then the lemma follows. Indeed:
  \begin{itemize}
    \item By~\Cref{theorem:basu}, the formula $\psi(u_1,\dots,u_n)$ in
    line~\ref{algo:line6} is equisatisfiable with $\phi'$,
    \item By~\Cref{theorem:small-model-property}, $\psi$ is satisfiable if and
    only if it has a solution from the set $S = \{(\cn^{j_1},\dots,\cn^{j_n}) :
    j_1,\dots,j_n \in P_\psi\}$, where $P_\psi$ is the set from~\Cref{theorem:small-model-property}. 
    Lines~\ref{algo:line7} and~\ref{algo:line8}, 
    search for such an element of $S$.
    \item Following line~\ref{algo:line9}, the algorithm returns $\top$ if and only if $\psi$ evaluates to true
    on a point from the set $S$.
    For this evaluation step, one consider all polynomials inequalities ${p(\cn,\cn^{g_1},\dots,\cn^{g_n}) \sim 0}$ in $\psi(\cn^{g_1},\dots,\cn^{g_n})$, and evaluate its sign using the algorithm for $\SIGN_{\cn}$. As a result of this operation, $\psi(\cn^{g_1},\dots,\cn^{g_n})$ is updated into a Boolean combination of $\top$ and $\bot$, reduces to just $\top$ or $\bot$ after all Boolean connectives are evaluated.
  \end{itemize}
  So, to conclude the proof we just have to formally prove that $\phi$ and $\phi'$ are equisatisfiable.
  Recall that for every real number $r \in \R$ there is a pair of numbers $(u,v)$ 
  such that $x = u \cdot v$, $u \in \ipow{\cn}$ and either $v = 0$ or $1 \leq \abs{v} < \cn$.
  If $r \neq 0$, the pair $(u,v)$ is unique.
  Then, the formula $\phi$ is equisatisfiable with 
  \begin{equation}
    \label{eq:phi-for-line-1} 
    \phi\sub{u_i \cdot v_i}{x_i : i \in [1..n]} \land \bigwedge_{i=1}^n (\ipow{\cn}(u_i) \land (v_i = 0 \lor 1 \leq \abs{v} < \cn))
  \end{equation}
  where $u_1,\dots,u_n,v_1,\dots,v_n$ are fresh variables. 
  The formula $\phi\sub{u_i \cdot v_i}{x_i : i \in [1..n]}$ features atomic formulae $\ipow{\cn}(u_i \cdot v_i)$. Under the assumption that $\ipow{\cn}(u_i) \land (v_i = 0 \lor 1 \leq \abs{v} < \cn)$ holds, note that $\ipow{\cn}(u_i \cdot v_i)$ is equivalent to $v_i = 1$.
  Then, we can replace in Formula~\eqref{eq:phi-for-line-1} every occurrence of $\ipow{\cn}(u_i \cdot v_i)$ with $v_i = 1$, preserving equivalence. 
  The formula we obtain is exactly the formula $\phi'$, which is thus equisatisfiable with $\phi$.
\end{proof}

\subsection{Running time (Proof of~\Cref{theorem:general-result-root-barrier})}\label{subsection:proof-general-root-barrier}

From the discussion in~\Cref{subsection:small-bases}, it suffices to restrict to instances of 
the problem where $\cn > 1$.
Below, we analyse the complexity of Algorithm \ref{algo:main-procedure}, considering the three steps separately. 

Let $\phi(x_1,\dots,x_n)$ be a formula from $\exists\R(\ipow{\cn})$, having $m_1$ occurrences of polynomial (in)equalities $g \sim 0$, all with $\deg(g) \leq d$ and $\height(g) \leq h$, and $m_2$ occurrences of the predicate~$\ipow{\cn}$. 
We run~\Cref{algo:main-procedure} with $\phi$ as input.

\paragraph{\textit{Step~I \textup{(}runtime: exponential in $\size(\phi)$\textup{)}}}
Lines~\ref{algo:line1}--\ref{algo:line5} update $\phi$ by (i) 
replacing the occurrences of $\ipow{\cn}(x_i)$ with $v_i = 1$, 
(ii)~replacing the occurrences of $x_i$ with $u_i \cdot v_i$ and 
(iii)~adding constraints $v_i = 0$ and $1 \leq \abs{v_i} < \cn$. Let $\phi'$ be the formula obtained after these updates. The size of $\phi'$ is polynomial in $\size(\phi)$. Moreover, $\phi'$ has:
\begin{enumerate}
  \item at most $2n$ variables,
  \item at most $m_1+m_2+5n$ polynomial (in)equalities (recall that $1 \leq \abs{v_i} < \cn$ is a shortcut for the formula $-\cn < v_i \leq 1 \lor 1 \leq v_i < \cn$),
  \item and all its polynomials (in)equalities $g \sim 0$ are such that $\deg(g) \leq 2d$ and $\height(g) \leq h$. The increase in the degree is due to the replacements of variables $x_i$ with $u_i \cdot v_i$.
\end{enumerate}
The procedure then eliminates the variables $v_1,\dots,v_n$ by
calling~\textsc{RealQE} (line~\ref{algo:line6}).
Following~\Cref{theorem:basu}, the runtime of~\textsc{RealQE} is exponential
in $\size(\phi)$, and therefore $\psi$ has size exponential in
$\size(\phi)$. More precisely $\psi$ has 
\begin{enumerate}
  \setcounter{enumi}{3}
  \item at most $n$ variables,
  \item\label{rtalgo:it5} at most $((m_1+m_2+5n) \cdot 2 \cdot d + 1)^{O(n^2)}$ polynomial (in)equalities,
  \item and all its (in)equalities $g \sim 0$ are such that~$\deg(g) \leq (2d)^{O(n)}$ and $\height(g) \leq (h+1)^{(2d)^{O(n^2)}}$.
\end{enumerate}

\paragraph{\textit{Step~II \textup{(}runtime: 2-exp.~or 3-exp.~in $\size(\phi)$, depending on the value of~$k$\textup{)}}}
For each variable $u_i$, the algorithm guesses an integer $g_i$ written in unary (lines~\ref{algo:line7} and~\ref{algo:line8}). Let $H \coloneqq \max(8,h(\psi))$ and $D \coloneqq \deg(\psi)+2$.
By~\Cref{theorem:small-model-property}, 
\begin{align*}
  \abs{g_i} 
  & \leq \left(2^c \ceil{\ln H}\right)^{D^{2^5 n^2} k^{D^{8n}}} 
  \leq \Big(2^c {\lceil\ln \big((2(h+1))^{(2d)^{O(n^2)}}\big)\rceil}\Big)^{(2d)^{O(n^3)} k^{(2d)^{O(n^2)}}},
\end{align*}
that is, if $k = 1$ then $\abs{g_i}$ is doubly exponential in $\size(\phi)$, and otherwise, for every $k \geq 2$, $\abs{g_i}$ is triply exponential in $\size(\psi)$.
We can implement lines~\ref{algo:line7}--\ref{algo:line9} deterministically in the following na\"ive way: 
\begin{algorithmic}[1]
  \setcounter{ALG@line}{6}
  \For{$(g_1,\dots,g_n) \in P^n$}
    \If{the assignment $(u_1 = \cn^{g_1}, \dots, u_n = \cn^{g_n})$ is a solution to $\psi$}
      \myreturn $\top$
    \EndIf
  \EndFor
  \State \myreturn $\bot$
\end{algorithmic}
Since each $g_i$ is stored in unary encoding, the number of iterations of
the \textbf{for} loop above is either doubly or triply exponential in
$\size(\phi)$, depending on whether $k = 1$.

\paragraph{\textit{Step~III \textup{(}runtime: $2$-exp.~or $3$-exp.~in $\size(\phi)$, depending
on the value of~$k$\textup{)}}}
The algorithm evaluates whether $(u_1 = \cn^{g_1}, \dots, u_n =
\cn^{g_n})$ is a solution to $\psi$. As discussed in the body of the
paper, $\psi(\cn^{g_1},\dots,\cn^{g_n})$ is a Boolean combination of
polynomial (in)equalities $p(\cn) \sim 0$, where $\cn$ may occur with
negative powers (as some $g_i$ may be negative). We rewrite each
(in)equality $p(\cn) \sim 0$ as $\cn^{-d} \cdot p \sim 0$, where $d$ is
the smallest negative integer occurring as a power of $\cn$ in~$p$ (or $0$
if such an integer does not exist), thus obtaining a formula where all
polynomials have non-negative degrees. Let us denote by $\psi'$ this
formula. This update takes polynomial time in the size of
$\psi(\cn^{g_1},\dots,\cn^{g_n})$; that is doubly or triply exponential
time in~$\size(\phi)$, depending on which case among $k = 1$ or $k \geq 2$
we are considering.

After this update, we determine the sign that each inequality in~$\psi'$. These
inequalities are of the form $p(\cn) \sim 0$, and hence this problem can be
solved with~\Cref{algo:sign-evaluation}. (Note that the degree $p$ depends on
$g_1,\dots,g_n$.) By~\Cref{lemma:runtime-sign-poly-root-barrier}, the runtime of
this algorithm is polynomial in the size of $p$; which again is doubly or triply
exponential in~$\size(\phi)$, depending on $k$. This enables us to simplify all
inequalities to either $\top$ or $\bot$, to then return $\top$ or $\bot$
depending on the Boolean structure of~$\psi'$ Since $\psi$ has size exponential
in $\size(\phi)$, evaluating the Boolean structure of $\psi'$ takes exponential
time.

Putting all together, we conclude that~\Cref{algo:main-procedure} 
runs in doubly exponential time if $k = 1$, 
and in triply exponential time if $k \geq 2$, 
thus establishing~\Cref{theorem:general-result-root-barrier}.
\section{Finding solutions over integer powers of \texorpdfstring{$\xi$}{the base}}
\label{sec:solving-substructure}

In this section we prove
\Cref{theorem:small-model-property}, i.e., we show that $\exists
\ipow{\cn}$ has a~\emph{small witness property}. The proof is split into two parts:
\begin{enumerate}
  \item We first give a quantifier-elimination-like procedure for $\exists
  \ipow{\cn}$. Instead of targeting formula equivalence, we only focus on
  equisatisfiability: given a formula $\exists y\, \phi(y,\vec{x})$, with $\phi$
  quantifier-free, the procedure derives an \emph{equisatisfiable}
  quantifier-free formula $\psi(\vec x)$. Preserving
  equisatisfiability, instead of equivalence, is advantageous complexity-wise.
  (Our procedure
  preserves equivalence for sentences, as
  these are equivalent to~$\top$~or~$\bot$.)
  \item By analysing our quantifier elimination procedure, we derive the bounds
  on the set $P_{\psi}$ from~\Cref{theorem:small-model-property} required to
  complete the proof. This step is similar to the \emph{quantifier
  relativisation} technique for Presburger arithmetic (see, e.g.,~\cite[Theorem
  2.2]{Weispfenning90}).
\end{enumerate}
Some core mechanisms of our quantifier-elimination-like procedure follow
observations done by Avigad and Yin for their (equivalence-preserving) quantifier elimination
procedure~\cite{AvigadY07}. Apart from targeting equisatisfiability, a key property of our procedure is that it does not require
appealing to a quantifier elimination procedure for the theory of the reals. The
procedure in~\cite{AvigadY07} calls such a procedure once for each eliminated
variable instead.

\subsection{Quantifier elimination}
\label{subsection:quantifier-elimination}
Fix a real number $\cn > 1$. In this section, we rely on some auxiliary notation
and definitions:
\begin{itemize}
  \item We often see an integer polynomial $p(\cn, \vec x)$ as a polynomial in
  variables $\vec x = (x_1,\dots,x_m)$ having as coefficients univariate integer
  polynomials on $\cn$, i.e., ${p(\cn, \vec x) = \sum_{i=1}^n q_i(\cn) \cdot
  \vec{x}^{\vec{d}_i}}$, where the notation $\vec{x}^{\vec{d}_i}$ is short for
  the \emph{monomial} $\prod_{j=1}^m x_j^{d_{i,j}}$, with $\vec{d}_{i} =
  (d_{i,1},\dots,d_{i,m})$.
  \item We sometimes write polynomial (in)equalities using Laurent polynomials,
  i.e., polynomials with negative powers. For instance, \Cref{lemma:lambda-close-to-variable} below features equalities with monomials
  $\cn^g \cdot {\vec{x}}^{\vec{d}_i}$ where $g$ may be a negative integer. Laurent polynomials are just a
  shortcut for us, as one can opportunely manipulate the (in)equalities to make
  all powers non-negative (as we did
  in~\Cref{subsection:polynomial-sign-evaluation}): a polynomial
  (in)equality $p(\cn,x_1,\dots,x_m) \sim 0$ is rewritten as
  $p(\cn,x_1,\dots,x_m) \cdot \cn^{-d} \cdot x_1^{-d_1} \cdot {\dots} \cdot
  x_m^{-d_m} \sim 0$, where $d_i$ (resp.~$d$) is the smallest negative integer
  occurring as a power of $x_i$ (resp.~$\cn$) in $p$ (or $0$ if such a negative
  integer does not exist). Observe that this transformation does not change the
  number of monomials nor the height of the polynomial $p$, but it may double the
  degree of each variable and of $\cn$.
  \item Given a formula $\phi$, a variable $x$ and a Laurent polynomial
  $q(\vec{y})$, we write $\phi\sub{q(\vec y)}{x}$ for the formula obtained from
  $\phi$ by replacing every occurrence of $x$ by $q(\vec y)$, and then
  updating all polynomial (in)equalities with negative degrees in the way
  described above.
  \item We write $\lambda \colon \R_{> 0} \to \ipow{\cn}$ for the function
  mapping $a \in \R_{> 0}$ to the largest integer power of $\cn$ that is less or
  equal than $a$, i.e., $\lambda(a)$ is the only element of $\ipow{\cn}$
  satisfying $\lambda(a) \leq a < \cn \cdot \lambda(a)$. 
\end{itemize}

The relation $\lambda(p(\cn,\vec x)) = y$, where $p$ is an integer polynomial,
is definable in $\exists \ipow{\cn}$ as $p(\cn,\vec x) > 0 \land y \leq
p(\cn,\vec x) < \cn \cdot y$. To obtain a quantifier elimination procedure, we
must first understand what values can $y$ take given~$p(\cn,\vec x)$. 
\Cref{lemma:lambda-close-to-variable} answers this question: 


\begin{restatable}{lemma}{LemmaLambdaCloseToVariable}
  \label{lemma:lambda-close-to-variable}
  Let $p(\cn,\vec x) \coloneqq \sum_{i = 1}^n (q_i(\cn) \cdot
  {\vec{x}}^{\vec{d}_i})$, where each~$q_i$ is a univariate integer polynomial.
  In the theory~$\exists \ipow{\cn}$, the formula $p(\cn,\vec x) > 0$ entails the formula $\textstyle\bigvee_{i=1}^n \bigvee_{g \in G} \lambda(p(\cn,\vec x)) = \cn^g \cdot {\vec{x}}^{\vec{d}_i}$\,, for some finite set $G \subseteq \Z$.
  Moreover:
  \begin{enumerate}[label=\Roman*., ref=\Roman*]
    \item\label{lemma:lambda-close-to-variable:i1} If $\cn$ is a computable transcendental number, there is an
          algorithm computing $G$ from $p$.
    \item\label{lemma:lambda-close-to-variable:i2} If $\cn$ has a root barrier $\sigma(d,h) \coloneqq c \cdot
            (d+\ceil{\ln(h)})^k$, for some $c,k \in \N_{\geq 1}$, then
            \vspace{-3pt}
            \begin{equation*}
              G\coloneqq\left[-L..L\right],
              \qquad 
              \text{where }
              L \coloneqq \left(2^{3c}D\ceil{\ln(H)}\right)^{6nk^{3n}},
            \end{equation*}
            with $H \coloneqq \max\{8,\height(q_i) : i \in [1,n]\}$, and $D \coloneqq \max\{\deg(q_i)+2 : i \in [1,n]\}$.
  \end{enumerate}
\end{restatable}

\begin{proof}
  We start by replacing every monomial 
  $\prod_{j=1}^m x_j^{d_{i,j}}$ with a term $\cn^{z_i}$, where 
  $z_i$ is a fresh variable ranging over $\Z$.
  It now suffices to show the following statement instead:
  \begin{itemize}
    \item[]\textit{Let $p(\vec{z}) \coloneqq \sum_{i=1}^n q_i(\cn) \cdot \cn^{z_i}$,
    where $\vec z = (z_1,\dots,z_n)$ and each~$q_i(x)$ is an integer polynomial.
    There is a finite set $G \subseteq \Z$ with the following property: 
    for every~${\vec z^* \in \Z^n}$,
    if $p(\vec z^*) > 0$
    then $\lambda(p(\vec z^*)) = \cn^g \cdot \cn^{z_i^*}$ for some
    $g \in G$ and $i \in [1..n]$.
    Moreover:}
    \begin{enumerate}[label=\Roman*., ref=\Roman*]
      \item \textit{If $\cn$ is computable and transcendental, there is an
            algorithm computing $G$ from $p$.}
      \item \textit{If $\cn$ has a root barrier $\sigma(d,h) \coloneqq c \cdot
              {(d+\ceil{\ln(h)})}^k$, for some $c,k \in \N_{\geq 1}$, then,
              \begin{equation*}
                G\coloneqq{\left[-L..L\right]},
                \qquad 
                \text{where }
                L \coloneqq {\left(2^{3c}D\ceil{\ln(H)}\right)}^{6nk^{3n}},
              \end{equation*}
              with $H \coloneqq \max\{8,\height(q_i) : i \in [1,n]\}$, and $D \coloneqq \max\{\deg(q_i)+2 : i \in [1,n]\}$.}
    \end{enumerate}
  \end{itemize}
  
  \medskip\noindent
  Note that for $n = 0$ we have $p(\vec z^*) = 0$ for every $\vec z^* \in \Z^n$, and we can take $G = \emptyset$. Therefore, throughout the proof, we assume $n \geq 1$.
  We start by showing the existence of the finite set~$G$. 
  To prove this, we first fix a vector $\vec z^* = (z_1^*,\dots,z_n^*) \in \Z^n$ such that $p(\vec z^*) > 0$, 
  and use it to derive a definition for $G$ that does not, in fact, depend on $\vec z^*$.
  Without loss of generality, we work under the additional assumption that
  $z_1^* \geq \cdots \geq z_n^*$.

  The following claim provides an analysis on the value of $\lambda(p(\vec z^*))$.
  
  \begin{claim}\label{claim1:lambda-close-to-variable}
    There is a non-empty interval $[j..\ell]$, with $j,\ell \in [1..n]$,
    and natural numbers ${g_j},\dots,{g_{\ell-1}}$
    with respect to which the recursively defined polynomials
    $Q_j,\dots,Q_{\ell}$ given by
    \begin{align*}
      Q_j(x) & \coloneqq q_j(x),                                  \\
      Q_r(x) & \coloneqq Q_{r-1}(x) \cdot x^{g_{r-1}} + q_{r}(x),
              & \text{for every } r \in [j+1,\ell],
    \end{align*}
    satisfy the following properties:
    \begin{enumerate}[label=\Alph*., ref=\Alph*]
      \item\label{claim1:lambda-close-to-variable:A}
      the numbers $Q_j(\cn),\dots,Q_{\ell-1}(\cn)$ are all non-zero, and
      $Q_\ell(\cn)$ is (strictly) positive,
      \item\label{claim1:lambda-close-to-variable:B}
      for every $r \in [j..\ell-1]$,
      the number $\cn^{g_r}$ belongs to the interval
      $\big[1\,,\,\frac{\abs{q_{r+1}(\cn)}+\cdots+\abs{q_n(\cn)}}{\abs{Q_r(\cn)}}\big]$,
      and
      \item\label{claim1:lambda-close-to-variable:C}
      either\, $\lambda(p(\vec z^*)) = \lambda(Q_\ell(\cn)) \cdot
        \cn^{z_\ell^*}$\, or\, $\frac{\lambda(Q_\ell(\cn) \cdot (\cn-1))}{\cn} \cdot
        \cn^{z_\ell^*} \leq \lambda(p(\vec z^*)) \leq
        \frac{\lambda(Q_\ell(\cn) \cdot (\cn+1))}{\cn} \cdot \cn^{z_\ell^*}$.
    \end{enumerate}
  \end{claim}
  
  \begin{proof}
    The proof is by induction on $n$.
    \begin{description}
      \item[base case: $n = 1$]
        In this case, $p(z_1)$ is the expression $q_1(\cn) \cdot \cn^{z_1}$. By
        definition of $\lambda$, $\lambda(p(z_1^*)) = \lambda(q_1(\cn)) \cdot
          \cn^{z_1^*}$. Observe that $p(z_1^*) > 0$ implies $q_1(\cn) > 0$, and
        thus
        $\lambda(q_1(\cn))$ is a defined integer power of $\cn$. Taking the
        interval $[1..1]$ shows~\Cref{claim1:lambda-close-to-variable}.
      \item[induction step: $n \geq 2$]
        Below, we assume $q_1(\cn)$ to be non-zero.
        Indeed, if $q_1(\cn) = 0$,
        we can then apply the induction hypothesis on
        $\hat{p}(z_2,\dots,z_n) \coloneqq \sum_{i=2}^n q_i(\cn) \cdot
          \cn^{z_i}$,
        concluding the proof (since $p(\vec z^*) =
          \hat{p}(z_2^*,\dots,z_n^*)$).

        We split the proof depending on whether $\cn^{z_1^*} \geq
          \frac{\sum_{i=2}^n \abs{q_i(\cn)}}{\abs{q_1(\cn)}} \cdot
          \cn^{z_2^*+1}$ holds.
        \begin{description}
          \item[case: $\cn^{z_1^*} \geq \frac{\sum_{i=2}^n
          \abs{q_i(\cn)}}{\abs{q_1(\cn)}} \cdot \cn^{z_2^*+1}$]
            Observe that in this case, $q_1(\cn)$ must be positive. We show
            that
            $\frac{\lambda(q_1(\cn) \cdot (\cn-1))}{\cn} \cdot \cn^{z_1^*} \leq
              \lambda(p(\vec{z}^*)) \leq \frac{\lambda(q_1(\cn) \cdot (\cn+1))}{\cn}
              \cdot \cn^{z_1^*}$,
            thus establishing that taking the interval $[1..1]$
            proves~\Cref{claim1:lambda-close-to-variable} also in this case.
            For the lower bound:
            \[
              \hspace{1.4cm}
              \begin{aligned}
                p(\vec z^*) & \geq q_1(\cn) \cdot \cn^{z_1^*} - \sum\nolimits_{i=2}^n
                \abs{q_i(\cn)} \cdot \cn^{z_i^*}
                            & \text{by def.~of~$p$}
                \\
                            & \geq q_1(\cn) \cdot \cn^{z_1^*} - \cn^{z_2^*}\cdot
                \sum\nolimits_{i=2}^n
                \abs{q_i(\cn)}
                            & z_2^* \geq z_i^* \text{ for all $i \in [2,n]$}
                \\
                            & \geq
                q_1(\cn) \cdot \cn^{z_1^*} - q_1(\cn) \cdot \cn^{z_1^*-1}
                            & \text{assumption of this case and $q_1(\cn) > 0$}
                \\
                            & \geq
                q_1(\cn) \cdot (\cn - 1) \cdot \cn^{z_1^*-1}.
              \end{aligned}
            \]
            Since $a \geq b$ implies $\lambda(a) \geq \lambda(b)$, we thus
            obtain
            $\lambda(p(\vec z^*)) \geq \frac{\lambda(q_1(\cn) \cdot
                (\cn-1))}{\cn} \cdot
              \cn^{z_1^*}$.

            For the upper bound:
            \[
              \hspace{1.8cm}
              \begin{aligned}
                p(\vec z^*) & \leq q_1(\cn)\cdot \cn^{z_1^*} + \sum_{i=2}^n
                \abs{q_i(\cn)} \cdot \cn^{z_2^*}
                            & \text{by def.~of $p$, and $z_2^* \geq z_i^*$  for
                  all $i \in
                    [2,n]$}
                \\
                            & \leq q_1(\cn)\cdot \cn^{z_1^*} + q_1(\cn) \cdot
                \cn^{z_1^*-1}
                            & \text{assumption of this case and $q_1(\cn) > 0$}
                \\
                            & \leq q_1(\cn) \cdot (\cn+1) \cdot \cn^{z_1^*-1},
              \end{aligned}
            \]
            and again from the properties of $\lambda$,
            we obtain $\lambda(p(\vec z^*)) \leq \frac{\lambda(q_1(\cn) \cdot
                (\cn
                + 1))}{\cn} \cdot \cn^{z_1^*}$.

          \item[case: $\cn^{z_1^*} < \frac{\sum_{i=2}^n
          \abs{q_i(\cn)}}{\abs{q_1(\cn)}} \cdot \cn^{z_2+1}$] We have
            $\cn^{z_1^*} \leq
              \frac{\sum_{i=2}^n \abs{q_i(\cn)}}{\abs{q_1(\cn)}} \cdot
              \cn^{z_2}$.
            Since $z_1^* \geq z_2^*$,
            there must be $g_1 \in \N$ such that $\cn^{g_1} \in
              \big[1,\frac{\sum_{i=2}^n \abs{q_i(\cn)}}{\abs{q_1(\cn)}}\big]$
            and $\cn^{z_1^*} = \cn^{g_1} \cdot \cn^{z_2^*}$.
            We define
            \[
              \hspace{1.4cm}
              q_2'(x) \coloneqq q_1(x) \cdot x^{g_1} + q_2(x),
              \qquad
              p'(z_2,\dots,z_n) \coloneqq q_2'(\cn) \cdot \cn^{z_2} +
              \sum_{i=3}^n
              q_i(\cn) \cdot \cn^{z_i}.
            \]
            So, $p(\vec z^*) = p'(\vec z_2^*)$,
            where $\vec z_2^* \coloneqq (z_2^*,\dots,z_\ell^*)$.
            By induction hypothesis,
            there is a non-empty interval $[j..\ell]$, with $j,\ell \in
              [2..n]$,
            and natural numbers ${g_j},\dots,{g_{\ell-1}}$
            with respect to which the recursively defined polynomials
            $Q_j,\dots,Q_{\ell}$ given by
            \begin{align*}
              Q_j(x) & \coloneqq
              \begin{cases}
                q_2'(x) & \text{if $j = 2$} \\
                q_j(x)   & \text{otherwise}
              \end{cases}                                \\
              Q_r(x) & \coloneqq Q_{r-1}(x) \cdot x^{g_{r-1}} + q_{r}(x),
                      & \text{for every } r \in [j+1..\ell],
            \end{align*}
            satisfy that
            (\labeltext{A$'$}{claim1:lambda-close-to-variable:Ap})
            $Q_j(\cn),\dots,Q_{\ell-1}(\cn)$ are all non-zero
            and~$Q_\ell(\cn)$ is positive,
            (\labeltext{B$'$}{claim1:lambda-close-to-variable:Bp})~for every $r
              \in
              [j,\ell-1]$, the number
            $\cn^{g_r}\in  
            \big[1\,,\,\frac{\abs{q_{r+1}(\cn)}+\cdots+\abs{q_n(\cn)}}{\abs{Q_r(\cn)}}\big]$,
            and
            (\labeltext{C$'$}{claim1:lambda-close-to-variable:Cp})~either
            $\lambda(p'(\vec z_2^*)) = \lambda(Q_\ell(\cn)) \cdot
              \cn^{z_\ell^*}$ or
            $\frac{\lambda(Q_\ell(\cn) \cdot (\cn-1))}{\cn} \cdot \cn^{z_\ell^*} \leq
              \lambda(p'(\vec z_2^*)) \leq
              \frac{\lambda(Q_\ell(\cn) \cdot (\cn+1))}{\cn} \cdot
              \cn^{z_\ell^*}$.

            If $j \neq 2$, then $Q_j(x) = q_j(x)$;
            since $p(\vec z^*) = p'(\vec z_2^*)$
            we conclude that the interval $[j..\ell]$
            and the numbers ${g_j},\dots,{g_{\ell-1}}$
            defined for $p'$ also establish the claim for $p$.

            Otherwise, when $j = 2$ we have $Q_j(x) = q_2'(x) = q_1(x) \cdot
              x^{g_1} + q_2(x)$.
            Recall that $q_1(\cn)$ is non-zero and that,
            by definition of $g_1$, $\cn^{g_1} \in
              \big[1,\frac{\sum_{i=2}^n \abs{q_i(\cn)}}{\abs{q_1(\cn)}}\big]$.
            Therefore,
            taking the interval $[1..\ell]$ and the numbers
            $g_1,\dots,g_{\ell-1}$ proves
            the claim for $p$.
            \qedhere%
        \end{description}
    \end{description}
  \end{proof}

  With~\Cref{claim1:lambda-close-to-variable} at hand, we now argue that the
  finite set $G \subseteq \Z$ required by the lemma exists.
  The key observation is that the definition of $Q_\ell$ from~\Cref{claim1:lambda-close-to-variable} does not depend on~$\vec z^*$. 
  Hence, a suitable set $G$ can be defined as follows. 
  Let $\mathcal{Q}$ be the set of all polynomials $Q$
  for which there are $j \leq \ell \in [1..n]$, $g_j,\dots,g_{\ell-1} \in \N$, and polynomials $Q_j,\dots,Q_{\ell}$ 
  such that:
    \begin{enumerate}
      \item the polynomial $Q$ is equal to $Q_\ell$,
      \item\label{mcQ-polynomials} the polynomials $Q_j^{},\dots,Q_\ell$ are defined as 
          \begin{align*}
          Q_j(x) & \coloneqq q_j(x),                                  \\
          Q_r(x) & \coloneqq Q_{r-1}(x) \cdot x^{g_{r-1}} + q_{r}(x),
              & \text{for every } r \in [j+1,\ell],
          \end{align*}
      \item
      the numbers $Q_j(\cn),\dots,Q_{\ell-1}(\cn)$ are all non-zero, and
      $Q_\ell(\cn)$ is (strictly) positive,
      \item\label{mcQ-finiteness-of-gr}
      for every $r \in [j..\ell-1]$, and
      the number $\cn^{g_r}$ belongs to the interval
      $\big[1\,,\,\frac{\abs{q_{r+1}(\cn)}+\cdots+\abs{q_n(\cn)}}{\abs{Q_r(\cn)}}\big]$.
    \end{enumerate}
  In a nutshell, $\mathcal{Q}$ contains all polynomials $Q_\ell$ that might be considered in~\Cref{claim1:lambda-close-to-variable} as the vector $\vec z^*$ varies. Items~\ref{mcQ-polynomials}--\ref{mcQ-finiteness-of-gr} ensure that $\mathcal{Q}$ is a finite set.
  We define $G \coloneqq [\min B.. \max B]$, where $B$ is defined as the set
  \[ 
    B \coloneqq \left\{ \beta \in \Z : \text{there is } Q \in \mathcal{Q} \text{ such that } \cn^{\beta} \in \Big\{{\textstyle\lambda(Q(\cn)), \frac{\lambda(Q(\cn) \cdot (\cn-1))}{\cn}, \frac{\lambda(Q(\cn) \cdot (\cn+1))}{\cn}}\Big\}
    \right\}.
  \]
  Since $\mathcal{Q}$ is finite, then so are $B$ and $G$.

  It is now simple to see that $G$ satisfies the property required by the first statement of the lemma. 
  Indeed, consider a vector $\vec z^* = (z_1^*,\dots,z_n^*) \in \Z^n$ such that $p(\vec z^*) > 0$ (this is not necessarily the vector we have fixed at the beginning of the proof). 
  By definition of $\mathcal{Q}$ and by~\Cref{claim1:lambda-close-to-variable},
  there is a polynomial $Q$ in $\mathcal{Q}$ such that:
  \begin{itemize} 
      \item $Q(\cn)$ is strictly positive (and so $\lambda(Q(\cn))$ is well-defined). Since $\cn > 1$, observe that this means that also $Q(\cn) \cdot (\cn-1)$ and $Q(\cn) \cdot (\cn+1)$ are strictly positive.
      \item Either $\lambda(p(\vec z^*)) = \lambda(Q(\cn)) \cdot \cn^{z_i^*}$ or $\frac{\lambda(Q(\cn) \cdot (\cn-1))}{\cn} \cdot \cn^{z_i^*} \leq \lambda(p(\vec z^*)) \leq \frac{\lambda(Q(\cn) \cdot (\cn+1))}{\cn} \cdot \cn^{z_i^*}$, for some $i \in [1,n]$ (this follows by Property~\ref{claim1:lambda-close-to-variable:C} of~\Cref{claim1:lambda-close-to-variable}). 

      In the latter case of $\frac{\lambda(Q(\cn) \cdot (\cn-1))}{\cn} \cdot \cn^{z_i^*} \leq \lambda(p(\vec z^*)) \leq \frac{\lambda(Q(\cn) \cdot (\cn+1))}{\cn} \cdot \cn^{z_i^*}$, observe that $\lambda(p(\vec z^*)) = \cn^\beta \cdot \cn^{z_i^*}$, for some $\cn^\beta \in [\frac{\lambda(Q(\cn) \cdot (\cn-1))}{\cn},\frac{\lambda(Q(\cn) \cdot (\cn+1))}{\cn}]$.
  \end{itemize}
  By definition of $B$ and $G$, 
  we conclude that $\lambda(p(\vec z^*)) = \cn^g \cdot \cn^{z_i^*}$ 
  for some $g \in G$ and $i \in [1..n]$. 
  This concludes the proof of the first statement of the lemma.

  We move to the second part of the lemma, which adds further assumptions
  on~$\cn$. This part still relies on the definitions of the sets $\mathcal{Q}$, $B$ and $G$ above.

  \paragraph{\textit{Case: $\cn$ is a computable transcendental number \textup{(}Item~\eqref{lemma:lambda-close-to-variable:i1}\textup{)}}}
  Assume $\cn$ a transcendental number computed by a Turing machine $T$. We provide an algorithm for computing a superset of the set $G$. Here is a high-level pseudocode of the algorithm:

  \begin{algorithmic}[1]
      \State\label{algo-lambda-trans-1} compute a finite set of polynomials $\mathcal{Q}'$ that includes all polynomials in $\mathcal{Q}$
      \State\label{algo-lambda-trans-2} remove from $\mathcal{Q}'$ all polynomials $Q$ such that $Q(\cn) \leq 0$
      \State\label{algo-lambda-trans-3} compute rationals $\ell,u > 0$ such that~$\ell \leq \frac{Q(\cn)\cdot (\cn-1)}{\cn^2}$ and $Q(\cn) \cdot (\cn+1) \leq u$, for all $Q$ in~$\mathcal{Q}'$
      \State\label{algo-lambda-trans-4} \textbf{return} a superset of $\{ \beta \in \Z : \ell \leq \cn^\beta \leq u \}$
  \end{algorithmic}
  The correctness of this algorithm is immediate from the definition of the sets $\mathcal{Q}$, $B$ and $G$. 
  In particular, note that $\{ \cn^\beta : \beta \in B\} \subseteq [\ell..u]$, 
  because for every $\alpha > 0$ we have $\frac{\alpha}{\cn} < \lambda(\alpha) \leq \alpha$ (by definition of $\lambda$), and moreover 
  $\frac{Q(\cn) \cdot (\cn-1)}{\cn^2} \leq \frac{Q(\cn)}{\cn} \leq Q(\cn) \leq Q(\cn)\cdot (\cn+1)$ (recall that $Q(\cn) > 0$ and $\cn > 1$). So, $G$ is a subset of the set in output of the algorithm, as required. Below, we give more information on how to implement each line of the algorithm (starting for simplicity with line~\ref{algo-lambda-trans-2}), showing its effectiveness. We will often rely on the following claim:

  \begin{claim}\label{claim:lu-p}
    Given an integer polynomial $p(x)$, one can compute 
    \begin{enumerate}
      \item\label{claim:lu-p:i1} a rational number $\ell'$ such that $0 <
      \ell' \leq \abs{p(\cn)}$;
      \item\label{claim:lu-p:i2} a rational number $u'$ such that $\abs{p(\cn)}
      \leq u'$.
    \end{enumerate}
  \end{claim}

  \begin{proof}
    Recall that $\abs{T_0}+1$ is an upper bound to the transcendental number $\cn > 1$. 
    By iterating over the natural numbers, we find the smallest $L \in \N$ 
    such that 
    $\abs{q(T_M)} > \frac{1}{2^L} \geq \abs{q(\cn) - q(T_M)}$, 
    where $M \coloneqq L + \ceil{\log(\height(p)+1)} + 2 \deg(p) \cdot \ceil{\log(\abs{T_0}+2)}$.
    The existence of such an $L$ follows from~\Cref{lemma:approx-univ-polynomial} (for the second inequality) together with the fact that $q(\cn) \neq 0$, and so $\lim_{n \to \infty} \abs{q(T_n)} \neq 0$ whereas $\lim_{m \to \infty} \frac{1}{2^m} = 0$
    (which implies the first inequality).
    For~\Cref{claim:lu-p:i1}, take~$\ell'$ to be $\abs{q(T_M)} - \frac{1}{2^L}$.
    For~\Cref{claim:lu-p:i2}, take~$u'$ to be $\abs{q(T_M)} + \frac{1}{2^L}$.
  \end{proof}
  
  Here is the argument for the effectiveness of the algorithm:
  \begin{itemize}
      \item \textit{line~\ref{algo-lambda-trans-2}.} 
      In general, to evaluate the sign of a polynomial~$p$ at $\cn$, one
      relies on the fact that $p(\cn)$ must be different from~$0$ (because
      $\cn$ is transcendental). Then, we can rely on the fast-convergence
      sequence of rational numbers~$T_0,T_1,\dots$ to find $n \in \N$ such that
      $|p(\cn) - p(T_{n})|$ is guaranteed to be less than $|p(T_{n})|$. The
      sign of $p(\cn)$ then agrees with the sign of~$p(T_{n})$, and the latter
      can be easily computed.
      \item \textit{line~\ref{algo-lambda-trans-1}.} By definition of $\mathcal{Q}$, the fact that such a set $\mathcal{Q}'$ can be computed follows from the fact that we can compute an upper bound, for every $j \leq \ell \in [1..n]$ and $r \in [j..\ell-1]$, to the maximum $g_r$ such that $\cn^{g_r} \in \big[1\,,\,\frac{\abs{q_{r+1}(\cn)}+\cdots+\abs{q_n(\cn)}}{\abs{Q_r(\cn)}}\big]$, where $Q_r$ is any polynomial that can be defined in terms of $g_{j},\dots,g_{r-1}$ following the recursive definition of Item~\ref{mcQ-polynomials}.
      It suffices to find a positive lower bound $\ell' \in \Q$ to $\abs{Q_r(\cn)}$, 
      as well as upper bounds $u_i' \in \Q$ to every $\abs{q_i(\cn)}$, with $i \in [r+1..n]$. The rationals $\ell',u_{r+1}',\dots,u_n'$ are computed following~\Cref{claim:lu-p}. Then, 
      $\frac{\abs{q_{r+1}(\cn)}+\cdots+\abs{q_n(\cn)}}{\abs{Q_r(\cn)}} \leq \frac{u_{r+1}'+\cdots+u_{n}'}{\ell'} \leq \ceil{\frac{u_{r+1}'+\cdots+u_{n}'}{\ell'}} \eqqcolon D \in \N$.
      To bound $g_r$ it now suffices to find the largest integer power of $\cn$ that is less or equal to~$D$. This can be done using the algorithm for the sign evaluation problem described for line~\ref{algo-lambda-trans-2}: iteratively, starting at $i = 0$, we test whether $\cn^i - D$ is non-positive; we increase $i$ by $1$ if this test is successful, and return $i-1$ otherwise. 
      \item \textit{line~\ref{algo-lambda-trans-3}.} 
      Recall that $Q(\cn)$ is positive and $\cn > 1$.
      Following~\Cref{claim:lu-p}, we can find positive rationals $\ell',u_1',u_2'$ 
      such that $\ell' < Q(\cn) \cdot (\cn-1)$, $\cn^2 \leq u_1'$ 
      and $Q(\cn) \cdot (\cn+1) \leq u_2'$.
      The first two inequalities imply $0 < \frac{\ell'}{u_1'} < \frac{Q(\cn) \cdot (\cn-1)}{\cn^2}$.
      We can then take $\ell \coloneqq \frac{\ell'}{u_1'}$ and $u \coloneqq u_2'$. 
      Note that we have $\frac{Q(\cn) \cdot (\cn-1)}{\cn^2} < Q(\cn) \cdot (\cn + 1)$, 
      and therefore $\ell < u$.
      \item \textit{line~\ref{algo-lambda-trans-4}.} Given $\ell$ and $u$, we can compute a superset of those $\beta \in \Z$ such that $\ell \leq \cn^\beta \leq u$ by iterated calls to the algorithm for the sign evaluation problem. First, we can extend the interval $[\ell,u]$ to always include $1$: if $\ell > 1$, update $\ell$ to $1$; if $u < 1$, update $u$ to $1$.
      This ensures $\cn^0 \in [\ell,u]$. We can then find the largest~$\cn^i$ that is less or equal to~$u$ by testing whether $\cn^i - u$ is non-positive for increasing $i$ starting at $0$, as we did in line~\ref{algo-lambda-trans-1} for finding the largest integer powers less or equal to~$D$. Similarly, we can find the smallest integer power~$\cn^{-i}$ that is greater or equal than $\ell$ by testing whether $1 - \ell \cdot \cn^{i}$ is non-negative for increasing $i$ starting at $0$.
  \end{itemize}

  \paragraph{\textit{{Case: $\cn$ has a polynomial root barrier \textup{(}Item~\eqref{lemma:lambda-close-to-variable:i2}\textup{)}}}}
  Assume now $\cn$ to have a polynomial root barrier 
  $\sigma(d,h) \coloneqq c \cdot {(d+\ceil{\ln(h)})}^k$, 
  with $c,k \in \N_{\geq 1}$.
  In this case, we need to provide an explicit set~$G$.
  We do so by analysing the polynomials 
  $Q_j,\dots,Q_\ell$
  and the natural numbers 
  $g_j,\dots,g_{\ell-1}$ 
  introduced in~\Cref{claim1:lambda-close-to-variable}
  and used in the definition of the set $\mathcal{Q}$,
  and by providing both lower and upper bounds for the positive numbers
  $Q_\ell(\cn)$,
  $Q_{\ell}(\cn) \cdot (\cn-1)$ and 
  $Q_{\ell}(\cn) \cdot (\cn+1)$.
  These bounds entail bounds on the integers occurring in the set $B$
  introduced at the end of the proof of the first statement of the lemma.

  We start by providing a bound on the degrees and heights
  of $Q_j,\dots,Q_\ell$:
  \begin{claim}\label{claim2:lambda-close-to-variable}
    For every $r \in [j..\ell]$,
    $\deg(Q_r) \leq D + \sum_{s=j}^{r-1} g_s$
    and $\height(Q_r) \leq (r-j+1) \cdot H$.
  \end{claim}
  \begin{proof}
    By a straightforward induction on $r$, using the definitions of
    $Q_j,\dots,Q_\ell$.
  \end{proof}
  In~\Cref{claim2:lambda-close-to-variable},
  note that $(r-j+1) \cdot H \leq n \cdot H$, and therefore we obtain a bound on $\height(Q_\ell)$ that does not depend on the previous $Q_r$. 
  Below, we prove a similar bound for $\deg(Q_\ell)$. 
  \begin{claim}\label{claim3:lambda-close-to-variable}
    The degree of $Q_\ell$ is bounded as follows:
    \[ 
      \deg(Q_\ell) \leq {\left(\frac{2c \cdot D \cdot \ln(H)}{\ln(1 + \frac{1}{e^c})}\right)}^{5nk^{n+1}}.
    \]
  \end{claim}
  \begin{proof}
    By Property~\ref{claim1:lambda-close-to-variable:A},
    $Q_j(\cn),\dots,Q_\ell(\cn)$ are non-zero.
    Then, \Cref{claim2:lambda-close-to-variable}
    and the fact that $\sigma$ is a root barrier for $\cn$ entail
    \begin{equation}
      \label{inequality1:lambda-close-to-variable}
      \ln \abs{Q_r(\cn)} \geq - c \cdot {\Big(D + \ceil{\ln(n \cdot H)} +
      \sum\nolimits_{s=j}^{r-1} g_s \Big)}^k.
    \end{equation}
    Analogously, since $\cn > 1$, we can consider the polynomial $x - 1$ in order to obtain a lower bound on $\cn$, via the root barrier $\sigma$. We obtain 
    \begin{equation}
      \label{inequality1:lower-bound-cn}
      \cn \geq 1 + \frac{1}{e^c}.
    \end{equation}
    Given $r \in [1,n]$, we also have
    \begin{equation}
      \label{inequality2:lambda-close-to-variable}
      \abs{q_r(\cn)} \leq 
        H \cdot \sum_{i=0}^d \cn^i \leq H \cdot D \cdot \cn^{D}.
    \end{equation}
    We use Inequalities~\eqref{inequality1:lambda-close-to-variable}
    and~\eqref{inequality2:lambda-close-to-variable} to bound the values
    of $g_j,\dots,g_{\ell-1}$. 
    By Property~\ref{claim1:lambda-close-to-variable:B}, 
    $\cn^{g_r} \leq
      \frac{\abs{q_{r+1}(\cn)}+\cdots+\abs{q_n(\cn)}}{\abs{Q_r(\cn)}}$, 
    and therefore
    {\allowdisplaybreaks
    \begin{align*}
      g_r\leq{}&
        \log_{\cn}(\abs{q_{r+1}(\cn)}+\cdots+\abs{q_n(\cn)})
        - \log_{\cn}(\abs{Q_j(\cn)})
      \\
      \leq{}& \frac{1}{\ln(\cn)} 
        (\ln(\abs{q_{r+1}(\cn)}+\cdots+\abs{q_n(\cn)})
        - \ln(\abs{Q_r(\cn)}))
      & 
      \hspace{0.95cm}
      \text{change of base}
      \\
      \leq{}& \frac{1}{\ln(\cn)} 
        (\ln(H \cdot D \cdot \cn^{D} \cdot n) 
        - \ln(\abs{Q_r(\cn)}))
      &
      \text{by Inequality~\eqref{inequality2:lambda-close-to-variable}}
      \\
      \leq{}& \frac{1}{\ln(\cn)}  
        \Big(\ln( H \cdot D \cdot \cn^{D} \cdot n) 
        + c \cdot \big(D + \ceil{\ln(n H)} 
        + \sum_{s=j}^{r-1} g_s \big)^k \Big)
      &
      \text{by~Inequality~\eqref{inequality1:lambda-close-to-variable}}
      \\
      \leq{}& \frac{1}{\ln(\cn)} 
        \Big( D \cdot \ln(\cn) 
        + \ln(nH D)
        + c \big(D + \ceil{\ln(n H)} 
        + \sum_{s=j}^{r-1} g_s \big)^k \Big)\\
      \leq{}& \frac{1}{\ln(\cn)} 
        \Big(D \cdot \ln(\cn) 
        + 2c \big(D + \ceil{\ln(n H)} 
        + \sum_{s=j}^{r-1} g_s \big)^k \Big)
      &
      \hspace{-3.5cm}
      \text{as $D + \ceil{\ln(nH)} \geq \ln(nH D)$}    
      \\
      \leq{}& \frac{1}{\ln(\cn)}  
        \Big(D \frac{\ln(\cn)}{\ln(1 + \frac{1}{e^c})} 
        + 2 c \cdot \big(D + \ceil{\ln(n H)} 
        + \sum_{s=j}^{r-1} g_s \big)^k \Big)
      &
      \hspace{-3.5cm}
      \text{as $\frac{1}{\ln(1+\frac{1}{e^c})} > 1$}\\
      \leq{}& \frac{1}{\ln(\cn)}  
        \Big(D \frac{\ln(\cn)}{\ln(1 + \frac{1}{e^c})} 
        \cdot 2 c \cdot \big(D + \ceil{\ln(n H)} 
        + \sum_{s=j}^{r-1} g_s \big)^k \Big)
      &
      \hspace{-3cm}
      \text{as $\frac{\ln(\cn)}{\ln(1 + \frac{1}{e^c})} \geq 1$}\\[-5pt]
      &&
      \hspace{-3.5cm}
      \text{by~\eqref{inequality1:lower-bound-cn}, and $D \geq 2$}
      \\
      \leq{}& \frac{2cD}{\ln(1 + \frac{1}{e^c})} 
        \big(D + \ceil{\ln(n H)} + \sum_{s=j}^{r-1} g_s \big)^k.
    \end{align*}
    }
      Let us inductively define the following numbers $B_j,\dots,B_\ell$:
      \begin{align*} 
        B_j &\coloneqq 
          \frac{2cD}{\ln(1 + \frac{1}{e^c})} 
          \big(D + \ceil{\ln(n H)} \big)^k\\
        B_r &\coloneqq 
          \frac{2cD}{\ln(1 + \frac{1}{e^c})} 
          \big(D + \ceil{\ln(n H)} + \sum_{s=j}^{r-1} B_s \big)^k
        &\text{for } r \in [j+1..\ell].
      \end{align*}
      From the previous inequalities, $g_r \leq B_r$ for every $r \in [j..\ell]$.
      Moreover, observe that, since $\frac{1}{\ln(1+\frac{1}{e^c})} > 1$,  
      for every $r \in [j+1..\ell]$ we have $B_r \geq D + \ceil{\ln(n H)} + \sum_{s=j}^{r-1} B_s$, and therefore $B_\ell \geq \deg(Q_\ell)$.
      We proceed by bounding $B_r$ with respect to 
      $B_{r-1}$:
      \begin{align*}
          B_r & = \frac{2cD}{\ln(1 + \frac{1}{e^c})} 
            \big(D + \ceil{\ln(n H)} + \sum_{s=j}^{r-1} B_s \big)^k
          \\
          & = \frac{2cD}{\ln(1 + \frac{1}{e^c})} 
            \big(B_{r-1} + D + \ceil{\ln(n H)} + \sum_{s=j}^{r-2} B_s \big)^k\\
          &\leq \frac{2cD}{\ln(1 + \frac{1}{e^c})} \big(2 \cdot B_{r-1} \big)^k
          \leq \frac{2^{k+1}cD}{\ln(1 + \frac{1}{e^c})}(B_{r-1})^k.
      \end{align*}
      Let $A \coloneqq \frac{2^{k+1}cD}{\ln(1 + \frac{1}{e^c})}$. 
      Hence, $B_r \leq A \cdot (B_{r-1})^k$ for every $r \in [j+1..\ell]$.
      We show by induction that $B_r \leq A^{\max(r-j,k^{r-j}-1)}B_j^{k^{r-j}}$
      for every $r \in [j..\ell]$.
      \begin{description}
        \item[base case: $r = j$] In this case the inequality is trivially satisfied.
        \item[induction step: $r > j$] We divide the proof depending on whether $k = 1$.
        \begin{itemize}
        \item If $k = 1$, then $\max(r-j,k^{r-j}-1) = r-j$ and we need to prove that $B_r \leq A^{r-j}B_j$. 
        Because $k = 1$, the induction hypothesis simplifies to $B_{r-1} \leq A^{r-1-j}B_j$, and the bound $B_r \leq A \cdot (B_{r-1})^k$ becomes $B_r \leq A \cdot B_{r-1}$. Hence, $B_r \leq A^{r-j}B_j$  follows.

        \item If $k \geq 2$, then $\max(r-j,k^{r-j}-1) = k^{r-j}-1$ and therefore we need to prove that $B_r \leq A^{k^{r-j}-1}B_j^{k^{r-j}}$.
        By induction hypothesis $B_{r-1} \leq A^{\max(r-1-j,k^{r-1-j}-1)}B_j^{k^{r-1-j}}$. Here note that if $r-1=j$ then $r-1-j = 0 = k^{r-1-j}-1$, and otherwise $\max(r-j,k^{r-j}-1) = k^{r-j}-1$; 
        hence $B_{r-1} \leq A^{k^{r-1-j}-1}B_j^{k^{r-1-j}}$.
        Then, 
        \begin{align*}
          B_r &\leq A \cdot (B_{r-1})^k\\
          & \leq A \cdot (A^{k^{r-1-j}-1}B_j^{k^{r-1-j}})^k
          &\text{by induction hypothesis}\\
          & = A^{k^{r-j}-k+1}B_j^{k^{r-j}}\\
          & \leq A^{k^{r-j}-1}B_j^{k^{r-j}}
          &\text{since $k \geq 2$}.
        \end{align*} 
        \end{itemize}
      \end{description}
      We can now compute the aforementioned bound on $\deg(Q_\ell)$:
      \begin{align*}
          \deg(Q_\ell) \leq{}& B_\ell
          \leq{} A^{\max(n,k^{n}-1)}B_j^{k^{n}}
          & \hspace{-1cm}\text{remark: $\ell - j < n$}
          \\
          \leq{}& 
            \left(\frac{2^{k+1}cD}{\ln(1 + \frac{1}{e^c})}\right)^{\max(n,k^{n}-1)}
          \cdot
            \left(\frac{2cD}{\ln(1 + \frac{1}{e^c})} 
            \big(D + \ceil{\ln(n H)} \big)^k\right)^{k^n}
          &\hspace{-0.7cm}\text{def.~of $A$ and $B_j$}
          \\
          \leq{}&  
          2^{(k+1)(n+k^n)+k^n}\left(\frac{cD}{\ln(1 + \frac{1}{e^c})}\right)^{n+2k^n} \hspace{-7pt}\big(D + \ceil{\ln(n H)} \big)^{k^{n+1}}\\
          \leq{}& 
          2^{(k+1)(n+k^n)+k^n}\left(\frac{c}{\ln(1 + \frac{1}{e^c})}\right)^{n+2k^n} \hspace{-9pt} D^{n+2k^n+k^{n+1}}\ln(n H)^{1+k^{n+1}}
          \\
          &&\hspace{-1.4cm}\text{since $D \geq 2$ and $H \geq 8$}\\
          \leq{}& 
          \left(\frac{2cD\ln(H)}{\ln(1 + \frac{1}{e^c})}\right)^{5nk^{n+1}} 
          &
          \hspace{-1.6cm}\text{since $\ln(nH) \leq \ln(H)^{2n}$,}\\[-7pt]
          &&\hspace{-5.5cm}\text{and then all exponents are bounded by $5nk^{n+1}$.}
      \end{align*}
      This concludes the proof of the claim.
  \end{proof}

    We are now ready to derive an explicit characterisation for the set $G$.
    Consider the sets $\mathcal{Q}$ and $B$ defined during the proof of the first statement of the lemma. In particular,
    \[ 
        B \coloneqq \left\{ \beta \in \Z : \text{there is } Q \in \mathcal{Q} \text{ such that } \cn^{\beta} \in \Big\{{\textstyle\lambda(Q(\cn)), \frac{\lambda(Q(\cn) \cdot (\cn-1))}{\cn}, \frac{\lambda(Q(\cn) \cdot (\cn+1))}{\cn}}\Big\}
        \right\}.
    \]
    and $G$ can be set to be any finite set satisfying $[\min B..\max B] \subseteq G$. We also recall that every polynomial $Q$ in the set $\mathcal{Q}$ is such that the numbers $Q(\cn)$, $Q(\cn) \cdot (\cn-1)$ and $Q(\cn) \cdot (\cn+1)$ are all strictly positive; and so, in particular, for these numbers $\lambda$ is well-defined.
    By definition of $\mathcal{Q}$ and from Claims~\ref{claim2:lambda-close-to-variable} 
    and~\ref{claim3:lambda-close-to-variable}, we deduce that the heights and degrees of the univariate polynomials $Q$, $Q \cdot (x-1)$ and $Q \cdot (x+1)$ are bounded as follows: 
    \begin{align*}
      \height(Q) \leq n \cdot H,
      &&\height(Q(x-1)) \leq 2n \cdot H, 
      &&\height(Q(x-1)) \leq 2n \cdot H,\\ 
      \deg(Q) \leq E,
      &&\deg(Q(x-1)) \leq E+1, 
      &&\deg(Q(x-1)) \leq E+1,
    \end{align*}
    where $E \coloneqq \left(\frac{2cD\ln(H)}{\ln(1 + \frac{1}{e^c})}\right)^{5nk^{n+1}}$. Let $P$ be a number among $Q(\cn)$, $Q(\cn) \cdot (\cn-1)$ and $Q(\cn) \cdot (\cn+1)$.
    An upper bound to $P$ is given by 
    \begin{equation*} 
      P \leq 2n \cdot H \cdot (E+2) \cdot \cn^{E+2},
    \end{equation*}
    whereas a lower bound follows by relying on the root barrier $\sigma$: 
    \begin{equation*} 
      P \geq \frac{1}{e^{c(E+1+\ceil{\ln(2nH)})^k}}.
    \end{equation*}
    Recall that, for every $\alpha > 0$, the definition of $\lambda$ implies $\frac{\alpha}{\cn} < \lambda(\alpha) \leq \alpha$.
    We conclude that, for every integer $\beta \in B$, 
    \begin{equation*} 
      \frac{1}{\cn^2 \cdot e^{c(E+1+\ceil{\ln(2nH)})^k}} \leq \cn^{\beta} \text{ \ and \ } \cn^{\beta} \leq 2n \cdot H \cdot (E+2) \cdot \cn^{E+2}.
    \end{equation*}
    Applying the logarithm base $e$ to both inequalities shows: 
    \begin{equation*} 
      -\ln(\cn^2 \cdot e^{c(E+1+\ceil{\ln(2nH)})^k}) \leq \beta \cdot \ln(\cn) \text{ \ and \ } \beta \cdot \ln(\cn) \leq \ln(2n \cdot H \cdot (E+2) \cdot \cn^{E+2}).
    \end{equation*}
    This implies that taking $G$ to be the interval 
    $[-\frac{\ln(\cn^2 \cdot e^{c(E+1+\ceil{\ln(2nH)})^k})}{\ln(\cn)}..
      \frac{\ln(2n \cdot H \cdot (E+2) \cdot \cn^{E+2})}{\ln(\cn)}]$
    suffices. In the statement of the lemma we provide however a slightly larger set with an easier-to-digest bound, that is, $[-L..L]$, where $L \coloneqq \left(2^{3c}D\ceil{\ln(H)}\right)^{6nk^{3n}}$.  
    To conclude the proof, below we show that
    $[-\frac{\ln(\cn^2 \cdot e^{c(E+1+\ceil{\ln(2nH)})^k})}{\ln(\cn)}..
    \frac{\ln(2n \cdot H \cdot (E+2) \cdot \cn^{E+2})}{\ln(\cn)}] \subseteq [-L..L]$.

    \begin{description}
      \item[upper bound] We show that $\frac{\ln(2n \cdot H \cdot (E+2) \cdot \cn^{E+2})}{\ln(\cn)} \leq L$:
      {\allowdisplaybreaks
      \begin{align*} 
        &  \frac{\ln(2n \cdot H \cdot (E+2) \cdot \cn^{E+2})}{\ln(\cn)}\\ 
        \leq{}& \frac{\ln(2n \cdot H \cdot (E+2))}{\ln(1+\frac{1}{e^c})} + 2 + E
        & \text{by properties of $\ln$, and $\cn \geq 1 + \frac{1}{e^c} $}\\
        \leq{}& \frac{2 \cdot \ln(2n \cdot H)}{\ln(1+\frac{1}{e^c})} + \frac{\ln(E+2)}{\ln(1+\frac{1}{e^c})} + E
        & \text{we have $\frac{\ln(2n \cdot H)}{\ln(1+\frac{1}{e^c})} \geq 2$}\\
        \leq{}& 2 \cdot E + \frac{\ln(E+2)}{\ln(1+\frac{1}{e^c})}
        & \text{we have $\frac{2 \cdot \ln(2n \cdot H)}{\ln(1+\frac{1}{e^c})} \leq E$}\\
        \leq{}& 2 \cdot E + \frac{E}{\ln(1+\frac{1}{e^c})}
        & \text{we have $E \geq \ln(E+2)$ since $E \geq 2$}\\
        \leq{}& \frac{3 \cdot E}{\ln(1+\frac{1}{e^c})}
        & \text{as $\frac{1}{\ln(1+\frac{1}{e^c})} \geq 1$}\\
        \leq{}& \frac{3}{\ln(1+\frac{1}{e^c})} \left(\frac{2cD\ln(H)}{\ln(1 + \frac{1}{e^c})}\right)^{5nk^{n+1}}
        & \text{def.~of~$E$}\\
        \leq{}& \left(\frac{2cD\ln(H)}{\ln(1 + \frac{1}{e^c})}\right)^{6nk^{n+1}}
        & \text{as $\frac{2cD\ln(H)}{\ln(1 + \frac{1}{e^c})} \geq \frac{3}{\ln(1+\frac{1}{e^c})}$}\\
        \leq{}& \left(2c \cdot 2^{2c}D\ln(H)\right)^{6nk^{n+1}}
        & \text{as $\frac{1}{\ln(1 + \frac{1}{e^c})} \leq 2^{2c}$}\\
        \leq{}& \left(2^{3c}D\ln(H)\right)^{6nk^{n+1}} \leq{} L
        & \text{since $2c \leq 2^c$ and by def.~of $L$}
      \end{align*}
      }
      {\allowdisplaybreaks
      \item[lower bound] We show that $\frac{\ln(\cn^2 \cdot e^{c(E+1+\ceil{\ln(2nH)})^k})}{\ln(\cn)} \leq L$ (so, $-L \leq -\frac{\ln(\cn^2 \cdot e^{c(E+1+\ceil{\ln(2nH)})^k})}{\ln(\cn)}$):
      \begin{align*} 
        & \frac{\ln(\cn^2 \cdot e^{c(E+1+\ceil{\ln(2nH)})^k})}{\ln(\cn)}\\
        \leq{}& 2 + \frac{c(E+1+\ceil{\ln(2nH)})^k}{\ln(\cn)} 
        & \text{by properties of $\ln$}\\ 
        \leq{}& 2 + \frac{c(E+1+\ceil{\ln(2nH)})^k}{\ln(1+\frac{1}{e^c})} 
        & \text{since $\cn \geq 1 + \frac{1}{e^c}$}\\ 
        \leq{}& 2 + \frac{c(E+2+\ln(H)^{4n})^k}{\ln(1+\frac{1}{e^c})} 
        & \text{as $\ceil{\ln(2nH)} \leq 1 + \ln(H)^{4n}$}\\ 
        \leq{}& 2 + \frac{c(2E)^k}{\ln(1+\frac{1}{e^c})} 
        & \text{since $E \geq 2 + \ln(H)^{4n}$}\\ 
        \leq{}& \frac{2^{k+1}c}{\ln(1+\frac{1}{e^c})} \cdot E^k 
        & \text{since $\frac{c(2E)^k}{\ln(1+\frac{1}{e^c})} \geq 2$}\\ 
        \leq{}& \frac{2^{k+1}c}{\ln(1+\frac{1}{e^c})}\left(\frac{2cD\ln(H)}{\ln(1 + \frac{1}{e^c})}\right)^{5nk^{n+2}}
        & \text{def.~of~$E$}\\
        \leq{}& \left(\frac{2cD\ln(H)}{\ln(1 + \frac{1}{e^c})}\right)^{6nk^{n+2}}
        & \text{note: $D \geq 2$}\\
        \leq{}& \left(2^{3c}D\ln(H)\right)^{6nk^{n+2}} \leq{} L
        & 
        \text{as in the previous case, } \frac{2c}{\ln(1+\frac{1}{e^c})} \leq 2^{3c}
      \end{align*}
      }
    \end{description}
    This concludes the proof of \Cref{lemma:lambda-close-to-variable}.
  \end{proof}

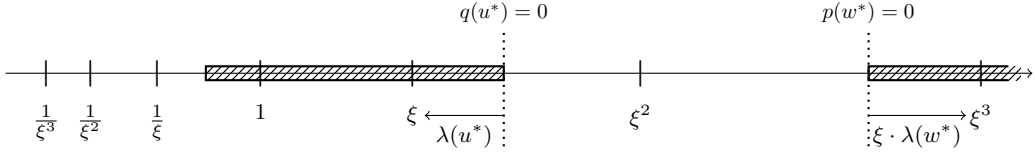
\begin{figure}
  \begin{tikzpicture}[scale=4,minimum size=2pt]
    \node[label=-90:$\frac{1}{\cn^3}$] (m3) at (0.295,0) {\textbf{|}};
    \node[label=-90:$\frac{1}{\cn^2}$] (m2) at (0.44,0) {\textbf{|}};
    \node[label=-90:$\frac{1}{\cn}$] (m1) at (0.66,0) {\textbf{|}};
    \node[label=-90:{\footnotesize$1$}] (o) at (1,0) {\textbf{|}};
    \node[label=-90:{\footnotesize$\cn$}] (p1) at (1.5,0) {\textbf{|}};
    \node[label=-90:{\footnotesize$\cn^2$}] (p2) at (2.25,0) {\textbf{|}};
    \node[label=-90:{\footnotesize$\cn^3$}] (p3) at (3.37,0) {\textbf{|}};

    \node (l) at (0.13,0) {}; \node (r) at (3.57,0) {}; \draw[->] (l) -- (r);

    \fill[draw=black,thick,pattern=north east lines] (0.82,-0.023) rectangle
    (1.8,0.023); \fill[thick,pattern=north east lines] (3,-0.023) rectangle
    (3.5,0.023); \draw[thick] (3.46,0.023) -- (3,0.023) -- (3,-0.023) --
    (3.46,-0.023);

    \node (z1) at (1.8,0.2) {\scalebox{0.8}{$q(u^*) = 0$}}; 
    \node (z2) at (3,0.2) {\scalebox{0.8}{$p(w^*) = 0$}}; 
    \draw[dotted,thick] (z1) -- (1.8,-0.25); 
    \draw[dotted,thick] (z2) -- (3,-0.25);

    \draw (1.8,-0.14) edge[->] node[below]{\footnotesize{$\lambda(u^*)$}}
    (1.54,-0.14); \draw (3,-0.14) edge[->] node[below]{\footnotesize{$\cn \cdot
    \lambda(w^*)$}} (3.32,-0.14);

    \node at (1.5,-0.32) {};
  \end{tikzpicture}
  \caption{High-level idea of the quantifier elimination procedure. Dashed
  rectangles are intervals corresponding to the set of solutions over $\R$ of a
  (univariate) formula~$\phi$. To search for a solution over~$\ipow{\cn}$, it
  suffices to look for elements of $\ipow{\cn}$ that are close to the endpoints
  of these intervals. At each endpoint, a polynomial in $\phi$ must evaluate to
  zero (since around endpoints the truth of $\phi$ changes), so it suffices to
  look for integer powers of $\cn$ that are close to roots or polynomials in
  $\phi$.}
  \label{figure:qe-idea}

\end{figure}

We now give the high-level idea of the quantifier elimination procedure, which
is also depicted in~\Cref{figure:qe-idea}. Let $\psi(u,\vec y)$ be a
quantifier-free formula of $\exists\ipow{\cn}$, and $u$ be the variable we
want to eliminate. Suppose to evaluate the variables $\vec y$ with elements in
$\ipow{\cn}$, hence obtaining a univariate formula $\phi(u)$. The set of all
solutions \emph{over the reals} of $\phi(u)$ can be decomposed into a finite set
of disjoint intervals. (This follows from the o-minimality of the FO
theory of the reals~\cite[Chapter~3.3]{marker2002model}.) \Cref{figure:qe-idea}
shows these intervals as dashed rectangles. Around the endpoints of these
intervals the truth of $\phi$ changes, and therefore for each such
endpoint~$u^*$ there must be a non-constant polynomial in $\phi$ such that
$q(u^*) = 0$. If an interval with endpoint $u^* \in \R_{>0}$ contains an element
of $\ipow{\cn}$, then it contains one that is ``close'' to $u^*$:
\begin{itemize} 
  \item If $u^* \in \R_{>0}$ is the \emph{right endpoint} of an interval, at
  least one among $\lambda(u^*)$ and $\cn^{-1} \cdot \lambda(u^*)$ belongs to
  the interval. The first case is depicted in \Cref{figure:qe-idea}. The latter
  case occurs when~$u^*$ belongs to $\ipow{\cn}$ but not to the interval.
  \item If $u^*$ is the \emph{left endpoint} of an interval, then $\cn
  \cdot \lambda(u^*)$ of $\lambda(u^*)$ belongs to the interval.
  The latter case occurs when $u^*$ belongs to $\ipow{\cn}$ and also to the interval.
\end{itemize}
Note that we have restricted the endpoint $u^*$ to be
positive, so that $\lambda(u^*)$ is well-defined. The only case were we may not
find such an endpoint is when $\phi(u)$ is true for every $u > 0$.
But finding an element of $\ipow{\cn}$ is in this case simple: we
can just pick~$1 \in \ipow{\cn}$. Since $u^*$ is positive, we can split it into $x^* \cdot v^*$
with $x^* \in \ipow{\cn}$ and $1 \leq v^* < \cn$ (so, $\lambda(u^*) = x^*$). To
obtain quantifier elimination, our goal is then to characterise, symbolically as
a finite set of polynomials~$\tau(\vec{y})$, the set of all possible values for
$x^*$. In this way, we will be able to eliminate the variable~$u$ by considering the
polynomials $\cn^{-1} \cdot \tau(\vec y)$, $\tau(\vec y)$ and $\cn \cdot \tau(\vec{y})$ representing the
integer powers of $\cn$ that are ``close'' to endpoints. The following lemma
provides a first step towards the required characterisation given in 
the subsequent \Cref{lemma:last-lambda-lemma}.

\begin{restatable}{lemma}{LemmaAuxOne}\label{aux-lemma:one}
  Fix $\cn > 1$.
  Let $r(x,\vec y) \coloneqq \sum_{i=0}^n p_i(\cn,\vec y) \cdot x^i$, where 
  each $p_i(z,\vec y)$ is an integer polynomial in variables $\vec y$ and $z$.
  Then, the formula
  \begin{equation*}
      \ipow{\cn}(x) \wedge r(x,\vec y)=0 \land \Big(\bigvee_{i=0}^n p_i(\cn,\vec y) \neq 0\Big) \implies 
      \bigvee_{\ell = 1}^m \theta_\ell (x, \vec y)\,,
  \end{equation*}
  is a tautology of $\exists\R(\ipow{\cn})$, where each $\theta_\ell$ is a formula 
  of the form either 
  \begin{align*} 
    & x^{k-j} = \textstyle\frac{\cn^s \cdot \lambda(-p_j(\cn,\vec y))}
    {\lambda(p_k(\cn,\vec y))} \land p_j(\cn,\vec y) < 0 \land p_k(\cn,\vec y) > 0 \qquad\text{or}\\
    & x^{k-j}= \textstyle\frac{\cn^s \cdot \lambda(p_j(\cn,\vec y))}{\lambda(-p_k(\cn,\vec y))} \land p_j(\cn,\vec y) > 0 \land p_k(\cn,\vec y) < 0\,,
  \end{align*}
  with $0\leq j < k \leq n$, $s \in [-g..g]$ with $g \coloneqq 1+\ceil{\log_{\cn}(n)}$,
  and $m\leq n^2\cdot \left(2 \cdot \ceil{\log_{\cn}(n)} + 3\right)$.
\end{restatable}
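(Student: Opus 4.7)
Fix values $x^* \in \ipow{\cn}$ and $\vec y^*$ satisfying the antecedent, and abbreviate $c_i \defeq p_i(\cn,\vec y^*)$; then $\sum_{i=0}^{n} c_i (x^*)^i = 0$ while not every~$c_i$ vanishes. Since $x^* > 0$, the two index sets $I^+ \defeq \{i : c_i > 0\}$ and $I^- \defeq \{i : c_i < 0\}$ must both be non-empty (otherwise the sum would be strictly signed). Let $a \in I^+$ maximise $c_a (x^*)^a$ and $b \in I^-$ maximise $|c_b|(x^*)^b$, and assume WLOG $a < b$ (the symmetric case $a > b$ will yield the first form of $\theta$). Set $j \defeq a$ and $k \defeq b$.

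Since the positive and negative parts of the sum cancel, and each side of the identity $\sum_{i \in I^+} c_i (x^*)^i = \sum_{i \in I^-} |c_i|(x^*)^i$ lies between its dominant term and $n$ times it (using $|I^\pm| \leq n$), we obtain $\tfrac{1}{n}\,|c_k|(x^*)^k \leq c_j(x^*)^j \leq n\,|c_k|(x^*)^k$, whence
\[
  (x^*)^{k-j} \cdot \tfrac{|c_k|}{c_j} \in \bigl[\tfrac{1}{n},\, n\bigr].
\]
Now observe that $X \defeq (x^*)^{k-j} \cdot \lambda(-c_k)/\lambda(c_j)$ is a ratio of three elements of $\ipow{\cn}$, and is therefore itself an integer power of~$\cn$. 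Combining the previous estimate with the elementary bound $\lambda(\alpha)/\alpha \in (\tfrac{1}{\cn},1]$ valid for every $\alpha > 0$, one checks that $X \in (\tfrac{1}{n\cn},\, n\cn)$, so $X = \cn^s$ for some integer~$s$ with $|s| \leq 1 + \ceil{\log_{\cn}(n)} = g$. Rearranging yields $(x^*)^{k-j} = \cn^s \cdot \lambda(c_j)/\lambda(-c_k)$, which (together with $c_j > 0$ and $c_k < 0$) matches the second form of $\theta$; the symmetric case $a > b$ reproduces the first form after relabelling $j \leftrightarrow k$.

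Since the indices, the sign pattern, and the shift~$s$ are not known \emph{a priori}, we form the required tautology by taking the disjunction $\bigvee_{\ell} \theta_\ell$ over all pairs $0 \leq j < k \leq n$, both sign configurations, and the $2g+1$ admissible integer values of~$s$, which yields the claimed bound on~$m$. The main technical care lies precisely in the last estimation step: one must track the factors $\lambda(\alpha)/\alpha$ tightly enough to ensure $|s| \leq g$, and verify that the symmetric case $a > b$ genuinely reproduces the first form with the signs swapped. Beyond this bookkeeping, the argument is the direct dominant-term comparison sketched above, in the spirit of~\cite[Lemma~3.9]{AvigadY07}.
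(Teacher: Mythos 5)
Your proposal is correct and follows essentially the same strategy as the paper's proof: identify the dominant positive and dominant negative monomials (the paper phrases this as the largest and smallest monomial of $r$, which coincides with your $a \in I^+$ maximising $c_a(x^*)^a$ and $b \in I^-$ maximising $|c_b|(x^*)^b$), use the cancellation $r(x^*,\vec y^*)=0$ together with the bound of at most $n$ non-dominant terms to pin $(x^*)^{k-j}\cdot|c_k|/c_j$ into $[\tfrac{1}{n},n]$, and then pass to the $\lambda$-values using $\lambda(\alpha)/\alpha\in(\tfrac{1}{\cn},1]$ to get $|s|\le g$. The one difference is purely presentational — you decompose indices into $I^+$ and $I^-$ rather than comparing the global max and min monomials and handling the sign via a case split on $k-j\gtrless 0$ followed by a relabelling, as the paper does — and your claimed disjunct count matches the paper's.
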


\begin{proof}
  The proof follows somewhat closely the arguments in~\cite[Lemmas 3.9 and
  3.10]{AvigadY07}. Observe that the lemma is trivially true for $n = 0$,
  as in this case the antecedent of the implication is false
  (from the formulae $r(x,v,\vec y)=0$ and 
  $p_0(\cn,\vec y) \neq 0$). Below, assume $n \geq 1$.

  Pick $x \in \R$ and $\vec y \in R$ making 
  the antecedent of the implication of the formula true, 
  that is, 
  we have $\ipow{\cn}(x)$, $r(x,\vec y) = 0$ and $p_i(\vec y) \neq 0$
  for some $i \in [0,n]$. We
  show that $x$ and $\vec y$ satisfy one of the formulae
  $\theta_1,\dots,\theta_m$.
  
  We can write $r(x,\vec y)$ as $p_k(\cn,\vec y) \cdot x^k + p_j(\cn, \vec y)
  \cdot x^j + r^*(x,\vec y)$ where $p_k(\cn, \vec y) \cdot x^k$ and $p_j(\cn, \vec y) \cdot x^j$ are
  respectively the largest and smallest monomial in $r(x,\vec y)$, and $r^*(x, \vec y)$
  is the sum of all the other monomials.
  Since we are assuming $r(x,\vec y) = 0$ and $p_i(\cn,\vec y) \neq 0$, we
  conclude that $p_k(\cn,\vec y) \cdot x^k > 0$ and $p_j(\cn, \vec y) \cdot x^j < 0$. This also entails
  that $k \neq j$. We have,
  \begin{align} 
    \frac{p_k(\cn,\vec y) \cdot x^k}{\cn}
    &< p_k(\cn,\vec y) \cdot x^k - r(x,\vec y) 
    &\text{since $\cn > 1$ and $r(x,\vec y) = 0$} \notag\\
    &= -p_j(\cn,\vec y) \cdot x^j - r^*(x,\vec y)
    &\text{by def.~of~$p_j$, $p_k$ and $r^*$} \notag\\
    &\leq -n \cdot p_j(\cn,\vec y) \cdot x^j
    &\text{by def.~of~$p_j$, $p_k$ and $r^*$}.
    \label{aux-lemma-1:eq1}
  \end{align}
  Observe that $\ipow{\cn}(x)$ implies $x > 0$, 
  and therefore $p_j(\cn,\vec y) \cdot x^j < 0$ implies $p_j(\cn,\vec y) < 0$.
  From \Cref{aux-lemma-1:eq1} we then obtain $\frac{p_k(\cn,\vec y)}{-p_j(\cn,\vec y)} x^{k-j} \leq n \cdot \cn$.
  Moreover, from $r(x,\vec y) = 0$ we have $-p_j(\cn,\vec y) \cdot x^j \leq n \cdot p_k(\cn,\vec y) \cdot x^k$, i.e., $\frac{1}{n} 
  \leq \frac{p_k(\cn,\vec y)}{-p_j(\cn, \vec y)} \cdot x^{k-j}$, and therefore 
  \[ 
    0 < \cn^{-\ceil{\log_{\cn}(n)}} \leq \cn^{-\log_\cn(n)} = \frac{1}{n} 
    \leq \frac{p_k(\cn,\vec y)}{-p_j(\cn, \vec y)} \cdot x^{k-j} \leq n \cdot \cn = 
    \cn^{1+\log_\cn(n)} \leq \cn^{1+\ceil{\log_{\cn}(n)}}\,.
  \]
  The above chain of inequalities shows that $\frac{-p_j(\cn,\vec y)}{p_k(\cn,\vec y)} \cdot
  \cn^{-\ceil{\log_{\cn}(n)}} \leq x^{k-j} \leq \frac{-p_j(\cn,\vec y)}{p_k(\cn,\vec y)} \cdot
  \cn^{\ceil{\log_{\cn}(n)}+1}$. 
  Since $\lambda$ is a monotonous function, this implies 
  \begin{align*}
    &\lambda\left(\frac{-p_j(\cn,\vec y)}{p_k(\cn,\vec y)} \cdot \cn^{-\ceil{\log_{\cn}(n)}}\right) 
    \leq \lambda(x^{k-j}) \leq \lambda\left(\frac{-p_j(\cn,\vec y)}{p_k(\cn,\vec y)} 
    \cdot \cn^{1+\ceil{\log_{\cn}(n)}}\right).
  \end{align*}
  By definition of $\lambda$, for every $a \in \R$ we have $\frac{a}{\cn} \leq \lambda(a) \leq a$. 
  Moreover, since $x$ is an integer power of $\cn$, $x^{k-j} = \lambda(x^{k-j})$, and therefore the above inequalities can entail 
  \begin{align*}
    &\frac{\lambda(-p_j(\cn,\vec y))}{\lambda(p_k(\cn,\vec y))} \cdot \cn^{-(1+\ceil{\log_{\cn}(n)})} 
    \leq x^{k-j} \leq \frac{\lambda(-p_j(\cn,\vec y))}{\lambda(p_k(\cn,\vec y))} 
    \cdot \cn^{\ceil{\log_{\cn}(n)}+1}
  \end{align*} 
  Let $g \coloneqq
  1+\ceil{\log_{\cn}(n)}$. 
  We conclude that $x^{k-j} = \cn^s \cdot
  \frac{\lambda(-p_j)}{\lambda(p_k)}$, for some integer $s \in [-g..g]$.
  To conclude the proof we analyse two cases, depending on whether $k-j > 0$ (recall: $k\neq j$).
  
  \begin{description}
    \item[case $k-j > 0$] 
      We have  $x^{k-j} = \cn^s \cdot
      \frac{\lambda(-p_j(\cn, \vec y))}{\lambda(p_k(\cn, \vec y))}$, with
      $p_k(\cn,\vec y) > 0$ and $p_j(\cn, \vec y) < 0$. We have thus obtained the first of the two forms in the statement of the lemma.
    \item[case $k-j < 0$] 
      We have $x^{j-k} = \cn^{-s} \cdot \frac{\lambda(p_k(\cn, \vec y))}{\lambda(-p_j(\cn, \vec y))}$
      with $p_j(\cn, \vec y) < 0$ and ${p_k(\cn, \vec y) > 0}$. This corresponds to the second of the two forms in the statement of the
      lemma. For convenience, in the statement we have swapped the symbols $j$ and $k$, and wrote $s$ instead of $-s$ (since both these integers belongs to $[-g..g]$).
      \qedhere
  \end{description}
\end{proof}

\begin{restatable}{lemma}{LemmaLastLambdaLemma}
  \label{lemma:last-lambda-lemma}
  Let $r(x,v,\vec y) \coloneqq \sum_{i=0}^n p_i(\cn,\vec y) \cdot (x \cdot v)^i$, where 
  each~$p_i$ is an integer polynomial,
  $M$ be the set of monomials $\vec y^{\vec \ell}$ occurring in some $p_i$, 
  and ${N \coloneqq \{\vec y^{\vec \ell_1 - \vec \ell_2} : \vec y^{\vec \ell_1},\vec y^{\vec \ell_2} \in M\}}$.~Then,%
  \vspace{-3pt}
  \begin{align*}
      \hspace{-10pt}
      \ipow{\cn}(x) 
      \land 1 \leq v < \cn 
      \land r(x,v,\vec y)=0 
      \land \big(\!\bigvee_{i=0}^n p_i(\cn,\vec y) \neq 0\big)
      \land \bigwedge_{\mathclap{y \textup{\,from\,} \vec y}} \ipow{\cn}(y)
      \ \models\ 
      \hspace{-2pt}
      \bigvee_{(j,g,\vec y^{\vec \ell}) \in F}
      \hspace{-5pt}
      x^j = \cn^{g} \cdot \vec y^{\vec \ell}
  \end{align*}
  holds (in the theory~$\exists\R(\ipow{\cn})$) for some finite set $F \subseteq [1..n] \times \Z \times N$.
  Moreover: 
  \begin{enumerate}[label=\Roman*., ref=\Roman*]
    \item\label{lemma:last-lambda-lemma:i1} If $\cn$ is a computable transcendental number, there is an algorithm computing $F$ from $r$. 
    \item\label{lemma:last-lambda-lemma:i2} If $\cn$ has a root barrier $\sigma(d,h) \coloneqq c \cdot
            (d+\ceil{\ln(h)})^k$, for some $c,k \in \N_{\geq 1}$, then,
            \begin{equation*}
              F \coloneqq [1..n] \times \left[-L..L\right] \times N,
              \qquad\text{where }
              L \coloneqq n\left(2^{4c}D\ceil{\ln(H)}\right)^{6\abs{M} \cdot k^{3\abs{M}}},
            \end{equation*}
              with $H \coloneqq \max\{8,\height(p_i) : i \in [1,n]\}$, 
              and
              $D \coloneqq \max\{\deg(\cn,p_i)+2 : i \in [0,n]\}$.
  \end{enumerate}

\end{restatable}

\begin{proof}
  The lemma is trivially true for $n = 0$,
  as in this case the premise of the entailment is equivalent to $\bot$
  (from the formulae $r(x,v,\vec y)=0$ and 
  $p_0(\cn,\vec y) \neq 0$). Below, assume~$n \geq 1$.

  We start by showing the existence of the finite set $F$ (first statement of the lemma).
  Assume the premise of the entailment 
  of the lemma,~i.e., 
  \begin{equation}
    \label{last-lambda-antecedent}
    \ipow{\cn}(x) 
      \land 1 \leq v < \cn 
      \land \Big(\bigwedge_{i=1}^d \ipow{\cn}(y_i)\Big)
      \land r(x,v,\vec y)=0 
      \land \Big(\bigvee_{i=0}^n p_i(\cn,\vec y) \neq 0\Big)\,,
  \end{equation}
  to be satisfied. 
  We view $r(x,v,\vec y)$ as a polynomial in the 
  variable $x$ with coefficients of the form $p_i(\cn,\vec y) \cdot v^i$.
  By applying~Lemma~\ref{aux-lemma:one}, 
  we deduce that the above formula
  entails a finite disjunction $\bigvee_{u=1}^m \theta_u(x,v,\vec y)$ 
  where the formulae $\theta_u(x,v,\vec y)$ are of the form 
  \begin{equation}\label{last-lambda-eq1}
    x^\mu=\frac{\cn^s\cdot \lambda(\pm p_j(\cn,\vec y)v^j)}
    {\lambda(\mp p_w(\cn,\vec y)v^{w})}
    \land \pm p_j(\cn,\vec y)v^j > 0 
    \land \mp p_w(\cn,\vec y)v^w > 0,
  \end{equation}
  where $\mu\in[1..n]$ and $j,w \in [0..n]$ with $j\neq w$.
  Moreover, the number $m$ of disjuncts $\theta_u$ is bounded 
  by $n^2\cdot (2 \cdot \ceil{\log_{\cn}(n)} + 3)$, 
  and $s \in [-(1+\ceil{\log_{\cn}(n)})..(1+\ceil{\log_{\cn}(n)})]$.

  By definition of $\lambda$, for every $a,b \in \R$, either $\lambda(a \cdot b) = \lambda(a) \cdot \lambda(b)$ or $\lambda(a \cdot b) = \cn \cdot \lambda(a) \cdot \lambda(b)$.
  Moreover, $v^j$ and $v^w$ are positive numbers, and therefore Formula~\eqref{last-lambda-eq1}
  is equivalent to 
  \begin{equation}\label{last-lambda-eq2}
    \bigvee_{t \in \{-1,0,1\}} x^\mu=\frac{\cn^{s+t}\cdot \lambda(\pm p_j(\cn,\vec y)) \cdot \lambda(v^j)}
    {\lambda(\mp p_w(\cn,\vec y)) \cdot \lambda(v^w)}
    \land \pm p_j(\cn,\vec y) > 0 
    \land \mp p_w(\cn,\vec y) > 0.
  \end{equation}
  Next, we bound the terms $\lambda(v^j)$ and $\lambda(v^w)$. 
  Since Formula~\eqref{last-lambda-antecedent} asserts $1 \leq v < \cn$,
  we have $\lambda(v^j) = \cn^{\alpha}$ for some $\alpha \in [0,j-1]$; 
  and similarly $\lambda(v^w) = \cn^{\beta}$ for some $\alpha \in [0,w-1]$.
  Given that $j$ and $w$ belong to $[0..n]$, 
  Formula~\eqref{last-lambda-eq2} (or, equivalently, Formula~\eqref{last-lambda-eq1}) then entails
  \begin{equation}\label{last-lambda-eq3}
    \bigvee_{t=-n}^n
    x^\mu=\frac{\cn^{s+t} \cdot \lambda(\pm p_j(\cn,\vec y))}
    {\lambda(\mp p_w(\cn,\vec y))}
    \land \pm p_j(\cn,\vec y) > 0 
    \land \mp p_w(\cn,\vec y) > 0.
  \end{equation}
  Let $p(\cn,\vec y)$ be a polynomial among $\pm p_j(\cn,\vec y)$ and $\mp p_w(\cn,\vec y)$.
  This polynomial can be seen as having variables in $\vec y$, and having as coefficients polynomial expressions in $\cn$, that is,
  \begin{equation*}
    p(\cn,\vec y)= \sum_{\ell=1}^{\abs{M}} q_{\ell}(\cn) \cdot \vec{y}^{\vec{e}_\ell}.
  \end{equation*}
  where each $\vec{y}^{\vec{e}_\ell}$ is a monomial from $M$.
  Since Formula~\eqref{last-lambda-antecedent} asserts that every variable in $\vec y$ 
  is an integer power of $\cn$, given $\ell \in [1,\abs{M}]$ we can introduce an integer variable $z_\ell$ and set $\cn^{z_\ell} = \vec{y}^{\vec{e}_\ell}$. That is, Formula~\eqref{last-lambda-antecedent} entails the following formula of $\R(\ipow{\cn})$
  \begin{equation}\label{last-lambda-eq4}
    p(\cn, \vec y) > 0 
    \iff \exists z_1 \dots z_{\abs{M}} \in \Z \,
      \Big( \sum_{\ell=1}^{\abs{M}} q_{\ell}(\cn) \cdot \cn^{z_\ell} > 0 \land \bigwedge_{\ell=1}^{\abs{M}} \cn^{z_\ell} = \vec{y}^{\vec{e}_\ell} \Big).
  \end{equation}
  We apply Lemma~\ref{lemma:lambda-close-to-variable} on $\sum_{\ell=1}^{\abs{M}} q_{\ell}(\cn) \cdot \cn^{z_\ell} > 0$: there is a finite set $G_p$ such that 
  \begin{equation*}
    \sum_{\ell=1}^{\abs{M}} q_{\ell}(\cn) \cdot \cn^{z_\ell} > 0
    \implies 
    \bigvee_{g \in G_p} \bigvee_{\ell=1}^{\abs{M}}\lambda\Big(\sum_{\ell=1}^{\abs{M}} q_{\ell}(\cn) \cdot \cn^{z_\ell}\Big)=
    \cn^{g} \cdot \cn^{z_\ell}.
  \end{equation*}
  Then, by Formula~\eqref{last-lambda-eq4}, substituting $\cn^{z_\ell}$ for $\vec{y}^{\vec{e}_\ell}$ 
  we obtain 
  \begin{equation}\label{last-lambda-eq5}
    p(\cn, \vec y) > 0 
    \implies \bigvee_{g \in G_p} \bigvee_{\vec y^{\vec \ell} \in M}\lambda\left(p(\cn,\vec y)\right)=
    \cn^{g} \cdot \vec y^{\vec \ell}.
  \end{equation}
  From Formulas~\eqref{last-lambda-eq3} and~\eqref{last-lambda-eq5}, 
  we conclude that Formula~\eqref{last-lambda-eq1} entails
  \begin{equation}\label{last-lambda-eq6}
    \bigvee_{t=-n}^n \bigvee_{(g_1,g_2) \in \left(G_{\pm p_j} \times G_{\mp p_w}\right)} 
    \bigvee_{\vec y^{\vec \ell_1}, \vec y^{\vec \ell_2} \in M}
    \vec x^\mu = \cn^{s+t+g_1-g_2} \cdot \vec y^{\vec \ell_1-\vec \ell_2}.
  \end{equation}
  Above, note that $\vec y^{\vec \ell_1-\vec \ell_2}$ belongs to the set $N$ in the statement of the lemma.
  Since Formula~\eqref{last-lambda-antecedent} entails a finite disjunction of formulae 
  of the form shown in~\eqref{last-lambda-eq1}, 
  and the disjunctions in Formula~\eqref{last-lambda-eq6} are over finite sets, 
  this completes the proof of the first statement of the lemma.
  In particular, one can take as $F$ the set 
  \[ 
    F \coloneqq [1..n] \times [-L..L] \times N
  \]
  where $L \coloneqq \max \{ 1 + \ceil{\log_{\cn}(n)} + n + 2 \abs{g'} : g' \in G_{\pm p_j} \text{ for some } j \in [0..n]\}$.

  We move to the second part of the lemma, which adds further assumptions
  on~$\cn$.

  \paragraph{\textit{Case: $\cn$ is a computable transcendental number \textup{(}Item~\eqref{lemma:last-lambda-lemma:i1}\textup{)}}}
  From~\Cref{lemma:lambda-close-to-variable} item \eqref{lemma:lambda-close-to-variable:i1}, we conclude that the sets $G_{\pm
  p_j}$ can be computed. Moreover, $\ceil{\log_{\cn}(n)}$ can be computed by
  iterating through the natural numbers, finding $\alpha \in \N$ such that
  $\cn^{\alpha-1} < n \leq \cn^{\alpha}$. Checking these inequalities can be
  done by opportunely iterating the algorithm for the sign evaluation problem for transcendental numbers already 
  discussed in the proof of~\Cref{lemma:lambda-close-to-variable}.
  \paragraph{\textit{Case: $\cn$ has a polynomial root barrier \textup{(}Item~\eqref{lemma:last-lambda-lemma:i2}\textup{)}}}
    Assume now $\cn$ to have a polynomial root barrier 
    $\sigma(d,h) \coloneqq c \cdot (d+\ceil{\ln(h)})^k$, 
    with $c,k \in \N_{\geq 1}$. 
    We provide an explicit upper bound to the set $L$ defined above, 
    so that replacing $L$ with this upper bound in the definition of $F$ 
    yield the last statement of the lemma.
    For this, it suffices to upper bound $\ceil{\log_{\cn}(n)}$ 
    as well as $\abs{g'}$, where $g' \in G_{\pm p_j}$ with $j \in [0..n]$.
  
    For the upper bound to $\ceil{\log_{\cn}(n)}$, 
    as done in the proof of Lemma~\ref{lemma:lambda-close-to-variable}, 
    we can consider the polynomial $x-1$ in order to obtain a lower bound on the number $\cn > 1$ 
    via the root barrier $\sigma$. We have $\cn \geq 1 + \frac{1}{e^c}$.
    Then, 
    \begin{align}
      \ceil{\log_{\cn}(n)} 
      & \leq \ceil{\frac{\ln(n)}{\ln(1 + \frac{1}{e^c})}} 
      & \text{changing the base of logarithm, and $\cn \geq 1 + \frac{1}{e^c}$} \notag\\
      & \leq \ceil{\ln(n) \cdot 2^{2c}} 
      & \text{since $\frac{1}{\ln(1 + \frac{1}{e^c})} \leq 2^{2c}$} \notag\\ 
      & \leq 2^{2c} \ceil{\ln(n)}. \label{last-lambda-eq7}
    \end{align}
    
    For the bound on the elements in $G_{\pm p_j}$, 
    recall that this set has been computed following 
    Lemma~\ref{lemma:lambda-close-to-variable}.
    The polynomial $\pm p_j$ 
    has the form $\sum_{\ell=1}^{\abs{M}} q_{\ell}(\cn) \cdot \vec{y}^{\vec{e}_\ell}$, where 
    $\height(q_\ell) \leq H$ and $\deg(q_{\ell}) \leq D$.
    Hence,
    by~Lemma~\ref{lemma:lambda-close-to-variable}, 
    $G_{\pm p_j}$ can be taken as the interval $[-L'..L']$ where $L' \coloneqq \left(2^{3c}D\ceil{\ln(H)}\right)^{6\abs{M} \cdot k^{3\abs{M}}}$.
    We can now conclude the proof:
    \begin{align*} 
      L &= \max \{ 1 + \ceil{\log_{\cn}(n)} + n + 2 \abs{g'} : g' \in G_{\pm p_j} \text{ for some } j \in [0..n]\}\\ 
      & \leq 1 + \ceil{\log_{\cn}(n)} + n + 2\left(2^{3c}D\ceil{\ln(H)}\right)^{6\abs{M} \cdot k^{3\abs{M}}}
      &\text{bound on $G_{\pm p_j}$}\\
      & \leq 1 + 2^{2c}\ceil{\ln(n)} + n + 2\left(2^{3c}D\ceil{\ln(H)}\right)^{6\abs{M} \cdot k^{3\abs{M}}}
      &\text{by Equation~\ref{last-lambda-eq7}}\\
      & \leq 2^{2c+1}n + 2\left(2^{3c}D\ceil{\ln(H)}\right)^{6\abs{M} \cdot k^{3\abs{M}}}
      &\text{since $c,n \geq 1$}\\
      & \leq n\left(2^{4c}D\ceil{\ln(H)}\right)^{6\abs{M} \cdot k^{3\abs{M}}}.
      &&\qedhere
    \end{align*}
\end{proof}


By relying on the characterisation, given in~\Cref{lemma:last-lambda-lemma}, of
the values that $\lambda(u^*)$ can take, where $u^* > 0$ is the root of some
polynomial, and by applying our previous observation that satisfiability can be
witnessed by picking elements of $\ipow{\cn}$ that are ``close'' to $u^*$ (i.e.,
the numbers $\cn^{-1} \cdot \lambda(u^*)$, $\lambda(u^*)$ or $\cn \cdot
\lambda(u^*)$), we obtain the following key lemma.

\begin{restatable}{lemma}{LemmaRelativiseQuantifiers}
  \label{lemma:relativise-quantifiers}
  Let  $\phi(u,\vec y)$ be a quantifier-free formula from $\exists \ipow{\cn}$.
  Then, $\exists u\,\phi$ is equivalent to
  \begin{equation}
    \tag{$\dagger$}
    \label{eq:lemma14}
    \bigvee_{\ell \in [-1..1]}\ 
    \bigvee_{q \in Q}\ 
    \bigvee_{(j,g,\vec y^{\vec \ell}) \in F_q}
    \exists u : u^{j} = \cn^{j \cdot \ell + g} \cdot \vec y^{\vec \ell} \land \phi
  \end{equation}
  where $Q$ is the set of all polynomials in $\phi$ featuring $u$, 
  plus the polynomial $u-1$,
  and each~$F_q$ is the set obtained by applying~\Cref{lemma:last-lambda-lemma}
  to $r(x,v,\vec y) \coloneqq q\sub{x \cdot v}{u}$, with~$x,v$~fresh~variables.
\end{restatable}

\begin{proof}
  The right-to-left implication is trivial. Let us show the left-to-right
  implication. Below, let $\psi(u,\vec y) \coloneqq \phi \land \ipow{\cn}(u)
  \land \bigwedge_{y \in \vec y} \ipow{\cn}(y)$. For simplicity of the
  argument, instead of the left-to-right implication in the statement, we
  consider the following formula~of~$\R(\ipow{\cn})$:%
  \begin{equation}
    (\exists u \, \psi)
    \implies 
    \bigvee_{\ell = -1}^{1}\ 
    \bigvee_{q \in Q}\ 
    \bigvee_{(j,g,\vec y^{\vec \ell_1}, \vec y^{\vec \ell_2}) \in F_q}
    \exists u \left( u^{j} = \cn^{j \cdot \ell + g} \cdot \vec y^{\vec \ell_1-\vec \ell_2} \land \psi \right).
    \label{lemma:relativise-quantifiers:eq0}
  \end{equation}
  Since in this implication all variables are constrained to be integer powers
  of $\cn$, this formula is equivalent to the left-to-right implication of the
  equivalence in the statement of the lemma. We show
  Formula~\eqref{lemma:relativise-quantifiers:eq0} by relying on a series of
  tautologies.

  \begin{claim}\label{lemma:relativise-quantifiers:claim1}
    Let $Q'$ be the set of all polynomials in $\phi$ featuring $u$.
    The following formula is a tautology of~$\R(\ipow{\cn})$:
    \begin{align*}
      (\exists u\, \psi)
      \implies 
      &\Big(\big(\forall u \,(u > 0 \implies \phi)\big) 
      \lor\\
      &\hspace{-1cm} \bigvee_{r \in Q'}
      \exists w \big(w > 0 \land r(w,\vec y) = 0 \land (\bigvee_{i=0}^n p_{r,i}(\cn,\vec y) \neq 0) 
      \land \exists u (w \cdot \cn^{-1} \leq u \leq w \cdot \cn \land \psi) \big)
      \Big),
    \end{align*}
    where $r \in Q'$ is of the form $r(x,\vec y) = \sum_{i=0}^n p_{r,i}(\cn,\vec y) \cdot x^i$.
  \end{claim}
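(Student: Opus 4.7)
My plan is to exploit o-minimality of the first-order theory of the reals, following the geometric intuition depicted in~\Cref{figure:qe-idea}. Assume the antecedent and pick witnesses: an integer power $u^* \in \ipow{\cn}$ and a tuple $\vec y^*$ of integer powers of $\cn$ with $\phi(u^*, \vec y^*)$ true. Because $\phi$ belongs to $\exists\ipow{\cn}$ and therefore contains no $\ipow{\cn}$ predicates, after substituting $\vec y = \vec y^*$ the truth set $S \coloneqq \{u \in \R_{>0} : \phi(u, \vec y^*)\}$ is a semi-algebraic subset of $\R_{>0}$; by o-minimality it is a finite disjoint union of intervals.

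I would then split on whether $S = \R_{>0}$. If so, then $\forall u\,(u > 0 \implies \phi(u, \vec y^*))$ holds, which combined with the witness $\vec y = \vec y^*$ establishes the first disjunct of the conclusion. Otherwise, let $I$ be the maximal interval of $S$ containing $u^*$; since $I \subsetneq \R_{>0}$, at least one endpoint $w^*$ of $I$ is strictly positive. I would next argue that $w^*$ is a root of some polynomial $r \in Q'$ whose coefficients $p_{r,0}(\cn,\vec y^*),\dots,p_{r,n}(\cn,\vec y^*)$ are not all zero. Since $w^*$ is a boundary point of the truth set of $\phi(\cdot,\vec y^*)$, at least one atomic subformula $p(\cn,u,\vec y^*)\sim 0$ must change truth value across $w^*$; continuity of polynomials forces $p(\cn,w^*,\vec y^*) = 0$, and the truth change forces $p(\cn,u,\vec y^*)$ to be non-constant in $u$. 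These two properties respectively give $p \in Q'$ and $\bigvee_{i=0}^n p_{r,i}(\cn,\vec y^*) \neq 0$.

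It remains to exhibit an integer power $u$ of $\cn$ lying in $[w^*\cn^{-1},\, w^*\cn] \cap I$; such a $u$ will satisfy both $\phi(u,\vec y^*)$ and $\ipow{\cn}(u)$, so together with $w = w^*$ and $\vec y = \vec y^*$ it witnesses the $r$-indexed disjunct. The natural candidates are $u = \lambda(w^*)$ when $w^*$ is a right endpoint of $I$ and $u = \cn \cdot \lambda(w^*)$ when it is a left endpoint. Each candidate lies in $[w^*\cn^{-1},\, w^*\cn]$ by the defining inequality $\lambda(a) \leq a < \cn\,\lambda(a)$; membership in $I$ follows because $u^* \in I$ is itself an integer power of $\cn$ on the appropriate side of $w^*$, sandwiching the candidate between $u^*$ and $w^*$. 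The one subtlety is the corner case in which $w^*$ itself is an integer power of $\cn$ that does not belong to $I$; then I would retract the candidate by a factor of $\cn$ (using $\cn^{-1}\lambda(w^*)$ or $\cn^2\lambda(w^*)$ as appropriate), still staying in $I$ because $u^*$ continues to provide the required baseline.

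The main obstacle will be this last step: ensuring that the selected integer power of $\cn$ simultaneously lies inside the interval $I$ (so that $\phi$ is satisfied) and inside the factor-$\cn$ window around $w^*$. Each boundary subcase is routine, but a clean treatment requires combining monotonicity of $\lambda$, the defining inequality $\lambda(a) \leq a < \cn\,\lambda(a)$, and the presence of the integer-power witness $u^*$ inside $I$.
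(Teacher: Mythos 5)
Your proposal follows essentially the same route as the paper's proof: decompose the truth set $S \coloneqq \{u > 0 : \phi(u, \vec y^*)\}$ into finitely many intervals via o-minimality, note that a positive endpoint $w^*$ of the interval containing the integer-power witness must be a root of a non-constant polynomial from $Q'$, and then exhibit an integer power of $\cn$ inside the factor-$\cn$ window around $w^*$ by considering $\lambda(w^*)$ shifted by $\cn^{\pm 1}$. The only difference is cosmetic: the paper always inspects the right endpoint first and falls back to the left only for unbounded intervals, whereas you allow either positive endpoint.

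One small slip in your corner-case analysis: for a \emph{left} endpoint $w^*$, your primary candidate $\cn\,\lambda(w^*)$ actually does work when $w^* \in \ipow{\cn}$ and $w^* \not\in I$ (since then $u^*$, being an integer power in $I$ strictly above $w^*$, already satisfies $u^* \geq \cn w^*$). The genuine corner case for a left endpoint is instead $w^* \in \ipow{\cn}$ \emph{and} $w^* \in I$, in which case $\cn\,\lambda(w^*) = \cn w^*$ may overshoot $I$; the correct retraction there is to $\lambda(w^*) = w^*$ itself (i.e.\ divide by $\cn$, not multiply, so $\cn^2\lambda(w^*)$ cannot be right). Your trigger condition and fix are correct for the right endpoint but mislabelled for the left; the paper's explicit three-way case splits (around $b$ and, if $b = {+}\infty$, around $a$) handle both cleanly.
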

  \begin{proof}
    Let $\vec y^*$ be real numbers that are a solution to the formula $(\exists u\, \psi) \land \lnot \forall u \,(u > 0 \implies \phi)$. 
    To prove the claim, it suffices to show that then $\vec y^*$ is a solution to the formula
    \begin{equation}\label{lemma:relativise-quantifiers:eqn1}
      \bigvee_{r \in Q'}
      \exists w \big(w > 0 \land r(w,\vec y) = 0 \land (\bigvee_{i=0}^n p_{r,i}(\cn,\vec y) \neq 0) 
      \land \exists u (w \cdot \cn^{-1} \leq u \leq w \cdot \cn \land \psi) \big).
    \end{equation}
    Let $S \coloneqq \{ u \in \R : \phi(u,\vec y^*) \land u > 0 \}$ be the set
    of positive real numbers satisfying $\phi$ with respect to the vector
    $\vec y^*$ we have picked. Since $\vec y^*$ satisfies $\lnot \forall u (u
    > 0 \implies \phi)$, we have $S \subsetneq \R_{>0}$. Since $S$ is the set
    of solutions of over $\R_{>0}$ of a formula in the language of Tarski
    arithmetic, it is a finite union~$\bigcup_{j \in J} I_j$ of disjoint
    (open, closed or half-open) intervals with endpoints in $\R \cup
    \{{+}\infty\}$. This follows directly from the fact that Tarski arithmetic
    is an o-minimal theory~\cite[Chapter 3.3]{marker2002model}. Without loss
    of generality, we can assume $\{I_j\}_{j \in J}$ to be a minimal family of
    intervals characterising $S$; in other words, we can assume that for every
    two distinct intervals $I_j$ and $I_k$, the set $I_j \cup I_k$ is not an
    interval. Since $\vec y^*$ satisfies $\exists u\, \psi$, there is $j \in
    J$ such that $I_j$ contains an integer power of $\cn$, $\cn^{i_j}$. The
    interval $I_j$ is of the form $(a,b)$, $[a,b)$, $(a,b]$ or $[a,b]$, for
    some $a \in \R_{>0}$ and $b \in \R_{>0} \cup \{{+}\infty\}$. We divide the
    proof in two cases, depending on whether $b = {+}\infty$.
    \begin{description}
      \item[case: $b \neq {+}\infty$] 
        There is an interval $(c,d)$ around $b$ such that $(c,b)$ and $(b,d)$
        are non-empty, $(c,d) \subseteq I_j$, and $(b,d) \cap S = \emptyset$.
        That is, the truth of the formula $\phi(u,\vec y^*) \land u > 0$
        changes around $b$. Since $\phi(u,\vec y)$ is a quantifier-free
        formula from $\exists \ipow{\cn}$, this means that the truth value of
        a polynomial inequality $r(u,\vec y^*) \sim 0$ changes around $b$,
        which in turn implies both $r(b,\vec y^*) = 0$ (since polynomials are
        continuous functions) and that $r(b,\vec y^*)$ is non-constant, i.e.,
        $\bigvee_{i=0}^n p_{r,i}(\cn,\vec y^*) \neq 0$. At this point we have
        established that $b > 0 \land r(b,\vec y^*) = 0 \land (\bigvee_{i=0}^n
        p_{r,i}(\cn,\vec y) \neq 0)$ holds, and hence to conclude that
        Formula~\eqref{lemma:relativise-quantifiers:eqn1} holds we must now
        show that there is $u^* \in \ipow{\cn}$ such that $(b \cdot \cn^{-1}
        \leq u^* \leq b \cdot \cn \land \psi(u^*,\vec y^*))$ also holds.
        Observe that, since $\vec y^*$ satisfies $\exists u \psi$, each entry
        in $\vec y^*$ is an integer power of $\cn$, and therefore it suffices
        to show that $u^* \in \ipow{\cn}$ such that $b \cdot \cn^{-1} \leq u^*
        \leq b \cdot \cn$ and $u^* \in I_j$. This follows from the case
        analysis below:
        \begin{description}
          \item[case: $b \in I_j$ and $b \in \ipow{\cn}$] In this case, $u^*
          = b$.
          \item[case: $b \not\in I_j$ and $b \in \ipow{\cn}$] We have
          $\lambda(b) = b$. Since we are assuming that $I_j$ contains an
          integer power of $\cn$, we must have that $\cn^{-1} \cdot b$, which
          is the largest integer power of $\cn$ that is strictly below the
          endpoint $b$, belongs to $I_j$. Hence, we can take $u^* = \cn^{-1}
          \cdot b$.
          \item[case: $b \not\in \ipow{\cn}$]
          We have $\lambda(b) < b$, and $\lambda(b)$ is the largest integer
          power of $\cn$ strictly below $b$. We have
          $\lambda(b) \in I_j$ and $b \cdot \cn^{-1} \leq \lambda(b)$, and so
          we can take $u^* = \lambda(b)$.
        \end{description}
      \item[case: $b = {+}\infty$] 
      In this case, instead of the right endpoint $b$ we consider the left
      endpoint $a$. Since $S$ if a strict subset of $R_{> 0}$, we must have $a
      > 0$. By the same arguments as in the previous case, $\phi$ must feature
      a polynomial inequality $r(u,\vec y) \sim 0$ such that $r(a,\vec y^*) =
      0$. We thus have $a > 0 \land r(a,\vec y^*) = 0 \land (\bigvee_{i=0}^n
      p_{r,i}(\cn,\vec y) \neq 0)$, and to conclude that
      Formula~\eqref{lemma:relativise-quantifiers:eqn1} holds it suffices to
      show that there is $u^* \in \ipow{\cn}$ such that $a \cdot \cn^{-1} \leq
      u^* \leq a \cdot \cn$ and $u^* \in I_j$. This is shown with a case
      analysis that is analogous to the one above:
      \begin{description}
        \item[case: $a \in I_j$ and $a \in \ipow{\cn}$] In this case, $u^* =
        a$.
        \item[case: $a \not\in I_j$ and $a \in \ipow{\cn}$] We have
        $\lambda(a) = a$. Since we are assuming that $I_j$ contains an integer
        power of $\cn$, we must have that $a \cdot \cn$, which is the largest
        integer power of $\cn$ strictly above the endpoint $a$,
        belongs to $I_j$. Hence, we can take $u^* = a \cdot \cn$.
        \item[case: $a \not\in \ipow{\cn}$]
        We have $\lambda(a) < a$. In this case, the largest power of $\cn$
        that is strictly above the endpoint $a$ is $\lambda(a) \cdot \cn$. We
        have $\lambda(a) \cdot \cn \in I_j$ and $a < \lambda(a) \cdot \cn \leq
        a \cdot \cn$, and so we can take $u^* = \lambda(a) \cdot \cn$.
      \end{description}
    \end{description}
    In both the cases above, we have shown that $\vec y^*$ 
    is a solution to~Formula~\eqref{lemma:relativise-quantifiers:eqn1}.
  \end{proof}

  \begin{claim}\label{lemma:relativise-quantifiers:claim2}
    The following formula is a tautology of~$\R(\ipow{\cn})$:
    \[
      (\forall u (u > 0 \implies \psi)) 
      \implies 
      \exists w (w = 1 \land \exists u (w \cdot \cn^{-1} \leq u \leq w \cdot \cn \land \psi)).
    \]
  \end{claim}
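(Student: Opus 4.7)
The claim is essentially a trivial witnessing step, since we can plug in the canonical witness $w = 1$ (which is the element $\cn^0$ of $\ipow{\cn}$). The plan is to take $w = 1$ and $u = 1$ and verify the three conjuncts of the consequent directly. The equality $w = 1$ is immediate. The inequalities $w \cdot \cn^{-1} \leq u \leq w \cdot \cn$ reduce to $\cn^{-1} \leq 1 \leq \cn$, which hold because we are working under the standing hypothesis $\cn > 1$ that is fixed at the top of Section~\ref{subsection:quantifier-elimination}. Finally, $\psi(1, \vec y)$ is obtained by instantiating the universally quantified premise $\forall u\,(u > 0 \implies \psi)$ at $u = 1 > 0$.

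The only subtle point is to be explicit that $1 \in \ipow{\cn}$ (so that the $\ipow{\cn}(u)$ conjunct contained in $\psi$ is satisfied when we substitute $u = 1$): this holds because $1 = \cn^0$ and $0 \in \Z$. Nothing else needs to be checked, as all other conjuncts of $\psi$ that refer only to $\vec y$ are carried across the implication unchanged. There is no real obstacle here; the claim is essentially a restatement of the fact that, in the "boundary-less" branch of the case analysis of Claim~\ref{lemma:relativise-quantifiers:claim1}, the value $u^\ast = 1$ is always a valid witness in $\ipow{\cn}$ for $\exists u\,\psi$, and can be paired with the trivial ``bounding'' witness $w = 1$.
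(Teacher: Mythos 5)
Your proof is correct and takes essentially the same route as the paper's: both arguments instantiate the universal hypothesis at a point of the interval $[\cn^{-1},\cn]\subseteq\R_{>0}$ (the paper does so abstractly, you at the concrete witness $u=1$), and read the consequent off with $w=1$. The side remark that $1\in\ipow{\cn}$ is in fact redundant given the antecedent as literally stated (instantiating $\forall u\,(u>0\implies\psi)$ at $u=1$ already yields $\ipow{\cn}(1)$ as a conjunct of $\psi(1,\vec y)$), but it is harmless and shows sensible caution about the $\ipow{\cn}$ conjuncts packed into $\psi$.
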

  \begin{proof}
    First, observe that $\exists w (w = 1 \land \exists u (w \cdot \cn^{-1}
    \leq u \leq w \cdot \cn \land \psi))$ is trivially equivalent to $\exists
    u ( \cn^{-1} \leq u \leq \cn \land \psi)$. (The addition of the variable
    $w$ assigned to $1$ is convenient for the forthcoming arguments of the
    proof of~\Cref{lemma:relativise-quantifiers}.)

    Let $\vec y$ be real numbers satisfying the antecedent $(\forall u (u > 0
    \implies \psi))$ of the implication. Since $\cn > 1$, the non-empty
    interval $[\cn^{-1},\cn]$ is included in $\R_{>0}$. Therefore, from the
    antecedent of the implication we deduce that $\vec y$ satisfies $\exists u
    ( \cn^{-1} \leq u \leq \cn \land \psi)$.
  \end{proof}
  By Claims~\ref{lemma:relativise-quantifiers:claim1}
  and~\ref{lemma:relativise-quantifiers:claim2}, the following formula is a
  tautology of~$\R(\ipow{\cn})$: 
  \begin{align*}
    (\exists u\, \psi)
    \implies 
    \!\!\bigvee_{r \in Q}\!
    \exists w \Big( w > 0 \land r(w,\vec y) = 0 \land \exists u (w \cdot \cn^{-1} \leq u \leq w \cdot \cn \land \psi) \land \bigvee_{i=0}^n p_{r,i}(\cn,\vec y) \neq 0 \Big).
  \end{align*}
  Since every $w > 0$ can be uniquely decomposed into $x\cdot v$, with
  $x$ being an integer power of $\cn$ and $1 \leq v < \cn$, the above formula
  can be rewritten as follows:
  \begin{align*}
    (\exists u\, \psi)
    \implies 
    \bigvee_{r \in Q}
    \exists x \exists v \Big(& \ipow{\cn}(x) \land 1 \leq v < \cn \land r(x \cdot v,\vec y) = 0 \land (\bigvee_{i=0}^n p_{r,i}(\cn,\vec y) \neq 0)\\ 
    &{} \land \exists u ( (x \cdot v) \cdot \cn^{-1} \leq u \leq (x \cdot v) \cdot \cn \land \psi) \Big).
  \end{align*}
  Hence, by applying~\Cref{lemma:last-lambda-lemma}, we conclude the following
  formula is a tautology of $\R(\ipow{\cn})$: 
  \begin{align}
    (\exists u\, \psi)
    \implies 
    \bigvee_{r \in Q}
    \exists x \exists v \Big(& \ipow{\cn}(x) \land 1 \leq v < \cn \land 
    \big(\bigvee_{(j,g,\vec y^{\vec \ell}) \in F_r}
      x^{j} = \cn^{g} \cdot \vec y^{\vec \ell}      
    \big)
    \notag\\ 
    &{} \land \exists u ( (x \cdot v) \cdot \cn^{-1} \leq u \leq (x \cdot v) \cdot \cn \land \psi) \Big).
    \label{lemma:relativise-quantifiers:eq1}
  \end{align}

  We now simplify the inequalities $(x \cdot v) \cdot \cn^{-1} \leq u \leq (x \cdot v) \cdot \cn$:

  \begin{claim}\label{lemma:relativise-quantifiers:claim3}
    The following formula is a tautology of~$\exists\R(\ipow{\cn})$:
    \[
      (\ipow{\cn}(u) \land \ipow{\cn}(x) \land 1 \leq v < \cn 
      \land (x \cdot v) \cdot \cn^{-1} \leq u \leq (x \cdot v) \cdot \cn) 
      \implies 
      \bigvee\nolimits_{\ell \in \{-1,0,1\}}
      u = \cn^{\ell} \cdot x.
    \]
  \end{claim}
  \begin{proof}
    Let $(u,v,x)$ be three real numbers satisfying the antecedent of the
    implication. By properties of $\lambda$, $(x \cdot v) \cdot \cn^{-1} \leq
    u \leq (x \cdot v) \cdot \cn$ implies $\lambda(x \cdot v) \cdot \cn^{-1}
    \leq \lambda(u) \leq \lambda(x \cdot v) \cdot \cn$. Since the antecedent
    of the implication imposes $u$ and $x$ to be integer powers of $\cn$, and
    $1 \leq v < \cn$, we have $\lambda(u) = u$ and $\lambda(x \cdot v) = x$.
    We conclude that $x \cdot \cn^{-1} \leq u \leq x \cdot \cn$, or
    equivalently $u = \cn^{\ell} \cdot x$ for some $\ell \in [-1,1]$, as
    required.
  \end{proof}
  We apply Claim~\ref{lemma:relativise-quantifiers:claim3} to
  Equation~\ref{lemma:relativise-quantifiers:eq1}, obtaining the following
  tautology of~$\R(\ipow{\cn})$:
  \begin{align*}
    (\exists u\, \psi)
    \implies\!
    \bigvee_{r \in Q}
    \exists x \Big(& \ipow{\cn}(x) \land 
    \big(\!\!\!\bigvee_{(j,g,\vec y^{\vec \ell}) \in F_r}\!\!\!
      x^{j} = \cn^{g} \cdot \vec y^{\vec \ell}      
    \big) \land \exists u \big( \psi \land \bigvee_{\ell \in \{-1,0,1\}}
    u = \cn^{\ell} \cdot x \big) \Big).
  \end{align*}
  Lastly, in the above formula, we can exponentiate both sides of $u =
  \cn^{\ell} \cdot x$ by $j$ and eliminate~$x$. That is, the following entailment 
  with formulae from~$\exists\R(\ipow{\cn})$ holds:
  \begin{align*}
    \exists u\, \psi
    \models 
    \bigvee_{\ell = -1}^{1} 
    \bigvee_{q \in Q}
    \bigvee_{(j,g,\vec y^{\vec \ell}) \in F_q}
    \exists u 
    \big(
      u^{j} = \cn^{j \cdot \ell + g} \cdot \vec y^{\vec \ell}      
      \land \psi
    \big).
  \end{align*}
  This concludes the proof of \Cref{lemma:relativise-quantifiers}.
\end{proof}

To eliminate the variable $u$, we now 
consider each disjunct $\exists u \left( u^{j} = \cn^{k} \cdot \vec{y}^{\vec \ell}
\land \phi \right)$ from Formula~\eqref{eq:lemma14} and, roughly speaking, 
substitute $u$ with $\sqrt[j]{\cn^{k} \cdot \vec{y}^{\vec{\ell}}}$. 
We do not need however to introduce~$j$th roots, as shown in the following lemma.

\begin{restatable}{lemma}{LemmaRemoveU}
  \label{lemma:remove-u}
  Let $\phi(u,\vec y)$ be a quantifier-free formula from $\exists \ipow{\cn}$,
  with $\vec y = (y_1,\dots,y_n)$. Let $j \in \N_{\geq 1}$, $k \in \Z$ and
  $\vec{\ell} \coloneqq (\ell_1,\dots,\ell_n) \in \Z$. Then,
  $\exists \vec{y} \exists u : u^{j} = \cn^{k} \cdot \vec y^{\vec \ell} \land
  \phi$ is equivalent to
  \begin{equation*}
    \textstyle\bigvee_{\vec r \coloneqq (r_1,\dots,r_n) \in R}\
    \exists \vec z : \phi\sub{z_i^j \cdot \cn^{r_i}}{y_i : i \in [1..n]}\sub{\cn^{\frac{k + \vec \ell \cdot \vec r}{j}} \cdot \vec z^{\vec \ell}}{u},
  \end{equation*}
  where $R \coloneqq \big\{(r_1,\dots,r_n)\in[0..j-1]^n : j \text{ divides } k +
  \sum_{i=1}^n r_i \cdot \ell_i \big\}$, $\vec \ell \cdot \vec r \coloneqq
  \sum_{i=1}^n r_i \cdot \ell_i$, and $\vec z \coloneqq (z_1,\dots,z_n)$ is a
  vector of fresh variables.
\end{restatable}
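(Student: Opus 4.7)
The plan is to exploit that the equivalence is stated inside $\exists\ipow{\cn}$, so every variable is implicitly quantified over $\ipow{\cn}$: I would parameterise $y_i = \cn^{t_i}$ and $u = \cn^s$ for integers $t_i, s \in \Z$. Under this parameterisation, the defining equation $u^{j} = \cn^{k} \cdot \vec y^{\vec \ell}$ collapses to the single linear Diophantine constraint $j s = k + \vec \ell \cdot \vec t$, which admits an integer solution in $s$ if and only if $j$ divides $k + \vec \ell \cdot \vec t$. This divisibility is the sole obstruction to eliminating the bound variable $u$.

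The key step is then to partition the existential quantifier $\exists \vec t \in \Z^n$ by residues modulo $j$: write $t_i = j q_i + r_i$ with $r_i \in [0..j-1]$ and $q_i \in \Z$, and observe that $j \mid k + \vec \ell \cdot \vec t$ depends only on $\vec r$ and reduces to $j \mid k + \vec \ell \cdot \vec r$, which is precisely the defining condition of the set $R$. For $\vec r \in R$, the corresponding value of $s$ is $s = (k + \vec \ell \cdot \vec r)/j + \vec \ell \cdot \vec q \in \Z$. Introducing fresh variables $z_i$ intended to stand for $\cn^{q_i}$ (which, crucially, themselves range over $\ipow{\cn}$ when $q_i$ ranges over $\Z$), the substitutions $y_i \mapsto z_i^{j} \cdot \cn^{r_i}$ and $u \mapsto \cn^{(k + \vec \ell \cdot \vec r)/j} \cdot \vec z^{\vec \ell}$ recover exactly the right-hand side of the lemma.

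The forward direction of the equivalence is then immediate from this decomposition. The converse direction reduces to a direct computation: under the substitution both $u^{j}$ and $\cn^{k} \cdot \vec y^{\vec \ell}$ evaluate to $\cn^{k + \vec \ell \cdot \vec r} \cdot \vec z^{\,j \vec \ell}$, so the equation $u^j = \cn^k \cdot \vec y^{\vec \ell}$ is restored and the remaining atoms of $\phi$ are satisfied by design. The main obstacle is purely bookkeeping: the expression $\cn^{(k + \vec \ell \cdot \vec r)/j} \cdot \vec z^{\vec \ell}$ is a Laurent monomial when some entries of $\vec \ell$ are negative, but this is absorbed by the normalisation convention for Laurent polynomials introduced at the start of Section~\ref{subsection:quantifier-elimination}, and since $(k + \vec \ell \cdot \vec r)/j$ is an integer by construction, all substituted polynomials remain well-defined over $\cn$ and $\vec z$ inside the theory $\exists\ipow{\cn}$.
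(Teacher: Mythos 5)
Your proof is correct and takes essentially the same route as the paper's: parameterising each variable as an integer power of $\cn$, splitting the exponents of the $y_i$ via Euclidean division by $j$, observing that solvability of $js = k + \vec\ell \cdot \vec t$ over $\Z$ is governed by $j \mid k + \vec\ell \cdot \vec r$, and then introducing fresh variables for the quotients to realise the substitutions. The paper's argument is just a slightly more formula-level rendering of the same decomposition, with the explicit chain of implications for the right-to-left direction matching your ``direct computation.''
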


\begin{proof} 
  We first prove the right-to-left direction of the lemma. Consider $\vec r\in
  R$ such that the sentence $\exists\vec z : \phi\sub{z_i^j \cdot
  \cn^{r_i}}{y_i : i \in [1..n]}\sub{\cn^{\frac{k + \vec \ell \cdot
  \vec{r}}{j}} \cdot \vec z^{\vec \ell}}{u}$ is a tautology of $\exists
  \ipow{\cn}$. The following sequence of implications (in the language of $\R(\ipow{\cn})$) establishes the
  right-to-left direction:
  \begin{align*}
      &\exists\vec z : \phi\sub{z_i^j \cdot \cn^{r_i}}{y_i : i \in [1..n]}\sub{\cn^{\frac{k + \vec \ell \cdot \vec r}{j}} \cdot \vec z^{\vec \ell}}{u}\\
      \implies{}& \exists \vec z \exists\vec y\exists u: \phi(u,\vec y)\land
      \big(\bigwedge_{i=1}^n y_i=z_i^j \cdot \cn^{r_i}\big) \land u=\cn^{\frac{k + \vec \ell \cdot \vec r}{j}} \cdot \vec z^{\vec \ell}
      &\hspace{-1cm}\text{def.~of~substitution}\\
      \implies{}&\exists \vec z \exists\vec y\exists u: \phi(u,\vec y)\land \big(\bigwedge_{i=1}^n y_i^{\ell_i}=z_i^{j\ell_i} \cdot \cn^{r_i\ell_i} \big)\land u^j=\cn^{k + \vec \ell \cdot \vec r} \cdot \vec z^{\vec \ell\cdot j}\\
      \implies{}&\exists \vec z \exists\vec y\exists u: \phi(u,\vec y)\land \big(\bigwedge_{i=1}^n y_i^{\ell_i}=z_i^{j\ell_i} \cdot \cn^{r_i\ell_i} \big)\land u^j=\cn^{k} \cdot \prod_{i=1}^n (z_i^{j\ell_i} \cdot \cn^{r_i\ell_i})\\
      \implies{}&\exists\vec y\exists u: \phi(u,\vec y)\land u^j=\cn^{k}\cdot y_1^{\ell_1}\cdot\cdots \cdot y_n^{\ell_n}.
  \end{align*}
  We move to the left-to-right direction. Suppose $\exists \vec y \exists u :
  u^{j} = \cn^{k} \cdot \vec y^{\vec \ell} \land \phi$ to be a tautology
  of~$\exists \ipow{\cn}$. For every $i \in [1..n]$, we have $y_i =
  \cn^{\alpha_i}$ for some $i \in \Z$. We consider the quotient $\beta_i \in
  \Z$ and remainder $r_i \in [0..j-1]$ of the integer division of $\alpha_i$
  modulo $j$, that is, $\alpha_i = \beta_i \cdot j + r_i$. Setting $z_i =
  \cn^{\beta_i}$, we have $y_i = z_i^j \cdot \cn^{r_i}$. Therefore, the
  following sentence is a tautology of~$\exists\ipow{\cn}$:
  \[
      \exists \vec y \exists u \exists \vec z \bigvee_{(r_1,\dots,r_n) \in [0..j-1]^n} u^{j} = \cn^{k} \cdot \vec y^{\vec \ell} \land \phi \land \bigwedge_{i=1}^n y_i = z_i^j \cdot \cn^{r_i}.
  \]
  By distributing existential quantifiers over disjunctions and eliminating
  $\vec y$ by performing the substitutions $\sub{z_i^j \cdot \cn^{r_i}}{y_i}$,
  we conclude that the following sentence is also tautological:
  \begin{equation}
      \label{lemma:remove-u:eq1}
      \bigvee_{(r_1,\dots,r_n) \in [0..j-1]} \exists u \exists \vec z : u^{j} = \cn^{k} \cdot \prod_{i=1}^n (z_i^{\ell_i j} \cdot \cn^{\ell_i r_i})  \land \phi\sub{z_i^j \cdot \cn^{r_i}}{y_i : i \in [1..n]}.
  \end{equation}
  Since all the $z_i$ and $u$ are powers of $\cn$, there are $\alpha,\beta \in
  \Z$ such that $\vec z^{\vec\ell\cdot j}=\cn^{\alpha \cdot j}$ and
  $u^j=\cn^{\beta \cdot j}$. Observe that then, in order for $u^{j} = \cn^{k}
  \cdot \prod_{i=1}^n z_i^{\ell_ij} \cdot \cn^{\ell_ir_i}$ to hold, we must
  have $\beta j = k + \alpha j + \vec \ell \cdot \vec r$. This implies that
  $j$ divides $k+ \vec \ell \cdot \vec r$. Therefore, we can update
  Formula~\eqref{lemma:remove-u:eq1} as follows: 
  \begin{itemize}
      \item instead of a disjunction over all elements in $[0..j-1]^n$,
      consider the set 
      \[
      R \coloneqq \big\{(r_1,\dots,r_n)\in[0..j-1]^n : j \text{ divides } k + \textstyle\sum_{i=1}^n r_i \cdot \ell_i\big\};
      \]
      \item in the disjunct corresponding to $\vec r \coloneqq (r_1,\dots,r_n)
      \in R$, replace $u^{j} = \cn^{k} \cdot \prod_{i=1}^n (z_i^{\ell_ij}
      \cdot \cn^{\ell_ir_i})$ with $u = \cn^{\frac{k+\vec \ell \cdot
      \vec{r}}{j}}\cdot \vec z^{\vec\ell}$.
  \end{itemize}
  We conclude that the following sentence is a tautology of $\exists
  \ipow{\cn}$: 
  \[
      \bigvee_{\vec r \coloneqq (r_1,\dots,r_n) \in R} \exists u \exists \vec z : u = \cn^{\frac{k+\vec \ell \cdot \vec r}{j}}\cdot \vec z^{\vec\ell} \land \phi\sub{z_i^j \cdot \cn^{r_i}}{y_i : i \in [1..n]}.
  \]
  From the sentence above, we eliminate $u$ from each disjunct corresponding
  to $\vec r \in R$ by performing the substitution
  $\sub{\cn^{\frac{k+\vec{\ell} \cdot \vec r}{j}}\cdot \vec z^{\vec\ell}}{u}$.
  In doing so, we obtain the formula in the statement of the lemma.
\end{proof}

%

By chaining~\Cref{lemma:relativise-quantifiers,lemma:remove-u}, one can eliminate all variables from a quantifier-free formula $\phi(\vec x)$, obtaining an equisatisfiable
formula with no variables.

\subsection{Quantifier relativisation (Proof of \Cref{theorem:small-model-property})}
\label{subsection:quantifier-relativisation}

Looking closely at how a quantifier-free formula $\phi(u_1,\dots,u_n)$ of $\exists\ipow{\cn}$ evolves as we chain~\Cref{lemma:relativise-quantifiers,lemma:remove-u} to eliminate all variables, we see that the resulting variable-free formula is a finite disjunction~$\bigvee_{i} \psi_i$ of formulae $\psi_i$ that are obtained from $\phi$ via a sequence of substitutions stemming from~\Cref{lemma:remove-u}. As an example, for a formula in three variables~$\phi(u_1,u_2,u_3)$, each $\psi_i$ is obtained by applying a sequence of substitutions of the form:
\begin{center}
  \vspace{-3pt}
  \begin{tikzpicture}
    \node[align=center] (l1) at (0,0) {
        \footnotesize{elimination of $u_1$}
      };
    \node[align=center] (l2) at (4,0) {
        \footnotesize{elimination of $z_1$}
      };
    \node[align=center] (l3) at (8,0) {
        \footnotesize{elimination of $z_3$}
      };

    \node 
      (s1) [above = 0.05cm of l1] 
      {$\begin{cases}
        u_1 = \cn^{k_1} \cdot z_1^{\ell_1} \cdot z_2^{\ell_2}\\ 
        u_2 = z_1^{j_1} \cdot \cn^{r_1}\\
        u_3 = z_2^{j_1} \cdot \cn^{r_2}
      \end{cases}$};

    \node 
      (s2) [above = 0.05cm of l2] 
      {$\begin{cases}
        z_1 = \cn^{k_2} \cdot z_3^{\ell_3}\\ 
        z_2 = z_3^{j_2} \cdot \cn^{r_3}
      \end{cases}$};

    \node 
      (s3) [above = 0.05cm of l3] 
      {$\begin{cases} 
        z_3 = \cn^{k_3}
      \end{cases}$};

    \node (t2) [left = 1cm of s2] {};
    \draw[->] (t2) -- (s2);

    \node (t3) [left = 1cm of s3] {};
    \draw[->] (t3) -- (s3);

  \end{tikzpicture}
  \vspace{-5pt}
\end{center}
We can ``backpropagate'' these substitutions to the initial variables
$u_1,\dots,u_n$, associating to each one of them an integer power of $\cn$. In the
above example, we obtain the system
\begin{center}
  $\begin{cases} u_1 = \cn^{k_1} \cdot (\cn^{k_2} \cdot
        (\cn^{k_3})^{\ell_3})^{\ell_1} \cdot ((\cn^{k_3})^{j_2} \cdot
        \cn^{r_3})^{\ell_2}\\ 
        u_2 = (\cn^{k_2} \cdot (\cn^{k_3})^{\ell_3})^{j_1} \cdot \cn^{r_1}\\
        u_3 = ((\cn^{k_3})^{j_2} \cdot \cn^{r_3})^{j_1} \cdot \cn^{r_2}
      \end{cases}$
\end{center}
By~\Cref{lemma:last-lambda-lemma,lemma:relativise-quantifiers,lemma:remove-u}, we can restrict the integers occurring 
as powers of $\cn$ in the resulting system of substitutions to a finite set.
Since the disjunction $\bigvee_i \psi_i$ is finite, 
this implies that,
under the hypothesis that $\cn$ is a computable number that is either transcendental or has a polynomial root barrier, 
it is possible to compute a finite set $P_{\phi} \subseteq \Z$ witnessing the satisfiability of $\phi$. That is, the sentence $\exists u_1\dots \exists u_n\, \phi$ is equivalent to
\[ 
  \exists u_1 \dots \exists u_n \textstyle\bigvee_{(g_1,\dots,g_n) \in (P_{\phi})^n} \textstyle\left(\phi \land \bigwedge_{i=1}^n u_i = \cn^{g_i}\right).
\]

By formalising this step, we obtain a proof of the small witness property:
\begin{proof}[Proof of~\Cref{theorem:small-model-property}]
  The proposition is clearly true for $n = 0$, hence below we assume $n \geq 1$.
  By repeatedly applying~\Cref{lemma:relativise-quantifiers,lemma:remove-u} we
  conclude that there is a sequence $S_0,\dots,S_{n-1}$ of finite sets of
  integers, and a sequence $\phi_0, \phi_1, \dots, \phi_n$ of
  \emph{equisatisfiable} quantifier-free formulae such that 
  \begin{enumerate}[A.]
    \item\label{pp6:itemA} for every $r \in [0..n]$, the variables occurring in $\phi_r$ are
    among $u_{r+1},\dots,u_{n}$, 
    \item\label{pp6:itemB} $\phi_0 = \psi$, and 
    \item\label{pp6:itemC} for all $r \in [0..n-1]$, $\phi_{r+1} = \phi_r\sub{u_i^{j_r} \cdot
          \cn^{f_{r,i}}}{u_i : i \in [r+2..n]}\sub{\cn^{g_r} \cdot
          u_{r+2}^{\ell_{r,r+2}} \cdot \ldots \cdot
          u_{n}^{\ell_{r,n}}}{u_{r+1}}$, for some integers
          $j_r,f_{r,i},g_r,\ell_{r,r+2},\dots,\ell_{r,n}$ taken from the
          set $S_r$.
  \end{enumerate}
  Above, observe that for convenience and differently from~\Cref{lemma:remove-u}
  we are reusing the variables $u_2,\dots,u_n$ instead of introducing fresh
  variables $\vec z$. Without loss of generality, we assume $S_r$ to always
  contain $0$ and $1$. In this way, if $u_{r+1}$ does not occur in $\phi_r$
  (e.g., because it has been ``accidentally'' eliminated together with a
  previous variable), then we can pick $j_r = 1$ and $f_{r,i} = 0$, for every $i
  \in [r+2..n]$, in order to obtain $\phi_{r+1} = \phi_r$.

  From~\Cref{lemma:last-lambda-lemma,lemma:relativise-quantifiers}, for every $r
  \in [0..n-1]$, we have:
  \begin{enumerate}
    \item\label{pp6:itemOne} If $\cn$ is a computable transcendental number,
    there is an algorithm computing $S_r$ from~$\psi_r$.
    \item\label{pp6:itemTwo} If $\cn$ has a root barrier $\sigma(d,h) \coloneqq
    c \cdot (d+\ceil{\ln(h)})^k$, for some $c,k \in \N_{\geq 1}$, then,
    \begin{center}
      $\begin{aligned}
      j_r &\in [1..\deg(u_{r+1},\phi_r)],\\
      f_{r,i} &\in [0..j_r-1] \ \text{ and } \ \abs{\ell_{r,i}} \leq \deg(u_i,\phi_r), \qquad\text{for every } i \in [r+2..n],\\
      \abs{g_r} &\leq \deg(u_{r+1},\phi_r) \cdot ((2^{4c} D_r \cdot \ceil{\ln(H_r)})^{6 M_r k^{3M_r}}\!\!+n \cdot \max\{\deg(u_i,\phi_r) : i \in [r+2..n]\}).
      \end{aligned}$
    \end{center}
    where $H_r \coloneqq \max\{8,\height(\phi_r)\}$, $D_r \coloneqq
    \deg(\cn,\phi_r)+2$, and $M_r$ is the maximum number of monomials occurring
    in a polynomial of $\phi_r$. Here, $\deg(u_i,\phi_r)$
    (resp.~$\deg(\cn,\phi_r)$) stands for the maximum degree that the variable
    $u_i$ (resp.~$\cn$) has in a polynomial occurring in $\phi_r$, \emph{which in this
    proof we always assume to be at least $1$} without loss of generality.
  \end{enumerate}

  As explained in~\Cref{subsection:quantifier-relativisation}, we can
  ``backpropagate'' the substitutions performed to define the formulae
  $\phi_1,\dots,\phi_n$ (\Cref{pp6:itemC}) to obtain a solution for
  $\psi$. Formally, we consider the set of integers $\{d_{i,h} : i \in [1..n], h
  \in [0..i-1]\}$ given by the following recursive definition: 
  \begin{align}
    d_{i,i-1} &\coloneqq g_{n-i} + \sum_{h=0}^{i-2} (d_{i-1,h} \cdot \ell_{n-i,n-h}), \notag\\
    d_{i,h} &\coloneqq d_{i-1,h} \cdot j_{n-i} + f_{n-i,n-h},
    &\text{ for every } h \in [0..i-2].
    \label{eq:def-dih}
  \end{align}
  Observe that $d_{1,0} = g_{n-1}$, and that all integers $d_{i,h}$ are
  ultimately defined in terms of integers from the sets $S_0,\dots,S_{n-1}$. We
  prove the following claim:

  \begin{claim}\label{claim:quantifier-relativisation}
    Suppose $\psi$ to be satisfiable. Then,
    for every $i \in [0..n]$, the assignment 
    \[ 
      \begin{cases}
        u_{n-h} &= \cn^{d_{i,h}} \qquad\qquad\text{for every } h \in [0..i-1]
      \end{cases}
    \]
    is a solution of $\phi_{n-i}$.
  \end{claim}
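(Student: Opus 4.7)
My plan is to prove the claim by induction on $i \in [0..n]$, exploiting the fact that the recursive definition of the integers $d_{i,h}$ in Equation~\eqref{eq:def-dih} is engineered to ``undo'' exactly one layer of the substitutions described in Property~\Cref{pp6:itemC} of the three properties of $\phi_0,\dots,\phi_n$.

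For the \emph{base case} $i=0$, by Property~\Cref{pp6:itemA} the variables of $\phi_n$ lie in $\{u_{n+1},\dots,u_n\} = \emptyset$, hence $\phi_n$ is a sentence of~$\exists \ipow{\cn}$ over the empty set of variables, i.e., equivalent to either $\top$ or $\bot$. Since by Property~\Cref{pp6:itemB} the sequence $\phi_0,\dots,\phi_n$ is equisatisfiable and $\psi = \phi_0$ is assumed satisfiable, $\phi_n$ must be equivalent to $\top$, and the empty assignment (indexed by the empty set $h \in [0..-1]$) trivially satisfies it.

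For the \emph{induction step}, I assume that $u_{n-h} = \cn^{d_{i,h}}$ for $h \in [0..i-1]$ solves $\phi_{n-i}$, and I want to show that $u_{n-h} = \cn^{d_{i+1,h}}$ for $h \in [0..i]$ solves $\phi_{n-i-1}$. Setting $r \coloneqq n-i-1$ in Property~\Cref{pp6:itemC}, we have
\begin{equation*}
\phi_{n-i} \,=\, \phi_{n-i-1}\sub{u_\ell^{j_r} \cdot \cn^{f_{r,\ell}}}{u_\ell : \ell \in [n-i+1..n]}\sub{\cn^{g_r} \cdot u_{n-i+1}^{\ell_{r,n-i+1}} \cdots u_n^{\ell_{r,n}}}{u_{n-i}}\,.
\end{equation*}
Hence, any solution to $\phi_{n-i}$ lifts to a solution of $\phi_{n-i-1}$ by setting $u_{n-h} = (\cn^{d_{i,h}})^{j_r} \cdot \cn^{f_{r,n-h}}$ for $h \in [0..i-1]$ (i.e., for the variables $u_\ell$ with $\ell \in [n-i+1..n]$), and $u_{n-i} = \cn^{g_r} \cdot \prod_{\ell = n-i+1}^{n} (\cn^{d_{i,n-\ell}})^{\ell_{r,\ell}}$. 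Reindexing the product via $h = n - \ell$, the exponent of $\cn$ in this last expression becomes $g_r + \sum_{h=0}^{i-1} d_{i,h} \cdot \ell_{r,n-h}$, which by definition is $d_{i+1,i}$; and the exponents in the other substitutions are $d_{i,h} \cdot j_r + f_{r,n-h} = d_{i+1,h}$ for $h \in [0..i-1]$. This is exactly the assignment in the claim for $i+1$.

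The main obstacle is not any mathematical subtlety but rather keeping the indexing straight: the ``backpropagation'' reverses the direction of substitution (the $\phi_{n-i}$ solution pushes forward to a $\phi_{n-i-1}$ solution by applying the substitutions that define $\phi_{n-i}$ from $\phi_{n-i-1}$), and the variables are labelled by $n-h$ while the substitution step $r = n - i - 1$ introduces a further shift. Once the bookkeeping is set up carefully, the recursion in Equation~\eqref{eq:def-dih} matches the exponents produced by the substitutions term by term, and the induction closes immediately.
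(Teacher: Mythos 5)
Your proof is correct and takes essentially the same route as the paper: induction on $i$, with the base case using equisatisfiability of $\phi_n$ with $\psi$ and the inductive step applying the substitution structure from Property~\ref{pp6:itemC} and verifying that the exponents match the recursion defining $d_{i,h}$. The only difference is cosmetic (you phrase the step as $i \to i+1$ where the paper writes $i-1 \to i$); the bookkeeping comes out identically.
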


  \begin{proof}
    The proof is by induction on $i$. 

    \begin{description}
      \item[base case: $i = 0$]
      By~\Cref{pp6:itemC}, the formula $\phi_n$ does not feature any variable,
      and, accordingly, the assignment in the claim is empty. Since $\phi_n$ is
      equisatisfiable with $\phi_0$, and $\phi_0 = \psi$ by~\Cref{pp6:itemC}, we
      conclude that $\phi_n$ is equivalent to $\top$.
      
      \item[induction step: $i \geq 1$] 
      By induction hypothesis, the assignment 
      \[ 
        \begin{cases}
          u_{n-h} &= \cn^{d_{i-1,h}} \qquad\qquad\text{for every } h \in [0..i-2]
        \end{cases}
      \]
      is a solution of $\phi_{n-(i-1)}$. By~\Cref{pp6:itemC}, we have 
      \[ 
        \phi_{n-(i-1)} = \phi_{n-i}\sub{u_t^{j_{n-i}} \cdot
          \cn^{f_{{n-i},t}}}{u_t : t \in [{n-i}+2..n]}\sub{\cn^{g_{n-i}} \cdot
          u_{{n-i}+2}^{\ell_{{n-i},{n-i}+2}} \cdot \ldots \cdot
          u_{n}^{\ell_{{n-i},n}}}{u_{{n-i}+1}}.
      \]
      Therefore, the following assignment is a solution of $\phi_{n-i}$:
      \[ 
        \begin{cases}
          u_{n-(i-1)} &= \cn^{g_{n-i}} \cdot (\cn^{d_{i-1,i-2}})^{\ell_{{n-i},{n-i}+2}} \cdot \ldots \cdot (\cn^{d_{i-1,0}})^{\ell_{{n-i},n}}\\
          u_{n-(i-2)} &= (\cn^{d_{i-1,i-2}})^{j_{n-i}} \cdot \cn^{f_{n-i,n-i+2}}\\ 
          u_{n-(i-3)} &= (\cn^{d_{i-1,i-3}})^{j_{n-i}} \cdot \cn^{f_{n-i,n-i+3}}\\ 
          \vdots\\
          u_n &= (\cn^{d_{i-1,0}})^{j_{n-i}} \cdot \cn^{f_{n-i,n}}
        \end{cases}
      \]
      that is,
      \[ 
        \begin{cases}
          u_{n-(i-1)} &= \cn^{g_{n-i}+\Sigma_{h=0}^{i-2} (d_{i-1,h} \cdot \ell_{n-i,n-h})}\\
          u_{n-h} &= \cn^{d_{i-1,h} \cdot j_{n-i}+f_{n-i,n-h}} \qquad\qquad\text{for every } h \in [0..i-2]
        \end{cases}
      \]
      and the statement follows by definition of $d_{i,i-1},\dots,d_{i,0}$.
      \qedhere%
    \end{description}
  \end{proof}

  Let us move back to the proof of~\Cref{theorem:small-model-property}. Given
  the finite sets $S_0,\dots,S_{n-1}$, we can compute an upper bound $U \in \N$
  to the absolute value of the largest integer among $d_{n,0},\dots,d_{n,n-1}$.
  Let~$P_{\psi} \coloneqq [-U..U]$. By~\Cref{claim:quantifier-relativisation} and we
  conclude that, whenever satisfiable, the formula~$\psi$ has a solution in the set
  $\{(\cn^{j_1},\dots,\cn^{j_n}) : j_1, \dots, j_n \in P_{\psi} \}$. Now, thanks to
  \Cref{pp6:itemOne,pp6:itemTwo} above, the finite sets $S_0,\dots,S_{n-1}$ can
  be computed in both the cases where either $\cn$ is a computable
  transcendental number or $\cn$ is a number with a polynomial root barrier. The
  set~$P_{\psi}$ can thus be computed in both these cases, which implies the
  effectiveness of the procedure required
  by~\Cref{theorem:small-model-property}.

  To conclude the proof, we derive an upper bound on $U$ in the case where $\cn$
  is a number with a polynomial root barrier $\sigma(d,h) \coloneqq c \cdot (d +
  \ceil{\ln(h)})^k$ for some $c,k \in \N_{\geq 1}$. We start by expressing, for
  every $r \in [0..n-1]$, the bounds from~\Cref{pp6:itemTwo} in terms of
  parameters of $\phi_0$. Below, let $E_r \coloneqq \max\{\deg(u_i,\phi_r) : i
  \in [r+1..n]\}$.
  \begin{claim}\label{claim:ugly-bounds}
    For every $r \in [0..n-1]$, we have
    \begin{align*}
        M_r &\leq M_0, &E_r &\leq 4^{2^r-1} \cdot (E_0)^{2^r},
        &H_r &\leq 2^r \cdot H_0, &\deg(\cn,\phi_r) &\leq (G_r)^{I^r-1} \cdot \deg(\cn,\phi_0)^{I^r},
    \end{align*}
    where $G_r \coloneqq n \cdot 2^{6 \cdot 2^r}(E_0)^{3 \cdot 2^r} \big(2^{r+4c+2} \ceil{\ln(H_0)}\big)^{I}$ and $I \coloneqq 6M_0k^{3M_0}$.
  \end{claim}
  \begin{proof}
    For $r = 0$ the claim is trivially true. Below, let us assume the claim to
    be true for $r \in [0..n-2]$, and show that it then also holds for $r+1$.
    Recall that, by \Cref{pp6:itemC}, we have 
    \begin{equation}
      \label{eq:phirp1phir}
      \phi_{r+1} = \phi_r\sub{u_i^{j_r} \cdot
      \cn^{f_{r,i}}}{u_i : i \in [r+2..n]}\sub{\cn^{g_r} \cdot
      u_{r+2}^{\ell_{r,r+2}} \cdot \ldots \cdot
      u_{n}^{\ell_{r,n}}}{u_{r+1}}.
    \end{equation}
    Recall that the integers $\ell_{r,i}$ and $g_r$ might be negative, and thus
    the substitutions performed in~\Cref{eq:phirp1phir} may require to update
    the polynomials in the formula so that they do not contain negative degrees
    for $\cn$ and each $u_i$. As described
    in~\Cref{subsection:quantifier-elimination}, these updates do not change the
    number of monomials nor the height of the polynomials, but might double the
    degree of each variable and of $\cn$. Hence, of the four bounds in the
    statement, which we now consider separately, these updates only affects the
    cases of $E_{r+1}$ and $\deg(\cn,\phi_{r+1})$.
    \begin{description}
      \item[case: $M_{r+1}$] 
        The substitutions done to obtain $\phi_{r+1}$ from $\phi_r$ replace
        variables with monomials. These type of substitutions do not increase
        the number of monomials occurring in the polynomials of a formula. (They
        may however decrease, causing an increase in the height of the
        polynomials, see below.) Therefore, we have $M_{r+1} \leq M_r \leq M_0$.
      \item[case: $H_{r+1}$] 
        The substitutions $\sub{u_i^{j_r} \cdot \cn^{f_{r,i}}}{u_i : i \in
        [r+2..n]}$ do not increase the heights of the polynomials in the
        formula. Indeed, consider a polynomial of the form 
        \begin{equation}
          \label{eq:a-polynomial}
          p(\cn,\vec u) + a \cdot \cn^{e_1} \cdot \vec u^{\vec d_1} 
          + b \cdot \cn^{e_2} \cdot \vec u^{\vec d_2},
        \end{equation}
        where $\vec u = (u_{r+1},\dots,u_{n})$, and $\cn^{e_1} \cdot
        \vec{u}^{\vec{d}_1}$ and $\cn^{e_2} \cdot \vec{u}^{\vec{d}_2}$ are two
        syntactically distinct monomials (i.e., either $e_1 \neq e_2$ or
        $\vec{d}_1 \neq \vec d_2$). Given $i \in [r+2..n]$, consider the
        substitution $\sub{u_i^{j_r} \cdot \cn^{f_{r,i}}}{u_i}$. We have three
        cases: 
        \begin{itemize}
          \item If $u_i$ occurs with different powers in the two monomials
          $\cn^{e_1} \cdot \vec{u}^{\vec{d}_1}$ and $\cn^{e_2} \cdot
          \vec{u}^{\vec{d}_2}$, then it will still occur with different powers
          in the monomials $(\cn^{e_1} \cdot \vec{u}^{\vec d_1})\sub{u_i^{j_r}
          \cdot \cn^{f_{r,i}}}{u_i}$ and $(\cn^{e_2} \cdot
          \vec{u}^{\vec{d}_2})\sub{u_i^{j_r} \cdot \cn^{f_{r,i}}}{u_i}$ obtained
          after replacement.
          \item If $u_i$ occurs with the same power $\hat{d}$ in the two
          monomials, and $e_1 \neq e_2$, then, after replacement, $\cn$ occurs
          with different powers in the obtained monomials $e_1 + \hat{d} \cdot
          f_{r,i}$ and $e_2 + \hat{d} \cdot f_{r,i}$ respectively.
          \item If $u_i$ occurs with the same power in the two monomials, and
          $e_1 = e_2$, then there is a variable $u_{t}$ with $t \neq i$ that, in
          the two monomials, occurs with different powers, say $\hat{d}_1$ and
          $\hat{d}_2$. (Note: one among $\hat{d}_1$ or $\hat{d}_2$ may be $0$.)
          This variable is unchanged by the substitution $\sub{u_i^{j_r} \cdot
          \cn^{f_{r,i}}}{u_i}$, and thus in the resulting monomials $u_{t}$
          still occurs with powers $\hat{d}_1$ and $\hat{d}_2$.
        \end{itemize}
        We move to the substitution $\sub{\cn^{g_r} \cdot u_{r+2}^{\ell_{r,r+2}}
        \cdot \ldots \cdot u_{n}^{\ell_{r,n}}}{u_{r+1}}$, which may increase the
        height of polynomials. Consider a polynomial as
        in~\Cref{eq:a-polynomial}. Observe that if $u_{r+1}$ occurs with a
        non-zero power in both the monomials $\cn^{e_1} \cdot \vec u^{\vec d_1}$
        and $\cn^{e_2} \cdot \vec u^{\vec d_2}$, then the two monomials
        $(\cn^{e_1} \cdot \vec u^{\vec d_1})\sub{\cn^{g_r} \cdot
        u_{r+2}^{\ell_{r,r+2}} \cdot \ldots \cdot u_{n}^{\ell_{r,n}}}{u_{r+1}}$
        and $(\cn^{e_2} \cdot \vec u^{\vec d_2})\sub{\cn^{g_r} \cdot
        u_{r+2}^{\ell_{r,r+2}} \cdot \ldots \cdot u_{n}^{\ell_{r,n}}}{u_{r+1}}$
        obtained after replacement are still different (a formal proof of this
        fact follows similarly to the one we have just discussed for the
        substitution $\sub{u_i^{j_r} \cdot \cn^{f_{r,i}}}{u_i}$). The same
        holds true if $u_{r+1}$ does not occur in any of the two monomials. If
        instead $u_{r+1}$ occurs with a non-zero power only in one monomial, say
        $\cn^{e_2} \cdot \vec u^{\vec d_2}$, we might have 
        \[ 
          (\cn^{e_2} \cdot \vec u^{\vec d_2})\sub{\cn^{g_r} \cdot
          u_{r+2}^{\ell_{r,r+2}} \cdot \ldots \cdot
          u_{n}^{\ell_{r,n}}}{u_{r+1}}
          =
          (\cn^{e_1} \cdot \vec u^{\vec d_1})\sub{\cn^{g_r} \cdot
          u_{r+2}^{\ell_{r,r+2}} \cdot \ldots \cdot
          u_{n}^{\ell_{r,n}}}{u_{r+1}}
          = \cn^{e_1} \cdot \vec u^{\vec d_1}.
        \]
        Hence, after replacement, the coefficient of $\cn^{e_1} \cdot
        \vec{u}^{\vec{d}_1}$ updates from $a$ to $(a+b)$. Note that no
        further increase are possible. Indeed, suppose $p(\cn,\vec u)$ contains
        a third monomial $\cn^{e_3}\vec u^{\vec d_3}$ in which $u_{r+1}$ has a
        non-zero power. By the arguments above, we have 
        \[
          (\cn^{e_2} \cdot \vec u^{\vec d_2})\sub{\cn^{g_r} \cdot
          u_{r+2}^{\ell_{r,r+2}} \cdot \ldots \cdot
          u_{n}^{\ell_{r,n}}}{u_{r+1}} 
          \neq 
          (\cn^{e_3} \cdot \vec u^{\vec d_3})\sub{\cn^{g_r} \cdot
          u_{r+2}^{\ell_{r,r+2}} \cdot \ldots \cdot
          u_{n}^{\ell_{r,n}}}{u_{r+1}},
        \]
        and therefore no other monomial from $p$ can be updated to $\cn^{e_1}
        \cdot \vec u^{\vec d_1}$ after replacement. Since $\abs{a},\abs{b} \leq
        H_r$, we have $\abs{a+b} \leq 2 \cdot H_r$. This shows $H_{r+1} \leq 2
        \cdot H_r \leq 2^{r+1} \cdot H_0$.
      \item[case: $E_{r+1}$] Consider $u_i$ with $i \in [r+2..n]$. We show
        $\deg(u_i,\phi_{r+1}) \leq 4 (E_r)^2$, which implies $E_{r+1} \leq 4
        (4^{2^r-1} (E_0)^{2^r})^{2} = 4^{2^{r+1}-1}(E_0)^{2^{r+1}}$, as
        required. Consider a monomial occurring in $\phi_r$ and let $d_1$ and
        $d_2$ be the non-negative integers occurring as powers of $u_i$ and
        $u_{r+1}$ in this monomial. The substitutions performed to obtain
        $\phi_{r+1}$ (\Cref{eq:phirp1phir}) update the power of $u_i$ in the
        monomial from $d_1$ to $d_1 \cdot j_r + d_2 \cdot \ell_{r,i}$.
        By~\Cref{pp6:itemTwo}, $j_r \in [1..\deg(u_{r+1},\phi_r)]$ and
        $\abs{\ell_{r,i}} \leq \deg(u_i,\phi_r)$, and therefore
        $j_r,\abs{\ell_{r,i}} \leq E_r$. We conclude that $\abs{d_1 \cdot j_r +
        d_2 \cdot \ell_{r,i}} \leq 2 \cdot (E_r)^2$. Lastly, we need to account
        for the updates performed to the formula in order remove the negative
        integers that occur as powers of the variables and of $\cn$. As already
        stated, in the worst case, these updates double the degree of each
        variable, and so $E_{r+1} \leq 4 (E_r)^2$.
      \item[case: $\deg(\cn,\phi_{r+1})$] 
        We start by reasoning similarly to the previous case. 
        Consider a monomial $\cn^{d} \cdot u_{r+1}^{d_{r+1}} \cdot \ldots \cdot u_{n}^{d_n}$
        occurring in $\phi_r$. 
        The substitutions performed to obtain $\phi_{r+1}$ 
        update the power of $\cn$ from $d$ to $d + g_r \cdot d_{r+1} + \sum_{i = r+2}^n f_{r,i} \cdot d_i$.
        Observe that 
        \begin{align*}
          \abs{d + g_r \cdot d_{r+1} + \sum_{i = r+2}^n f_{r,i} \cdot d_i}
          \leq{}&
            \deg(\cn,\phi_r) + \Big(\abs{g_r} + \textstyle\sum_{i = r+2}^n \abs{f_{r,i}}\Big) \cdot E_r\\
          \leq{}& 
            \deg(\cn,\phi_r) + \Big(\abs{g_r} + \textstyle\sum_{i = r+2}^n E_r\Big) \cdot E_r 
            &\text{by~\Cref{pp6:itemTwo}}.
        \end{align*}
        Accounting for the updates performed to the formula in order to remove negative powers, 
        we conclude that $\deg(\cn,\phi_{r+1})$ is bounded by 
        $2 \cdot (\deg(\cn,\phi_r) + (\abs{g_r} +\!\!\sum_{i = r+2}^n E_r) \cdot E_r)$.
        We further analyse this quantity as follows:
        \allowdisplaybreaks{
        \begin{align*}
          & 2 \cdot (\deg(\cn,\phi_r) + (\abs{g_r} +\textstyle\sum_{i = r+2}^n E_r) \cdot E_r)\\
          \leq{}& 
            2\deg(\cn,\phi_r) + 
            2\Big(E_r ((2^{4c} D_r \ceil{\ln(H_r)})^{6 M_r k^{3M_r}}\!\!+n E_r)
            +\textstyle\sum_{i = r+2}^n E_r\Big) \cdot E_r 
            &\hspace{-0.2cm}\text{by~\Cref{pp6:itemTwo}}\\
          \leq{}& 
            2\deg(\cn,\phi_r) + 
            4n \cdot (E_r)^3 \cdot (2^{4c} D_r \cdot \ceil{\ln(H_r)})^{6 M_r k^{3M_r}}
          \\
          \leq{}& 
            2 \deg(\cn,\phi_r) + 
            4n \cdot (E_r)^3 \cdot (2^{4c} (\deg(\cn,\phi_r)+2) \cdot \ceil{\ln(H_r)})^{6 M_r k^{3M_r}}
          &\hspace{-0.2cm}\text{def.~of~$D_r$}\\
          \leq{}& 
            2\deg(\cn,\phi_r) + 
            4n \cdot (E_r)^3 (2^{4c+2} \deg(\cn,\phi_r) \cdot \ceil{\ln(H_r)})^{6 M_r k^{3M_r}}
          &\hspace{-0.5cm}\text{$\deg(\cn,\phi_r) \geq 1$}\\
          \leq{}& 
            5n \cdot (E_r)^3 (2^{4c+2} \cdot \ceil{\ln(H_r)})^{6 M_r k^{3M_r}} \deg(\cn,\phi_r)^{6 M_r k^{3M_r}}
          \\
          \leq{}& 
            5n (4^{2^r-1}(E_0)^{2^r})^3 (2^{4c+2} \ceil{\ln(2^r H_0)})^{6 M_0 k^{3M_0}} \deg(\cn,\phi_r)^{6 M_0 k^{3M_0}}
          \\
          &&\hspace{-2.5cm}\text{bounds on $E_r$, $H_r$ and $M_r$}\\
          \leq{}& 
            n 2^{6 \cdot 2^r}(E_0)^{3 \cdot 2^r} (2^{r+4c+2} \ceil{\ln(H_0)})^{6 M_0 k^{3M_0}} \deg(\cn,\phi_r)^{6 M_0 k^{3M_0}}
          \\
          \leq{}& 
          G_r \cdot \deg(\cn,\phi_r)^{I}
          &\hspace{-1cm}\text{def.~of~$G_r$ and $I$}
          \\
          \leq{}& 
          G_r \cdot ((G_r)^{I^r-1} \cdot \deg(\cn,\phi_0)^{I^r})^{I}
          &\hspace{-1.6cm}\text{bound on $\deg(\cn,\phi_r)$}
          \\
          \leq{}& 
          (G_{r+1})^{I^{r+1}-1} \cdot \deg(\cn,\phi_0)^{I^{r+1}}
          &\hspace{-4cm}\text{since $I \geq 2$ and $G_{r+1} \geq G_r$.}
        \end{align*}}
        This completes the proof of the claim.
        \qedhere
    \end{description}
  \end{proof}
  We use the bounds in~\Cref{claim:ugly-bounds} to also bound the quantities
  $j_r$, $f_{r,i}$, $\abs{\ell_{r,i}}$ and $\abs{g_r}$.
  \begin{claim}
    \label{claim:less-ugly-bounds}
    For every $r \in [0..n-1]$ and $i \in [r+2..n]$, we have
    \begin{align*}
      j_r,f_{r,i},\abs{\ell_{r,i}} &\leq 4^{2^r} (E_0)^{2^r}\!, \text{ and } 
      \abs{g_r} \leq \big(n \cdot 2^{2^{r+4}+4c}(E_0)^{2^{r+3}} \ceil{\ln(H_0)} \cdot \deg(\cn,\phi_0) \big)^{(6M_0k^{3M_0})^{r+2}}.
    \end{align*}
  \end{claim}
  \begin{proof}
    By~\Cref{pp6:itemTwo}, 
    the numbers $j_r$, $f_{r,i}$ and $\abs{\ell_{r,i}}$ 
    are all bounded by $E_r$, 
    which in turn is bounded by $4^{2^r} (E_0)^{2^r}$ 
    (by~\Cref{claim:ugly-bounds}).
    Let us now consider $\abs{g_r}$.
    Observe that $\deg(\cn,\phi_r)$  and $\abs{g_r}$ are mutually dependant, 
    and in particular that in the proof of~\Cref{claim:ugly-bounds} 
    we have bounded $\deg(\cn,\phi_{r+1})$ with a long chain of manipulations establishing, among other inequalities, 
    \begin{align*}
        2 \cdot (\deg(\cn,\phi_r) + (\abs{g_r} + \sum\nolimits_{i = r+2}^n E_r) \cdot E_r) \ \leq \ (G_{r+1})^{I^{r+1}-1} \cdot \deg(\cn,\phi_0)^{I^{r+1}},
    \end{align*}
    where $G_{r+1} \coloneqq n \cdot 2^{6 \cdot 2^{r+1}}(E_0)^{3 \cdot 2^{r+1}} \big(2^{r+4c+3} \ceil{\ln(H_0)}\big)^{I}$ and $I \coloneqq 6M_0k^{3M_0}$.
    Since $\abs{g_r}$ is smaller than $(\deg(\cn,\phi_r) + (\abs{g_r} +\!\!\sum_{i = r+2}^n E_r) \cdot E_r)$, we conclude that 
    \begin{align*}
      \abs{g_r} 
        &\leq (G_{r+1})^{I^{r+1}-1} \cdot \deg(\cn,\phi_0)^{I^{r+1}}\\
        &\leq \Big(n \cdot 2^{6 \cdot 2^{r+1}}(E_0)^{3 \cdot 2^{r+1}} \big(2^{r+4c+3} \ceil{\ln(H_0)}\big)^{I} \cdot \deg(\cn,\phi_0) \Big)^{I^{r+1}}\\
        &\leq \Big(n \cdot 2^{6 \cdot 2^{r+1}+r+4c+3}(E_0)^{3 \cdot 2^{r+1}} \ceil{\ln(H_0)} \cdot \deg(\cn,\phi_0) \Big)^{I^{r+2}}\\
        \hspace{1.1cm}&\leq \Big(n \cdot 2^{2^{r+4}+4c}(E_0)^{2^{r+3}} \ceil{\ln(H_0)} \cdot \deg(\cn,\phi_0) \Big)^{(6M_0k^{3M_0})^{r+2}}.
        &\hspace{2.9cm}\text{\qedhere}
    \end{align*}
  \end{proof}
  Next, we bound the integers $d_{i,h}$ from~\Cref{eq:def-dih}, with $i \in [1..n]$ and $h \in [0..i-1]$.
  \begin{claim}
    \label{claim:bounds-on-dih}
    For every $i \in [1..n]$ and $h \in [0..i-1]$ we have 
    \[ 
      \abs{d_{i,h}} \leq 2^{h} (2A)^{i-1}B,
    \]
    where $A \coloneqq 4^{2^n} (E_0)^{2^n}$ and $B \coloneqq \Big(n \cdot
    2^{2^{n+3}+4c}(E_0)^{2^{n+2}} \ceil{\ln(H_0)} \cdot \deg(\cn,\phi_0)
    \Big)^{(6M_0k^{3M_0})^{n+1}}$.
  \end{claim}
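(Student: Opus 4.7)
The plan is to argue by induction on $i$, directly unfolding the recursive definition of $d_{i,h}$ given by~\Cref{eq:def-dih} and using the uniform bounds on $j_r$, $f_{r,i}$, $|\ell_{r,i}|$ and $|g_r|$ already established in~\Cref{claim:less-ugly-bounds}. First, note that for every $r \in [0..n-1]$ the numbers $A$ and $B$ in the statement uniformly dominate the quantities from~\Cref{claim:less-ugly-bounds}: indeed $4^{2^r}(E_0)^{2^r} \leq A$ since $r \leq n$, and the expression bounding $|g_r|$ is monotone in $r$, so it is at most $B$ when $r \leq n-1$ (the exponents $r+4$, $r+3$, $r+2$ are replaced by $n+3$, $n+2$, $n+1$ respectively). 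In particular, $A, B \geq 1$, which will let us absorb additive slack into multiplicative factors.

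For the base case $i = 1$, we have $d_{1,0} = g_{n-1}$ by~\Cref{eq:def-dih}, hence $|d_{1,0}| \leq B = 2^{0}(2A)^{0} B$, as required. For the inductive step, fix $i \in [2..n]$ and suppose $|d_{i-1,h'}| \leq 2^{h'}(2A)^{i-2}B$ for every $h' \in [0..i-2]$. We split according to whether $h < i-1$ or $h = i-1$:
\begin{itemize}
  \item If $h \in [0..i-2]$, then $d_{i,h} = d_{i-1,h} \cdot j_{n-i} + f_{n-i,n-h}$, so by the inductive hypothesis and the bound $j_{n-i}, |f_{n-i,n-h}| \leq A$ we get
  \begin{equation*}
    |d_{i,h}| \ \leq\ 2^{h}(2A)^{i-2}B \cdot A + A \ \leq\ 2 \cdot 2^{h}(2A)^{i-2}B \cdot A \ =\ 2^{h}(2A)^{i-1}B,
  \end{equation*}
  where the middle inequality uses $A \leq 2^{h}(2A)^{i-2} B \cdot A$, which holds since $B \geq 1$ and $i \geq 2$.
  \item If $h = i-1$, then $d_{i,i-1} = g_{n-i} + \sum_{h'=0}^{i-2}(d_{i-1,h'} \cdot \ell_{n-i,n-h'})$, and bounding $|g_{n-i}| \leq B$ and $|\ell_{n-i,n-h'}| \leq A$ yields
  \begin{equation*}
    |d_{i,i-1}| \ \leq\ B + A(2A)^{i-2}B \sum_{h'=0}^{i-2} 2^{h'} \ \leq\ B + 2^{i-1} A(2A)^{i-2} B \ =\ B + 2^{i-2}(2A)^{i-1}B.
  \end{equation*}
  Since $2^{i-2}(2A)^{i-1} \geq 1$ when $i \geq 2$, the right-hand side is at most $2 \cdot 2^{i-2}(2A)^{i-1}B = 2^{i-1}(2A)^{i-1}B$, as required.
\end{itemize}

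I expect no real obstacle here: this is a direct induction whose only delicate point is choosing the exponents of $2$, $A$, and $B$ so that the contributions of $g_{n-i}$ (a one-shot term bounded by $B$) and of the sums $\sum d_{i-1,h'} \ell_{n-i,n-h'}$ (which compound geometrically in both $A$ and~$2$) can be merged into a single product. The factor $2^{h}$ in the bound is precisely what absorbs the geometric series $\sum_{h'=0}^{i-2} 2^{h'}$ appearing in the case $h = i-1$, whereas keeping $B$ as a single uniform factor (rather than letting it grow with $i$) is possible because $g_r$ enters the recursion only once on each branch of the backpropagation tree. Once~\Cref{claim:bounds-on-dih} is established, the bound on $P_\psi$ claimed in~\Cref{theorem:small-model-property} follows by plugging in $i = n$ and $h \leq n-1$, and simplifying the resulting triple-exponential expression in $H_0$, $D_0$, $E_0$, and $M_0$ to the form stated in terms of $H = H_0$ and $D = D_0$.
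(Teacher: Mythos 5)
Your proof is correct and follows essentially the same route as the paper's: both argue by induction on $i$, split into the cases $h \in [0..i-2]$ and $h = i-1$, use $j_r,f_{r,\cdot},|\ell_{r,\cdot}| \leq A$ and $|g_r| \leq B$, and absorb additive slack using $A,B \geq 1$ (the paper introduces an auxiliary recursive upper-bound family $D_{i,h}$ and bounds that instead of $|d_{i,h}|$ directly, but the arithmetic is identical).
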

  \begin{proof}
    By~\Cref{claim:less-ugly-bounds}, for every $r \in [0..n-1]$ and $i \in
    [r+2..n]$, $j_r,f_{r,i},\abs{\ell_{r,i}} \leq A$ and $\abs{g_r} \leq
    B$. 
    Following the definition of $d_{i,h}$ given in~\Cref{eq:def-dih}, 
    for every $i \in [1..n]$ and $h \in [0..i-2]$, $\abs{d_{i,h}}$ is
    bounded by the positive integer $D_{i,h}$ recursively defined as
    follows. For every $i \in [1..n]$,
    \begin{align*}
      D_{i,i-1} &\coloneqq B + \sum\nolimits_{h=0}^{i-2} D_{i-1,h} \cdot A, \notag\\
      D_{i,h} &\coloneqq (D_{i-1,h} + 1) \cdot A,
      &\text{ for every } h \in [0..i-2].
    \end{align*}
    Observe that $D_{1,0} = B$ and that, more generally, every $D_{i,h}$ is greater or equal to $B$.
    Since $B \geq 1$, for $h \neq i-1$ we have $D_{i,h} \leq 2 \cdot D_{i-1,h} \cdot A$. 
    To complete the proof, we show $D_{i,h} \leq 2^{h} (2A)^{i-1}B$ 
    by induction on $i$. 
    \begin{description}
      \item[base case: $i = 1$] In this case we only need to consider $D_{1,0}$, which as already states is equal to $B$. The base case thus follows trivially.
      \item[induction step: $i \geq 2$] Let $h \in [0..i-2]$. We consider two cases, depending on whether $h = i-1$. If $h \neq i-1$, then by definition of $D_{i,h}$ we have $D_{i,h} \leq 2 \cdot D_{i-1,h} \cdot A$. Then, from the induction hypothesis, 
        \begin{align*} 
          D_{i,h} \leq 2 \big( 2^{h} (2A)^{i-2} B\big)A 
          \leq 2^{h} (2A)^{i-1} B.
        \end{align*}
        If $h = i-1$, then by definition of $D_{i,h}$ we have  $D_{i,i-1} = B + \sum_{h=0}^{i-2} D_{i-1,h} \cdot A$. By applying the induction hypothesis, we obtain:
        \begin{align*} 
          D_{i,i-1} 
          &\leq B + \sum\nolimits_{h=0}^{i-2} (2^{h} (2A)^{i-2} B) \cdot A \leq B + 2^{i-2}A^{i-1}B \cdot \sum\nolimits_{h=0}^{i-2} 2^{h}\\ 
          ~\hspace{2cm}&\leq B + 2^{i-2}A^{i-1}B \cdot 2^{i-1} 
          \leq 2^{i-1} (2A)^{i-1}B.  
          &\hspace{1.8cm}\text{\qedhere}
        \end{align*}
    \end{description}
  \end{proof}
  Together, \Cref{claim:quantifier-relativisation}
  and~\Cref{claim:bounds-on-dih} show that, whenever satisfiable,
  $\psi(u_1,\dots,u_n)$ (that is,~$\phi_0$) has a solution assigning to each
  variable an integer power of $\cn$ of the form $\cn^\beta$ with $\abs{\beta}
  \leq 2^{2n}A^nB$, where $A$ and $B$ are defined as
  in~\Cref{claim:bounds-on-dih}.
  We conclude the
  proof by simplifying this bound to improve its readability, obtaining the one
  in the statement. 
  Recall that $H \coloneqq \max\{8,\height(\psi)\}$,
  $D \coloneqq \deg(\psi)+2$  
  (where $\deg(\psi)$ also account for the degree of $\cn$),
  $E_0 = \max\{\deg(u_i,\psi) : i
  \in [1..n]\}$ 
  and $M_0$ is the number of monomials in a polynomial of $\psi$, 
  which can be crudely bounded as $D^{n+1}$ (the monomials also contain $\cn$).
  \begin{align*}
    \abs{\beta} 
      &\leq 2^{2n}A^nB\\
      &\leq 2^{2n}\big(4^{2^n} (E_0)^{2^n}\big)^n  \Big(n \cdot
      2^{2^{n+3}+4c}(E_0)^{2^{n+2}} \ceil{\ln(H_0)} \cdot \deg(\cn,\phi_0)
      \Big)^{(6M_0k^{3M_0})^{n+1}}\\
      &\leq 2^{2n}\big(4^{2^n} D^{2^n}\big)^n  \Big(n \cdot
      2^{2^{n+3}+4c}D^{2^{n+2}} \ceil{\ln(H)} \cdot D
      \Big)^{(6D^{n+1}k^{3D^{n+1}})^{n+1}}\\
      &\leq \Big(
      2^{4c}D^{2n(2^n+1) + n2^n + 2^{n+3} + \log(n) + 2^{n+2}+1} \ceil{\ln(H)}
      \Big)^{(6D^{n+1}k^{3D^{n+1}})^{n+1}}
      &\hspace{-0.3cm}\text{as } D \geq 2\\
      &\leq \Big(
      2^{4c}D^{18n2^n} \ceil{\ln(H)}
      \Big)^{(6D^{n+1}k^{3D^{n+1}})^{n+1}}
      &\hspace{-0.3cm}\text{as } n \geq 1\\
      &\leq \Big(
      2^{c} \ceil{\ln(H)}
      \Big)^{72 n2^n\log(D) \cdot (6D^{n+1}k^{3D^{n+1}})^{n+1}},
  \end{align*}
  and the exponent in the last expression can be upper bounded as 
  \begin{align*}
    72 n2^n\log(D) \cdot (6D^{n+1}k^{3D^{n+1}})^{n+1} \leq{} 
    2^{13n}D^{5n^2}(k^{3D^{n+1}})^{n+1} 
    \leq D^{2^5n^2}k^{D^{8n}}.
  \end{align*}
  Therefore, one can set $U \coloneqq (2^c \ceil{\ln(H)})^{D^{2^5 n^2} k^{D^{8n}}}$ 
  when defining $P_{\psi} \coloneqq [-U..U]$. 
  This concludes the proof of~\Cref{theorem:small-model-property}.
\end{proof}

\section{Proof of Theorem~\ref{theorem:result-root-barrier}} 
\label{sec:poly-evaluation}

This section completes the proof of~\Cref{theorem:result-root-barrier}.
Recall that~\Cref{theorem:result-root-barrier}.\ref{theorem:result-root-barrier:point3} was proven in~\Cref{subsection:correctness-algorithm}.
All remaining cases of~\Cref{theorem:result-root-barrier}  
are shown by discussing polynomial-time Turing machines and polynomial root barriers
of the numbers under analysis.
\Cref{theorem:result-root-barrier}.\ref{theorem:result-root-barrier:point0} and \Cref{theorem:result-root-barrier}.\ref{theorem:result-root-barrier:point1}
are proven in~\Cref{subsection:poly-evaluation-algebraic}. 
\Cref{theorem:result-root-barrier}.\ref{theorem:result-root-barrier:point2} 
is proven in~\Cref{theorem:result-root-barrier}.\ref{theorem:result-root-barrier:point2}.

\subsection{Proof of \Cref{theorem:result-root-barrier}.\ref{theorem:result-root-barrier:point0} and 
\Cref{theorem:result-root-barrier}.\ref{theorem:result-root-barrier:point1}:~$\cn$ is algebraic.} 
\label{subsection:poly-evaluation-algebraic}
Let $\cn$ be a fixed algebraic number represented by $(q,\ell,u)$.
It is well known that algebraic numbers have a polynomial root barrier where the integer 
$k$ from~\Cref{theorem:general-result-root-barrier} equals $1$:

\begin{restatable}[{\cite[Theorem~A.1]{Bugeaud04}}]{theorem}{TheoremAlgRootBarrier}\label{theorem:alg-root-barrier}
  Let $\alg \in \R$ be a zero
  of a non-zero integer polynomial~$q(x)$,
  and consider a non-constant integer polynomial $p(x)$.
  Then, either ${p(\alg) = 0}$ or 
  ${\ln \abs{p(\alg)} \geq - \deg(q) \cdot \big(\ln(\deg(p)+1) + \ln \height(p)\big)
  - \deg(p) \cdot \big( \ln(\deg(q)+1) + \ln \height(q) \big)}$.
\end{restatable}

We now show how to construct a polynomial-time Turing machine computing 
$\cn$. The idea of the proof is performing 
a dichotomy search to refine the interval~$[\ell,u]$.


\begin{restatable}{lemma}{LemmaApproxAlgebraicBody}
  \label{lemma:approx-algebraic}
  Let~$\alg$ be a fixed algebraic number represented by $(q,\ell,u)$.
  One can construct a polynomial-time Turing machine computing $\alg$.
\end{restatable}
\begin{proof}
  We prove the lemma by showing that there is an algorithm that given as input
  $L \in \N$ written in unary computes in time polynomial in $L$ two rational
  numbers $\ell'$ and $u'$ such that $(q,\ell',u')$ is a representation of
  $\alg$ and $0 \leq u'-\ell' \leq 2^{-L}$.
  
  Following what is written in the last paragraph of the preliminaries (\Cref{sec:preliminaries}), 
  we can assume without loss of generality that either $\ell = u$ or $\cn \in (\ell,u)$ 
  and $(\ell,u) \cap \Z = \emptyset$, 
  which implies $u-\ell < 1$. 
  The algorithm further refine the interval $[\ell,u]$ to an accuracy that depends on $L$ by performing a dichotomy search: 
  \begin{algorithmic}[1]
    \While{$u-\ell > 2^{-L}$}\label{alg:ref-alg:line1}
      \State $m \gets \frac{u-\ell}{2}$
      \label{alg:ref-alg:line2}
      \If{$q(m) = 0$} 
        \myreturn $(q,m,m)$
        \label{alg:ref-alg:line3} 
      \EndIf
      \If{$\exists x : \ell < x < m \land q(x) = 0$}
        $u \gets m$
        \label{alg:ref-alg:line4}
      \Else \ $\ell \gets m$
        \label{alg:ref-alg:line5}
      \EndIf
    \EndWhile
    \State \myreturn $(q,\ell,u)$
  \end{algorithmic}

  \paragraph{\textit{Correctness of the algorithm}} 
  At each iteration of the
  \textbf{while} loop, after setting $m = \frac{u-\ell}{2}$, one of the following three cases holds: $\cn = m$, $\ell <
  \cn < m$, or $m < \cn < u$. Lines~\ref{alg:ref-alg:line3}
  and~\ref{alg:ref-alg:line4} check which of the three cases holds, since $\cn$ is the only root of $q(x)$ in the interval $[\ell,u]$.
  We can implement the test in line~\ref{alg:ref-alg:line4} by relying, e.g., on
  the procedure form~\Cref{theorem:basu}.

  \paragraph{\textit{Running time of the algorithm}} 
  Observe that at each iteration of the \textbf{while} loop the distance between
  $\ell$ and $u$ is halved. Since initially $u-\ell < 1$, this means that the
  \textbf{while} loop iterates at most $L$
  times. Furthermore, the procedure form~\Cref{theorem:basu} runs in polynomial
  time when the input formula has a fixed number of variables.
  Therefore, in order to argue that 
  the procedure runs in polynomial time 
  it suffices to track the growth of the numbers $\ell$ and $u$ across $L$ iterations of the while loop.

  For simplicity, below we assume~$\ell$, $u$ and $m$ to be standard programming variables, all
  storing pairs of (not necessarily coprime) integers representing rational numbers.  
  We also let $(a,b)$ and $(c,d)$ be the content of the variables $\ell$ and
  $u$, respectively, at the beginning of the algorithm. These two pairs encode
  the rationals $\frac{a}{b}$ and $\frac{c}{d}$. We assume $a,c \in \Z$ and $b,d
  \in \N_{\geq 1}$.
  This assumption implies that, for a given $\ell = (\ell_1,\ell_2)$ and $u = (u_1,u_2)$, line~\ref{alg:ref-alg:line2}
  of the algorithm assigns to $m$ the pair $(m_1,m_2)$ where
  \begin{equation}
    \label{eq:m1m2}
    m_1 \coloneqq {\textstyle\frac{\lcm(\ell_2,u_2)}{u_2}u_1 - \frac{\lcm(\ell_2,u_2)}{\ell_2}\ell_1} 
    \qquad\text{ and }\qquad
    m_2 \coloneqq 2 \cdot \lcm(\ell_2,u_2),
  \end{equation}
  One can easily check that then $m = \frac{u-\ell}{2}$ after 
  the execution of line~\ref{alg:ref-alg:line2}.
  
  We show the following loop invariant: 
  \begin{itemize}
    \item[] 
      \textit{After the $M$th iteration of the \textbf{while} loop, the variables $\ell$ and $u$
      store pairs from the set $S_M \coloneqq \{ (v_1,v_2) \in \Z \times \N :
          \abs{v_1} \leq 2^{M^2} \lcm(b,d)^{M}\max(\abs{a},\abs{b}), \text{ and } 
          v_2 \text{ divides } 2^{M} \lcm(b,d)
        \}$.}
  \end{itemize}
  Observe that both $(a,b)$ and $(c,d)$ belongs to $S_0$, 
  hence the invariant holds after the $0$th iteration of the
  \textbf{while} loop. 
  Consider now the $(M+1)$th iteration of the \textbf{while} loop, with
  $M \geq 0$. Let $(\ell_1,\ell_2) \in S_M$ and $(u_1,u_2) \in S_M$ be the pairs assigned to
  $\ell$ and $u$, respectively, at the beginning of this iteration. If the test
  performed in line~\ref{alg:ref-alg:line3} is successful, then the algorithm
  returns and we do not have anything to prove. Below, assume the test in
  line~\ref{alg:ref-alg:line3} to be unsuccessful, so that the algorithm
  completes the $(M+1)$th iteration of the loop. Let $(m_1,m_2)$ be the pair of
  integers computed in line~\ref{alg:ref-alg:line2}.
  At the end of the iteration of the loop, one of the following two possibilities occurs: 
  \begin{itemize}
    \item $\ell$ stores $(\ell_1,\ell_2)$ and $u$ stores $(m_1,m_2)$ (this occurs if the assignment in line~\ref{alg:ref-alg:line4} is executed), 
    \item $\ell$ stores $(m_1,m_2)$ and $u$ stores $(u_1,u_2)$ (this occurs if the assignment in line~\ref{alg:ref-alg:line5} is executed). 
  \end{itemize}
  To conclude the proof we show that $(m_1,m_2) \in S_{M+1}$. By~\Cref{eq:m1m2}, 
  ${m_2 = 2 \cdot \lcm(\ell_2,u_2)}$ 
  which, by induction hypothesis, divides $2 \cdot \lcm(2^{M} \lcm(b,d),2^{M} \lcm(b,d))$, i.e., $2^{M+1}\lcm(b,d)$.
  For $m_1$ instead, by applying the induction hypothesis we find:
  \begin{align*}
    \abs{m_1} 
      &= \abs{{\textstyle\frac{\lcm(\ell_2,u_2)}{u_2}u_1 - \frac{\lcm(\ell_2,u_2)}{\ell_2}\ell_1}} \leq \abs{2^M \lcm(b,d) \cdot u_1} + \abs{2^M \lcm(b,d) \cdot \ell_1}\\
      &\leq  2 \cdot \abs{2^M \lcm(b,d) \cdot 2^{M^2} \lcm(b,d)^{M}\max(\abs{a},\abs{b})}\\
      &\leq 2^{(M+1)^2} \lcm(b,d)^{M+1}\max(\abs{a},\abs{b}).
      &&\qedhere
  \end{align*}

\end{proof}


By applying~\Cref{theorem:general-result-root-barrier}.\ref{theorem:general-result-root-barrier:point1}, 
\Cref{theorem:alg-root-barrier} and \Cref{lemma:approx-algebraic}, 
we deduce that the satisfiability problem for $\exists\R(\ipow{\cn})$ 
is in \twoexptime. However, for algebraic numbers it is possible to 
obtain a better complexity result (\expspace) by slightly 
modifying~Steps~II and~III of~\Cref{algo:main-procedure}.

\begin{proof}[Proof of~\Cref{theorem:result-root-barrier}.\ref{theorem:result-root-barrier:point1}]
Let $\phi$ be a formula in input of~\Cref{algo:main-procedure}, and
$\psi(u_1,\dots,u_n)$ be the formula obtained from $\phi$ after executing
lines~\ref{algo:line1}--\ref{algo:line6}. In lines~\ref{algo:line7}
and~\ref{algo:line8}, guess the integers~$g_1,\dots,g_n$ in binary, instead of
unary. These numbers have at most $m$ bits where,
by~\Cref{theorem:basu,theorem:small-model-property}, $m$ is exponential
in~$\size(\phi)$. Let $g_i = \pm_{i} \sum_{j=0}^{m-1} d_{i,j} 2^j$, with
$d_{i,j} \in \{0,1\}$ and $\pm_i \in \{+1,-1\}$, so that $\cn^{g_i} =
\prod_{j=0}^{m-1}\cn^{\pm_i d_{ij}2^j}$. Note that the formula 
\[ 
  \gamma(x_0,\dots,x_{m-1}) \coloneqq q(x_0) = 0 \land \ell \leq x_0 \leq u \land \textstyle\bigwedge_{i=1}^{m-1} x_i = (x_{i-1})^2
\]
has a unique solution: for every $j \in [0..m-1]$, $x_j$ must be equal to
$\cn^{2^j}$. The formula $\psi$ is therefore equisatisfiable with the formula
$\psi' \coloneqq \psi\sub{x_0}{\cn} \land \gamma \land \bigwedge_{i=1}^n u_i =
\prod_{j=0}^{m-1}x_j^{\pm_id_{ij}}$, which (after rewriting $u_i =
\prod_{j=0}^{m-1}x_j^{\pm_id_{ij}}$ into $u_i \prod_{j=0}^{m-1}x_j^{d_{ij}} = 1$
when $\pm_i = -1$) is a formula from the existential theory of the reals of size
exponential in $\size(\phi)$. Since the
satisfiability problem for the existential theory of the reals is in
\pspace~\cite{Canny88}, we conclude that checking whether $\psi'$ is satisfiable
can be done in~\expspace. Accounting for Steps I and II, we thus obtain a
procedure running in non-deterministic exponential space (because of the guesses
in lines~\ref{algo:line7} and~\ref{algo:line8}), which can be determinised by
Savitch's theorem~\cite{Savitch70}.
\end{proof}

Similarly, we can modify~Steps~II and~III of~\Cref{algo:main-procedure} to
obtain a \nexptime upper bound for the case of $\cn$ natural number. For this
result we use the notion of fewnomial, also found in the literature as sparse
polynomial or lacunary polynomial~\cite{Khovanskii91}. A \emph{fewnomial} is a
representation of polynomial in which coefficients and degrees of monomials are
both represented in binary. More precisely, the fewnomial representation of a
polynomial ${p(x) = \sum_{i=1}^{k} a_i x^{d_i}}$, where $d_1,\dots,d_k$ are
non-negative integers and $d_i>d_{i+1}$ for $i\in[1..k-1]$, is given by the
sequence $((a_1,d_1),\dots,(a_k,d_k))$, where all $a_i$ and $d_i$ are
represented in binary encoding. This representation is particularly interesting
in the case of polynomials with very few monomials, hence the name. Observe that
$\deg(p)$ can be exponential in the size of $((a_1,d_1),\dots,(a_k,d_k))$. For
our purpose, we need a polynomial time algorithm for evaluating the sign of a
fewnomial on a given natural number. Both~\cite{MyasnikovUW12} and~\cite{HitarthMS26} provide such an
algorithm, although in the context of more general objects. Below we provide a
simpler, standalone procedure tailored to fewnomials.%

\begin{proposition}\label{prop:sign-fewnomial}
  Fix $n \in \N_{\geq 1}$. 
  There is a polynomial time algorithm that given in input an integer polynomial $p(x)$ 
  in fewnomial representation, returns the sign of $p(n)$.
\end{proposition}
\begin{proof}
  Let $((a_1,d_1),\dots,(a_k,d_k))$ be the fewnomial representation of $p(x)$, 
  where $d_i > d_{i+1}$ for every~$i\in[1..k-1]$. The following algorithm computes the 
  sign or $p(n)$:
  \begin{algorithmic}[1]
    \State $P \gets ((a_1,d_1),\dots,(a_k,d_k))$\label{alg:fewnomial_eval:line0}
    \While{$P$ is a sequence $((b_1,g_1),\dots,(b_\ell,g_\ell))$ with $\ell > 1$}
    \label{alg:fewnomial_eval:line1}
    \If{$\abs{b_1} \cdot n^{g_1-g_2} >\sum_{i=2}^\ell \abs{b_i}$} 
        \myreturn the sign of $b_1$\label{alg:fewnomial_eval:line3}
      \Else \ $P \gets((b_1\cdot n^{g_1-g_2}+b_2,g_2),\dots, (b_\ell,g_\ell))$
      \label{alg:fewnomial_eval:line4}
      \EndIf
    \EndWhile
    \State \myreturn the sign of $b_1$\label{alg:fewnomial_eval:line5}
  \end{algorithmic}

  \paragraph{\textit{Correctness of the algorithm}} Observe that at any point
  during its execution, $P$ stores a fewnomial representation
  $((b_1,g_1),\dots,(b_\ell,g_\ell))$ of a polynomial $q$ such that $p(n) =
  q(n)$. Indeed, $P$ represents $p$ when it is initialized in
  line~\ref{alg:fewnomial_eval:line0},  
  and the update performed in line~\ref{alg:fewnomial_eval:line4} preserve this
  property. Furthermore, observe that at each iteration of the \textbf{while}
  loop, the length of~$P$ shrinks by one. After~$k-1$ iterations in which the
  test of line~\ref{alg:fewnomial_eval:line3} fails, the loop exists; when this
  happens $P$ is a monomial, and the sign of $p(n)$ is given by the sign of the
  only coefficient $b_1$ (line~\ref{alg:fewnomial_eval:line5}). Therefore, to
  conclude that the algorithm is correct, it suffices to show that whenever the
  guard of the \textbf{if} statement in line~\ref{alg:fewnomial_eval:line3}
  holds, the sign of $q(n)$ corresponds to the sign of $b_1$. This is easy to
  see: if $\abs{b_1} \cdot n^{g_1-g_2} >\sum_{i=2}^\ell \abs{b_i}$, then $b_1
  \cdot n^{g_1}$ is larger, in absolute value, than $(\sum_{i=2}^\ell \abs{b_i})
  \cdot n^{g_2}$. But since the degrees $g_1,\dots,g_\ell$ are given in
  descending order this means $\abs{b_1} \cdot n^{g_1} > \abs{\sum_{i=2}^\ell
  \abs{b_i} \cdot n^{g_i}}$, and so the signs of
  $q(n)$ and $b_1$ coincide.


  \paragraph{\textit{Running time of the algorithm}}
  We prove the following invariant:
  \begin{center}
    \begin{minipage}{0.9\linewidth}
      \textit{Throughout the procedure, all sequences $((b_1,g_1),\dots,(b_\ell,g_\ell))$ 
      stored in $P$ are such that $\abs{b_1} \leq 2 \cdot \sum_{i=1}^k\abs{a_i}$, 
      $\sum_{j=2}^\ell \abs{b_j} \leq \sum_{i=1}^k\abs{a_i}$
      and $g_1,\dots,g_\ell$ are among $d_1,\dots,d_k$.}
    \end{minipage}
  \end{center}
  The invariant is clearly true in line~\ref{alg:fewnomial_eval:line0}, 
  and so it suffices to show that if $((b_1,g_1),\dots,(b_\ell,g_\ell))$ 
  satisfies the properties in the invariant, 
  then so does the sequence $((b_1\cdot n^{g_1-g_2}+b_2,g_2),\dots, (b_\ell,g_\ell))$ 
  computed in line~\ref{alg:fewnomial_eval:line3}. 
  The part of the invariant that concerns the degrees $g_1,\dots,g_\ell$ is 
  trivial to check. For the magnitude of $b_1,\dots,b_\ell$, 
  it suffices to bound $\abs{b_1\cdot n^{g_1-g_2}+b_2}$. 
  Line~\ref{alg:fewnomial_eval:line3} is only executed when 
  $\abs{b_1} \cdot n^{g_1-g_2} \leq \sum_{i=2}^\ell \abs{b_i}$, and so
  \begin{equation*}
    \abs{b_1 n^{g_1-g_2}+b_2}
    \leq \abs{b_1 n^{g_1-g_2}} + \abs{b_2} 
    \leq \sum\nolimits_{i=2}^\ell \abs{b_i} + \abs{b_2}\leq 2 \cdot \sum\nolimits_{i=1}^k\abs{a_i}.
  \end{equation*}
  Hence, all numbers computed by the algorithm have polynomial bit
  size. Since the \textbf{while} loop terminates after at most $k-1$ iterations,
  to conclude that the algorithm runs in polynomial time 
  it now suffices to discuss how to perform the test $\abs{b_1} \cdot n^{g_1-g_2} >\sum_{i=2}^\ell \abs{b_i}$ in polynomial time. We translate $\sum_{i=2}^\ell \abs{b_i}$ in base-$n$ encoding.
  Let $L$ be the length of this encoding, which is roughly $\ceil{\log_n(\sum_{i=2}^\ell \abs{b_i})}$. 
  If $L < g_1-g_n$, then the test passes if and only if $b_1 \neq 0$. 
  Otherwise, $n^{g_1-g_2}$ is smaller than $\sum_{i=2}^\ell \abs{b_i}$, and so we can compute $g_1-g_2$ in unary encoding (in polynomial time), translate also $\abs{b_1}$ in base-$n$, and perform the comparison directly.
\end{proof}

\begin{proof}[Proof of~\Cref{theorem:result-root-barrier}.\ref{theorem:result-root-barrier:point0}]
  Let $\phi$ be a formula from $\exists\R(n^\Z)$ with $n\in \N_{\geq 1}$ fixed. 
  If $n = 1$ then we can simply replace every occurrence of the predicate $n^{\Z}(x)$ 
  with $x = 1$ and rely on the procedure in~\Cref{theorem:basu} to check 
  the satisfiability of $\phi$. For $n \geq 2$, the algorithm works as follows.
  We execute lines~\ref{algo:line1}--\ref{algo:line6} of~\Cref{algo:main-procedure}.
  By~\Cref{theorem:basu} and~\Cref{theorem:small-model-property} the resulting formula $\psi(u_1,\dots,u_k)$ is such that 
  its degree and height are, respectively, exponential and doubly exponential in size($\phi$).
  In exponential time in size($\phi$), we convert all polynomials in $\psi$ to fewnomials (in particular, updating their degrees to a binary representation).
  We perform the guesses from lines~\ref{algo:line7} and~\ref{algo:line8} in binary
  and substitute each $u_i$ with $x^{g_i}$, where $x$ is a fresh variable, again 
  keeping polynomials encoded in fewnomial representation. 
  After these substitutions, we obtain a univariate formula $\gamma(x)$ 
  of size exponential respect to the size of $\phi$, 
  and in which polynomials are encoded as fewnomials.
  By \Cref{prop:sign-fewnomial}, we can evaluate the truth of each inequality in $\gamma(n)$ 
  in time exponential in size($\phi)$. 
  Again in exponential time, we can then evaluate the Boolean structure of $\gamma$, 
  finding whether $n$ is a solution. 
  Overall, the algorithm runs in non-deterministic exponential time (because of the guesses
  in lines~\ref{algo:line7} and~\ref{algo:line8}). 
  Its correctness follows directly from the correctness of \Cref{algo:main-procedure}.
\end{proof}

\subsection{Proof of \Cref{theorem:result-root-barrier}.\ref{theorem:result-root-barrier:point2}:~$\cn$ among some classical transcendental numbers}
\label{subsection:poly-evaluation-transcendental}

By~\Cref{theorem:general-result-root-barrier}.\ref{theorem:general-result-root-barrier:point2},
it suffices to show that all bases considered in~\Cref{theorem:result-root-barrier}.\ref{theorem:result-root-barrier:point2} \textbf{(1)}~are
polynomial-time computable, and \textbf{(2)}~have a
polynomial root barrier. Below we summarise how (and where) these two properties are proven for each base appearing in~\Cref{theorem:result-root-barrier}.\ref{theorem:result-root-barrier:point2}:%
\begin{description}
  \item[case: $\cn = \pi$]~\\
  \textit{Polynomial-time Turing machine:} By~\Cref{lemma:poly-time-pi}.\\
  \textit{Polynomial root barrier:} See~\Cref{table:transcendence-degrees}. 
  \item[case: $\cn = e^\pi$]~\\
  \textit{Polynomial-time Turing machine:} By~\Cref{lemma:poly-time-pi} and~\Cref{lemma:poly-time-exp}.\\
  \textit{Polynomial root barrier:} See~\Cref{table:transcendence-degrees}.
  \item[case: $\cn = e^\eta$]~\\
  \textit{Polynomial-time Turing machine:} By~\Cref{lemma:approx-algebraic} and~\Cref{lemma:poly-time-exp}.\\
  \textit{Polynomial root barrier:} See~\Cref{table:transcendence-degrees}.
  \item[case: $\cn = \ln(\alg)$ with $\alg > 0$]~\\
  \textit{Polynomial-time Turing machine:} By~\Cref{lemma:approx-algebraic} and~\Cref{lemma:poly-time-log}.\\
  \textit{Polynomial root barrier:} See~\Cref{table:transcendence-degrees}. 
  \item[case: $\cn = \alg^\eta$ with $\alg > 0$]~\\
  \textit{Polynomial-time Turing machine:} Consider $e^{\eta \cdot \ln(\alg)}$, and construct the Turing machine by applying~\Cref{lemma:approx-algebraic}, \Cref{lemma:turing-machine-products} and~\Cref{lemma:poly-time-exp}.\\
  \textit{Polynomial root barrier:} Use~\Cref{lemma:check-rationality} to check if $\eta$ is rational. If it is, apply~\Cref{lemma:representation-power-of-algebraic} to obtain a representation of the algebraic number $\alg^\eta$, 
  and~\Cref{theorem:alg-root-barrier} to obtain a root barrier for it. If instead $\eta$ is irrational, use~\Cref{table:transcendence-degrees}.
  \item[case: $\cn = \frac{\ln(\alg)}{\ln(\beta)}$ with $\alg,\beta > 0$ (and $\beta \neq 1$)]~\\
  \textit{Polynomial-time Turing machine:} Use~\Cref{lemma:approx-algebraic} and~\Cref{lemma:poly-time-log} and~\Cref{lemma:turing-machine-reciprocal}.\\
  \textit{Polynomial root barrier:} Use~\Cref{lemma:mult-independence-rationality} and \Cref{remark:multiplicative-independence}.
\end{description}

\subsubsection{Turing machines.}
For $\pi$, we rely on the Turing machine from~\cite{Bailey1997OnTR}:

\begin{theorem}[\cite{Bailey1997OnTR}]\label{lemma:poly-time-pi}
 \hspace{-0.4em}One can construct a polynomial-time Turing machine computing $\pi$.
\end{theorem}

For the remaining cases, we first need a way of composing approximations:
\begin{lemma}\label{lemma:auxiliary-for-approximations}
Let $\delta(x)$ be an integer polynomial, and $p \colon \R \to \R$ be a function
such that $\delta(x) \cdot p(x)$ equals an integer polynomial $q(x)$. Let $d,h
\in \N$ such that $d \geq \max(\deg(\delta),\deg(q))$ and
$h \geq \max(\height(\delta),\height(q))$. Let $r \in \R$ and $K \geq 1$ such that
$\max(\abs{r},\abs{p(r)}) \leq K$. Consider two natural numbers $L$ and $M$
satisfying $M \geq L + \log(h+1) + (2 d +1)(\log(K+1))$. Then, for every $r^* \in \R$
satisfying $\delta(r^*) \geq 1$, if $\abs{r-r^*} \leq 2^{-M}$ then
$\abs{p(r)-p(r^*)} \leq 2^{-L}$. 
\end{lemma}
\begin{proof}
  By applying~\Cref{lemma:approx-univ-polynomial} to both $\delta$ and $q$, 
  from $\abs{r-r^*} \leq 2^{-M}$ 
  we derive 
  \begin{align*}
    \abs{\delta(r)-\delta(r^*)} \leq 2^{-N}
    \text{ \ and \ }
    \abs{q(r) - q(r^*)} \leq 2^{-N},
  \end{align*}
  where $N \coloneqq L + \log(K+1)$. 
  We show that these two inequalities imply $\abs{p(r)-p(r^*)} \leq 2^{-L}$.
  Define $\epsilon \coloneqq \delta(r)-\delta(r^*)$. The following chain of implications holds 
  \begin{align*}
    &\abs{q(r) - q(r^*)} \leq 2^{-N}\\
    \implies{}&\abs{\delta(r) \cdot p(r) - \delta(r^*) \cdot p(r^*)} \leq 2^{-N}
    &\text{by hypotheses}\\
    \implies{}&\abs{(\delta(r^*)+\epsilon) \cdot p(r) - \delta(r^*) \cdot p(r^*)} \leq 2^{-N}
    &\text{by def.~of $\epsilon$}\\
    \implies{}&\abs{\delta(r^*)(p(r)-p(r^*))} \leq 2^{-N} + \abs{\epsilon \cdot p(r)}\\
    \implies{}&\abs{\delta(r^*)(p(r)-p(r^*))} \leq 2^{-N} + 2^{-N}\abs{p(r)}
    &\text{bound on $\epsilon$}\\
    \implies{}&\abs{\delta(r^*)(p(r)-p(r^*))} \leq 2^{-N}(K+1)
    &\text{bound on $\abs{p(r)}$}\\
    \implies{}&\abs{(p(r)-p(r^*))} \leq \frac{2^{-N}(K+1)}{\abs{\delta(r^*)}}\\
    \implies{}&\abs{(p(r)-p(r^*))} \leq 2^{-N}(K+1)
    &\text{since $\delta(r^*) \geq 1$.}\\ 
    \implies{}&\abs{(p(r)-p(r^*))} \leq 2^{-L} 
    &\text{from the def.~of $N$}&\qedhere
  \end{align*}
\end{proof}

To obtain approximations of~$e^r$, we rely on truncations of standard power series.
\begin{lemma}\label{lemma:approx-exp}
  Let $r \in \R$ and $k \geq 1$ with $\abs{r} \leq k$, 
  and let~$t_n(x) \coloneqq \sum_{j=0}^n \frac{x^j}{j!}$. 
  For every $L,M \in \N$ satisfying $M \geq L + 8 k^2$, we have $\abs{e^r - t_M(r)} \leq 2^{-L}$.
\end{lemma}
\begin{proof}
  Following the identity $e^ x = \sum_{j=0}^\infty \frac{x^j}{j!}$, 
  whose right-hand side is the Maclaurin series for $e^x$ (see, e.g.,~\cite[Equation~4.2.19]{Olver10}), we have
  \begin{equation*}\label{inequality-exp-taylor}
    \begin{aligned}
      \abs{e^r-t_M(r)}&=\textstyle\abs{\sum_{j=M+1}^\infty \frac{r^j}{j!}} = 
      \abs{\sum_{j=0}^\infty \frac{r^{M+1+j}}{(M+1+j)!} } = 
      \abs{r^{M+1}\sum^\infty_{j=0}\frac{r^j}{(M+1+j)!}} = \\
      &=\textstyle\abs{\frac{r^{M+1}}{(M+1)!}\sum_{j=0}^\infty
      \frac{r^j}{(M+1+j)\cdot{\dots}\cdot(M+2)}}\leq
      \frac{k^{M+1}}{(M+1)!}e^k\,,
    \end{aligned}
  \end{equation*}
  where in the last inequalities we used $\abs{r} \leq k$. 
  Let us show that the hypothesis $M \geq L + 8k^2$ in the claim implies $\frac{k^{M+1}}{(M+1)!}e^k\leq \frac{1}{2^L}$, 
  concluding the proof. 
  Note that $M \geq 2^3k^2$ implies $\log(\frac{M}{2}) \geq 2 + 2 \log(k)$, and so $\frac{\log(\frac{M}{2})}{2} - \log(k) \geq 1$.
  We have the following chain of implications:
  {\allowdisplaybreaks%
  \begin{align*}
    &M\geq L+2^3k^2\\
    \implies{}&M\geq L+\log(k)+k\log(e) \quad\text{and}\quad M\geq 2^3k^2
    & \text{since $2^3k^2 \geq \log(k) + k\log(e)$}\\
    \implies{}&M\Big( \frac{\log(\frac{M}{2})}{2}-\log(k) \Big) \geq
    L+\log(k)+k\log(e)\\
    \implies{}&\frac{M}{2}\log\left( \frac{M}{2} \right)\geq 
    L+M\log(k) + \log(k) + k\log(e)\\
    \implies{}&\log((M+1)!)\geq L+(M+1)\log(k) + k\log(e)
    &\text{since }(M+1)!\geq \frac{M}{2}^{\frac{M}{2}}\\
    \implies{}&(M+1)!\geq 2^Lk^{M+1}e^k\\
    \implies{}& \frac{k^{M+1}}{(M+1)!}e^k\leq \frac{1}{2^L}.
    &&\qedhere
  \end{align*}}
\end{proof}

\Cref{lemma:approx-exp} gives us a way to handle the cases $e^\pi$ and $e^\eta$ with $\eta$ algebraic.

\begin{restatable}{lemma}{LemmaPolyTimeTMExp}
  \label{lemma:poly-time-exp}
  Given a polynomial-time Turing machine computing $r \in \R$
  one can construct a polynomial-time Turing machine computing $e^r$.
\end{restatable}

\begin{proof}
  Let $T$ be the polynomial-time Turing machine computing $r$.
  Following~\Cref{lemma:approx-exp}, for $n \in \N$ we define $t_n(x) \coloneqq
  \sum_{j=0}^{n} \frac{x^j}{j!}$, which we see as a polynomial with rational coefficients encoded in binary (as a pair of integers).
  The pseudocode of the Turing machine for computing $e^r$ is the following:
  \begin{algorithmic}[1]
    \Require A natural number $n$ written in unary.
    \Ensure A rational $b$ (given as a pair of integers written in binary) such that $\abs{e^r-b} \leq 2^{-n}$.%
    \State \textbf{let} $J \coloneqq \abs{T_0}+1$ 
    \Comment{recall: $T_0$ computed in constant time, and $\abs{r} \leq \abs{T_0}+1$}
    \label{approxexp:line1}
    \State \textbf{let} $M \coloneqq n + 1 + 8\ceil{J}^2$
    \label{approxexp:line2}
    \State \textbf{let} $N \coloneqq n +1 + 9M^2 (\ceil{\log(J)}+1)$
    \label{approxexp:line3}
    \State \myreturn evaluation of $t_M(T_{N})$
    \label{approxexp:line4}
    \Comment{$T_N$ computed in polynomial-time in $N$}
  \end{algorithmic}
  Below, we prove that this algorithm computes $e^r$ and that it runs in
  time $\poly(n)$.

  \paragraph{\textit{Correctness of the algorithm}}
  From~\Cref{lemma:approx-exp}, we have $\abs{e^r-t_M(r)} \leq \frac{1}{2^{n+1}}$, 
  where $M$ is the value defined in line~\ref{approxexp:line2}.
  Below, we apply~\Cref{lemma:auxiliary-for-approximations} in order to conclude that 
  $\abs{t_M(r)-t_M(T_N)} \leq \frac{1}{2^{n+1}}$.
  Observe that this concludes the proof of correctness, since we get: 
  \[ 
    \abs{e^r-t_M(T_N)} = \abs{e^r-t_M(r)+t_M(r)-t_M(T_N)} 
    \leq \abs{e^r-t_M(r)}+ \abs{t_M(r)-t_M(T_N)} 
    \leq  2^{-n}.%
  \]

  Let $\delta(x) \coloneqq M!$ ($\delta$ is a constant integer polynomial) and
  $q(x) \coloneqq \sum_{j=0}^M ((j+1) \cdot {\dots} \cdot M \cdot x^j)$. Observe
  that $\delta(x) \cdot t_M(x)$ equals $q(x)$, and that $\delta(T_N) \geq 1$. We
  have $\max(\deg(\delta),\deg(q)) \leq~M$ and $\max(\height(\delta),\height(q))
  \leq M!\,$. Lastly, let us define $K \coloneqq 3J^M$, so that
  $\max(\abs{r},\abs{t_M(r)}) \leq K$. (Indeed, observe that $\abs{t_M(r)} =
  \abs{\sum_{j=0}^M \frac{r^j}{j!}} \leq \abs{r}^M \sum_{j=0}^M \frac{1}{j!}
  \leq e \abs{r}^M \leq 3J^M$.) By~\Cref{lemma:auxiliary-for-approximations},
  $\abs{t_M(r)-t_M(T_N)} \leq \frac{1}{2^{n+1}}$ holds as soon as $\abs{r - T_N}
  \leq \frac{1}{2^{L}}$, where $L$ is any integer satisfying $L \geq  n+1 +
  \log(M!+1) + (2 M +1)(\log(K+1))$. Since $\abs{r - T_N} \leq \frac{1}{2^{N}}$,
  it then suffices to show that $N$ defined in line~\ref{approxexp:line3}
  corresponds to such an integer $L$:
  {\allowdisplaybreaks%
  \begin{align*}
    &n+1 + \log(M!+1) + (2 M +1)(\log(K+1))\\
    \leq{}& n+1 + \log(M!+1) + (2 M +1)(\log(4J^M))
      &\text{by def.~of $K$}\\
    \leq{}& n+1 + M^2 + (2 M +1)(M \log(J)+2)
      &\hspace{-16pt}\text{since $M \geq 1$, and so $\log(M!+1) \leq M^2$}\\
    \leq{}& n +1 + 9M^2 (\ceil{\log(J)}+1) = N.
  \end{align*}}

  \paragraph{\textit{Running time of the algorithm}}
  Line~\ref{approxexp:line1} does not depend on the input $n$ and thus its
  computation takes constant time. Lines~\ref{approxexp:line2}
  and~\ref{approxexp:line3} compute in polynomial time the numbers $M$ and $N$,
  which are written in unary and have size in~$O(n^2)$. 
  
  To conclude the proof, we show that the computation performed in
  line~\ref{approxexp:line4} takes time polynomial in $n$. This line first
  computes the number $T_N$, which can be done in time $\poly(n)$ because of the
  $O(n^2)$ bound on $N$. This also means that $T_N$ is of the form
  $\frac{\ell_1}{\ell_2}$ where $\ell_1,\ell_2$ are integers written in binary
  with bit size polynomial in $n$. The last step of the procedure is to evaluate
  the expression $\sum_{j=0}^{M} \frac{(\ell_1)^j}{(\ell_2)^j \cdot j!}$, which
  equals the rational number~$\frac{b_1}{b_2}$ where
  \begin{align*}
    b_1 \coloneqq \sum\nolimits_{j=0}^M (\ell_1)^j \cdot
        (\ell_2)^{M-j} \cdot (j+1) \cdot {\dots} \cdot M,
    \qquad
    b_2 \coloneqq (\ell_2)^M \cdot M!\,.
  \end{align*}
  The algorithm returns~$\frac{b_1}{b_2}$. Clearly,
  computing $b_1$ and $b_2$ using the above expressions can be done in
  polynomial time in $n$, just following the arithmetic operations. This means
  that also line~\ref{approxexp:line4} takes time polynomial in $n$,
  concluding the proof. 
\end{proof}

Let us move to the case of logarithms. Proceding similarly to the case $\cn = e^r$,
we establish a lemma on approximations of $\ln(r)$ by truncation of standard power series.

\begin{lemma}\label{lemma:approx-log}
  Let $r > 0$, 
  and let~$t_n(x) \coloneqq 2 \cdot \sum_{j=0}^n \big(\frac{1}{2j+1} \big(\frac{x-1}{x+1}\big)^{2j+1}\big)$. Consider $L,M \in \N$. 
  If~$r = 1$ or $M \geq (L+\log\abs{\ln(r)}) \big({-}2\log\abs{\frac{r-1}{r+1}}\big)^{-1}$, then $\abs{\ln(r) - t_M(r)} \leq 2^{-L}$.
\end{lemma}
\begin{proof}
  If $r = 1$, observe that $\ln(r) = 0$ 
  and $t_n(r) = 0$ for every $n \in \N$, 
  so $\abs{\ln(r)-t_M(r)} = 0$ and the statement trivially follows. 

  Below, assume $r \neq 1$.
  We follow the identity $\ln(x) = 2 \sum_{j=0}^\infty \big(\frac{1}{2j+1} \big(\frac{x-1}{x+1}\big)^{2j+1}\big)$, which holds for every $x > 0$, see~\cite[Equation~4.6.4]{Olver10}.
  We have:
  {\allowdisplaybreaks
  \begin{align*}
          &\abs{\ln(r)-t_M(r)}\\
      ={} &\abs{\sum_{j=M+1}^\infty\frac{1}{2j+1}
      \left( \frac{r-1}{r+1} \right)^{2j+1}}\\
      ={} &\abs{\sum_{j=0}^\infty\frac{1}{2j+2M+3}
      \left( \frac{r-1}{r+1} \right)^{2j+2M+3}}\\ 
      \leq{}&\abs{\left( \frac{r-1}{r+1} \right)^{2M+2}
      \sum_{j=0}^\infty\frac{1}{2j+1}\left( \frac{r-1}{r+1} \right)^{2j+1}}
      &\begin{minipage}{0.35\linewidth}
          note: $\frac{r-1}{r+1}, \Big( \frac{r-1}{r+1} \Big)^{3},\Big( \frac{r-1}{r+1} \Big)^{5},\dots$\\
          all have the same sign
      \end{minipage}\\ 
      \leq{}&
      \frac{1}{2}\abs{\frac{r-1}{r+1}}^{2M+2}\abs{\ln(r)}.
  \end{align*}}
  Let us now show that the hypothesis 
  $M \geq (L+\log\abs{\ln(r)}) \big({-}2\log\abs{\frac{r-1}{r+1}}\big)^{-1}$ in the statement of the lemma implies $\frac{1}{2}\abs{\frac{r-1}{r+1}}^{2M+2}\abs{\ln(r)} \leq \frac{1}{2^L}$, concluding the proof. Below, note that $r>0$ and $r \neq 1$ imply 
  that $\abs{\frac{r-1}{r+1}} \in (0,1)$, so $\log\abs{\frac{r-1}{r+1}} < 0$.
  \allowdisplaybreaks{
  \begin{align*}
    &M\geq(L+\log(\abs{\ln(r)}))\left( -2\log\abs{\frac{r-1}{r+1}} \right)^{-1}\\
    \implies{}&
    2M+2\geq \frac{L+\log\abs{\ln(r)}}{-\log\abs{\frac{r-1}{r+1}}}\\
    \implies{}&(2M+2)\log\abs{\frac{r-1}{r+1}}+\log\abs{\ln(r)}\leq -L
    &\text{since $\textstyle\log\abs{\frac{r-1}{r+1}} < 0$}\\
    \implies{}&\abs{\frac{r-1}{r+1}}^{2M+2}\abs{\ln(r)}\leq 2^{-L}\\
    \implies{}&\frac{1}{2}\abs{\frac{r-1}{r+1}}^{2M+2}\abs{\ln(r)}\leq 2^{-L}.
    &&\qedhere
  \end{align*}
  }
\end{proof}



The existence of polynomial time Turing machines for $\ln(r)$ follows:

\begin{restatable}{lemma}{LemmaPolyTimeTMLog}
  \label{lemma:poly-time-log}
  Given a polynomial-time Turing machine computing $r \in \R$, 
  if $r > 0$, one can construct a polynomial-time Turing machine computing $\ln(r)$.
\end{restatable}

\begin{proof}
  Let $T$ be the polynomial-time Turing machine computing $r > 0$.
  Following~\Cref{lemma:approx-log}, for $n \in \N$ we define $t_n(x) \coloneqq
  2 \cdot \sum_{j=0}^n \big(\frac{1}{2j+1} \big(\frac{x-1}{x+1}\big)^{2j+1}\big)$, 
  in which we see the rational numbers $\frac{1}{2j+1}$ as encoded in binary (as a pair of integers).
  The pseudocode of the Turing machine for computing $\ln(r)$ is the following:
  \begin{algorithmic}[1]
    \Require A natural number $n$ written in unary.
    \Ensure A rational $b$ (given as a pair of integers written in binary) s.t.~$\abs{\ln(r)-b} \leq 2^{-n}$.%
    \State \textbf{let} $k$ be the smallest natural number 
    such that $\frac{1}{2^k} < T_k$.
    \label{approxln:line1}
    \Statex \Comment{recall: $k$ and $T_k$ are computed in constant time}
    \State \textbf{let} $L \coloneqq T_k-\frac{1}{2^k}$
    \label{approxln:line2}
    \State \textbf{let} $U \coloneqq T_k+\frac{1}{2^k}$ 
    \Comment{note: $0 < L \leq r \leq U$}
    \label{approxln:line3}
    \State \textbf{let} $Z_1 \coloneqq \ceil{\max({\abs{\ln(L)}},{\abs{\ln(U)}})}$ 
    \label{approxln:line4}
      \Comment{$Z_1$ is a positive integer}
    \State \textbf{let} $Z_2 \coloneqq 1+\min(-\abs{\frac{L-1}{L+1}},-\abs{\frac{U-1}{U+1}})$ 
      \Comment{$Z_2$ is a positive rational number}
    \label{approxln:line5}
    \State \textbf{let} $M \coloneqq \ceil{\frac{n+1+Z_1}{2 \cdot Z_2}}$
      \Comment{$M$ is a positive integer written in unary}
    \label{approxln:line6}
    \State \textbf{let} $N \coloneqq n + 2 + 15M \cdot \ceil{\log(U + 4M)}$
      \Comment{$N$ is a positive integer written in unary}
    \label{approxln:line7}
    \State \myreturn evaluation of $t_M(\abs{T_{N}})$
    \label{approxln:line8}
    \Comment{$\abs{T_N}$ computed in polynomial-time in $N$}
  \end{algorithmic}
  Below, we prove that this algorithm computes $\ln(r)$ and that it runs in
  polynomial time with respect to the input $n$.

  \paragraph{\textit{Correctness of the algorithm}}
  We start with three observations:
  \begin{itemize}
    \item the number $k$ computed in line~\ref{approxln:line1} exists, since
      $\lim_{n \to \infty} T_k = r > 0$ whereas $\lim_{n \to \infty}
      \frac{1}{2^k} = 0$.  
    \item the values $Z_1$ and $Z_2$ are properly defined and positive, because $U > L > 0$, which in turns implies that also $M$ and $N$ are properly defined. 
    To prove that $Z_2 > 0$, observe that for every $y \geq 0$ we have $\abs{\frac{y-1}{y+1}} < 1$, hence $1 - \abs{\frac{y-1}{y+1}} > 0$.
    \item The Turing machine that on input $n$ returns $\abs{T_n}$ is a machine
    running in polynomial time and computing $r$. The latter property follows
    from the fact that $r > 0$ and therefore, for every $n \in \N$, if $T_n < 0$
    we get a better accuracy by considering $\abs{T_n}$ instead. 
    Note that this machine is used in line~\ref{approxln:line8}.
  \end{itemize}
  Below, we show \textbf{(1)} that $\abs{\ln(r)-t_M(r)} \leq \frac{1}{2^{n+1}}$,
  where $M$ is the value defined in line~\ref{approxln:line6}, 
  and \textbf{(2)} that $\abs{t_M(r)-t_M(\abs{T_N})} \leq \frac{1}{2^{n+1}}$, 
  where $N$ is the value defined in line~\ref{approxln:line7}.
  Note that this concludes the proof of correctness, since we get: 
  \begin{align*}
    \abs{\ln(r)-t_M(\abs{T_N})} &= \abs{\ln(r)-t_M(r)+t_M(r)-t_M(\abs{T_N})}\\
    &\leq \abs{\ln(r)-t_M(r)}+ \abs{t_M(r)-t_M(\abs{T_N})} 
    \leq  2^{-n}.%
  \end{align*}
  \begin{enumerate}[label=\textbf{(\arabic*)}]
    \item \textbf{proof of $\abs{\ln(r)-t_M(r)} \leq \frac{1}{2^{n+1}}$.}
      We apply~\Cref{lemma:approx-log}. If $r = 1$, the inequality we want to prove trivially holds. Otherwise, when $r \neq 1$, 
      this inequality holds as soon as $M \geq
      (n+1+\log\abs{\ln(r)}) \big({-}2\log\abs{\frac{r-1}{r+1}}\big)^{-1}$.
      Following the definition of $M$ from line~\ref{approxln:line6}, 
      it suffices then to show that 
      $\ceil{\frac{n+1+Z_1}{2 \cdot Z_2}} \geq \frac{n+1+\log\abs{\ln(r)}}{{-}2\log\abs{\frac{r-1}{r+1}}}\,$,
      which we do by establishing that $Z_1 \geq \log\abs{\ln(r)}$ and $Z_2 \leq -\log\abs{\frac{r-1}{r+1}}$. 
      
      Let us start with $Z_1 \geq \log\abs{\ln(r)}$. Recall that ${0 < L \leq r \leq U}$ and that $Z_1$ is defined in line~\ref{approxln:line4} 
      as~$\ceil{\max(\abs{\ln(L)},\abs{\ln(U)})}$.
      Since $\log(x) \leq x$ for every $x > 0$, it suffices to show
      $\max(\abs{\ln(L)},\abs{\ln(U)}) \geq \abs{\ln(r)}$. This is immediate.
      If $r < 1$, then $0 < L \leq r$ implies $\abs{\ln(L)} \geq
      \abs{\ln(r)}$. Otherwise, if $r > 1$, then $r \leq U$ implies
      $\abs{\ln(U)} \geq \abs{\ln(r)}$.

      Let us show $Z_2 \leq -\log\abs{\frac{r-1}{r+1}}$. Note that $\abs{\frac{r-1}{r+1}} \in (0,1)$, since $r > 0$ and $r \neq 1$; hence $-\log\abs{\frac{r-1}{r+1}} > 0$. 
      Since $\frac{r-1}{r+1} \in (0,1)$, we have $-\log\abs{\frac{r-1}{r+1}} >
      -\abs{\frac{r-1}{r+1}} + 1$. By definition of $Z_2$ in
      line~\ref{approxln:line5}, it suffices to prove $-\abs{\frac{r-1}{r+1}}
      \geq \min(-\abs{\frac{L-1}{L+1}},-\abs{\frac{U-1}{U+1}})$, or,
      equivalently, $\abs{\frac{r-1}{r+1}} \leq
      \max(\abs{\frac{L-1}{L+1}},\abs{\frac{U-1}{U+1}})$. Recall that $0 < L
      \leq r \leq U$. The first derivative $f'$ of the function $f(x)
      \coloneqq \abs{\frac{x-1}{x+1}}$ is $f'(x) = \frac{2(x-1)}{(x+1)^3
      \abs{\frac{x-1}{x+1}}}$. Observe that for $x \in (0,1)$, $f'$ is always
      negative, whereas for $x > 1$, $f'$ is always positive. Therefore, if $r
      < 1$ we have $\abs{\frac{r-1}{r+1}} \leq \abs{\frac{L-1}{L+1}}$, whereas
      for $r > 1$ we have $\abs{\frac{r-1}{r+1}} \leq \abs{\frac{U-1}{U+1}}$.

    \item \textbf{proof of $\abs{t_M(r)-t_M(\abs{T_N})} \leq \frac{1}{2^{n+1}}$.}
      Recall that $t_M(x) = 2 \cdot \sum_{j=0}^M \big(\frac{1}{2j+1} \big(\frac{x-1}{x+1}\big)^{2j+1}\big)$.
      With the aim of applying~\Cref{lemma:auxiliary-for-approximations}, let us define:
      \begin{align*}
        \delta(x) &\coloneqq (x+1)^{2M+1} \prod_{j=0}^M (2j+1)\,,\\
        q(x) &\coloneqq 2 \cdot \sum_{j=0}^M \Big( (x-1)^{2j+1} (x+1)^{2(M-j)} \prod_{\substack{k=0\\k \neq j}}^M (2k+1) \Big)\,.
      \end{align*}
      Note that $\delta(x) \cdot t_M(x)$ is equivalent to $q(x)$, and that
      $\delta(\abs{T_N}) \geq 1$ (since $\abs{T_N} \geq 0$), as required by the
      lemma. Moreover, note that $\delta$ and $q$ can be rewritten as integer
      polynomials by simply expanding products such as $(x+1)^{2M+1}$ and
      $(x-1)^{2j+1}$. We analyse the degree and heights of $\delta$ and $q$ in
      this expanded form (as integer polynomials). The computation of the degree
      is straightforward, and yields $\max(\deg(\delta),\deg(q)) \leq d \coloneqq 2M+1$.
      (Note that $(x-1)^{2j+1} (x+1)^{2(M-j)}$ in the definition of $q(x)$
      expands to a polynomial in degree $2j+1+2(M-j) = 2M+1$.)
      For the height, we show that $\max(\height(\delta),\height(q)) \leq h \coloneqq (3M)^{8M}$. 
      Recall that given $m \in \N$ and $a,b \in \R$, 
      we have $(a+b)^m = \sum_{j=0}^m \binom{d}{j}a^{(m-j)}b^j$,
      which as a corollary also shows $\binom{d}{j} \leq 2^m$
      (by setting $a=b=1$). 
      Therefore (recall: $M \geq 1$), 
      \begin{align*}
        \height(\delta) 
        &\leq 2^{2M+1} \textstyle\prod_{j=0}^M (2j+1)
        \leq 2^{2M+1} (2M+1)^M \leq (3M)^{8M}.
      \end{align*}
      Similarly, for the summand $q_j(x) \coloneqq (x-1)^{2j+1} (x+1)^{2(M-j)} \prod_{\substack{k=0\\k \neq j}}^M (2k+1)$ in the definition of $q(x)$ we have
      $\height(q_j) 
        \leq 2^{2M+1} 2^{2M} \textstyle\prod_{j=0}^M (2j+1) \leq 2^{4M+1} (2M+1)^M$,
      and therefore $\height(q) \leq 2(M+1)2^{4M+1} (2M+1)^M 
      \leq (3M)^{8M}$.

      Let $K \coloneqq \max(U,2(M+1))$. 
      Note that $K \geq 1$ and $\max(\abs{r},\abs{t_M(r)}) \leq K$, 
      since $0 < r < U$ and 
      $\abs{t_M(r)} = \abs{2 \cdot \sum_{j=0}^M \big(\frac{1}{2j+1} \big(\frac{r-1}{r+1}\big)^{2j+1}\big)} \leq 2 \cdot \sum_{j=0}^M \frac{1}{2j+1} \leq 2(M+1)$, 
      because $\frac{r-1}{r+1} \in (-1,1)$.
      Following the fact that $\abs{r-\abs{T_N}} \leq \frac{1}{2^N}$,
      by applying~\Cref{lemma:auxiliary-for-approximations} with respect 
      to the above-defined objects $\delta(x)$, $q(x)$, $d$, $h$ and $K$, 
      and conclude that $\abs{t_M(r)-t_M(\abs{T_N})} \leq \frac{1}{2^{n+1}}$ 
      holds as soon as $N \geq n+1+ \log(h+1) + (2 d +1)(\log(K+1))$.
      From the definition of $N$ in line~\ref{approxln:line7}, 
      it thus suffices to show $\log(h+1) + (2 d +1)(\log(K+1)) \leq 1 + 15M \cdot \ceil{\log(U+4M)}$.
      This inequality indeed holds (recall: $U > 0$ and $M \geq 1$): 
      \begin{align*}
          &\log(h+1) + (2 d +1)(\log(K+1))\\
        \leq{}& \log((3M)^{8M}+1) + (2 (2M+1) +1)(\log(\max(U,2(M+1))+1))\\
        \leq{}& 1 + 8M \cdot \log(3M) + 7M \cdot \log(U+4M)\\ 
        \leq{}& 1 + 15M \cdot \ceil{\log(U+4M)}.
      \end{align*}
  \end{enumerate}


  \paragraph{\textit{Running time of the algorithm}}
  Lines~\ref{approxln:line1}--\ref{approxln:line5} do not 
  depend on the input $n$, and therefore the computation of $k$, $L$, $U$, $Z_1$ and $Z_2$ 
  takes constant time. Line~\ref{approxln:line6} computes in polynomial time in $n$ 
  the number $M$, which is written in unary and has size $O(n)$.
  Similarly, line~\ref{approxln:line7} computes in polynomial time in $n$ 
  the number $N$, which is written in unary and has size~$O(n \log n)$.  
  
  To conclude the proof, we show that the computation done in
  line~\ref{approxln:line8} takes time polynomial in $n$. The arguments are
  analogous to the one used at the end of the proof
  of~\Cref{lemma:poly-time-exp}. First,
  line~\ref{approxln:line8} compute the number $\abs{T_N}$; this can be done in
  time $\poly(n)$ because of the $O(n \log n)$ bound on $N$. This also means
  that $\abs{T_N}$ is of the form $\frac{\ell_1}{\ell_2}$ where $\ell_1,\ell_2$
  are non-negative integers written in binary with bit size polynomial in $n$,
  and $\ell_2 \geq 1$. The last step of the algorithm is to evaluate the
  expression $2 \cdot \sum_{j=0}^M \big(\frac{1}{2j+1}
  \big(\frac{\frac{\ell_1}{\ell_2}-1}{\frac{\ell_1}{\ell_2}+1}\big)^{2j+1}\big)$,
  which equals the rational number $\frac{b_1}{b_2}$, where 
  \begin{align*}
    b_1 \coloneqq \sum_{j=0}^M \Big( (\ell_1+\ell_2)^{2(M-j)}(\ell_1-\ell_2)^{2j+1}\prod_{\substack{k=0\\k\neq j}}^M(2k+1) \Big),\qquad
    b_2 \coloneqq (\ell_1+\ell_2)^{2M+1} \prod_{j=0}^M (2j+1).
  \end{align*}
  The algorithm returns~$\frac{b_1}{b_2}$. Clearly,
  computing $b_1$ and $b_2$ using the above expressions can be done in
  polynomial time in $n$, just following the arithmetic operations. 
\end{proof}


Thanks to Lemmas \ref{lemma:approx-algebraic}, \ref{lemma:poly-time-exp} 
and \ref{lemma:poly-time-log} and Lemmas \ref{lemma:turing-machine-reciprocal} and \ref{lemma:turing-machine-products} from the preliminaries,
we can now construct polynomial-time Turing machine for all the numbers in 
\Cref{theorem:result-root-barrier}.\ref{theorem:result-root-barrier:point2}. As an example, to construct the Turing machine for~$\frac{\ln(\alg)}{\ln(\beta)}$ 
we construct machines for the following sequence of numbers: $\alg$ and $\beta$ (applying~\Cref{lemma:approx-algebraic}), 
$\ln(\alg)$ and $\ln(\beta)$ (\Cref{lemma:poly-time-log}), $\frac{1}{\ln(\beta)}$ (\Cref{lemma:turing-machine-reciprocal})
and $\frac{1}{\ln(\beta)} \cdot \ln(\alg)$ (\Cref{lemma:turing-machine-products}). 
For $\alg^\eta$, we follow the operations in $e^{\eta \cdot \ln(\alg)}$.

\subsubsection{Root barriers.}
We now study root barriers for the numbers in~\Cref{theorem:result-root-barrier}.\ref{theorem:result-root-barrier:point2}.
In the context of transcendental numbers, root barriers are usually called
\emph{transcendence measures}. Several fundamental results in number theory
concern deriving a transcendence measure for ``illustrious'' numbers, such as
Euler's $e$, $\pi$, or logarithms of algebraic
numbers~\cite{Popken29,Mahler32,Waldschmidt78}. The results we are interested in
are summarised in~\Cref{table:transcendence-degrees}, which is taken almost verbatim
from~\cite[Fig.~1 and Corollary~4.2]{Waldschmidt78}. All transcendence
measures in the table are \emph{polynomial} root barriers. Note that in the cases of
$\alg^\eta$ and $\frac{\ln \alg}{\ln \beta}$, the transcendence measures hold
under further assumptions, which are given in the caption of the table. 
To complete the proof of \Cref{theorem:result-root-barrier}.\ref{theorem:result-root-barrier:point2}, it suffices removing these assumptions.

\paragraph{Case: $\cn = \alg^\eta$.}
In this case, \Cref{table:transcendence-degrees} assumes~$\eta$ to be irrational. 
The following lemma shows that the rationality of an algebraic number $\eta$
represented by~$(q,\ell,u)$ can be checked, and whenever $\eta$ is rational, 
there is a way to compute it as a pair of integers.

\begin{table} 
  \begin{center}
  \def\arraystretch{1.15}
    \begin{tabular}{c|l|l}
      Number & \hfill Transcendence measure from~\cite{Waldschmidt78} & \hfill Simplified bound ($\alg,\beta,\eta$ fixed)\\[2pt]
      \hline
      \rule{0pt}{1.1\normalbaselineskip}
      $\pi$ & $2^{40} d (\ln h + d \ln d)(1 + \ln d)$
      & $O(d^2 (\ln d)^2 \ln h)$\\
      $e^\pi$ & $2^{60} d^2 (\ln h + \ln d)(\ln \ln h + \ln d)(1 + \ln d)$
      & $O(d^2 (\ln d)^3 (\ln h) (\ln \ln h))$\\
      $e^\eta$ & $c_\eta \cdot d^2(\ln h + \ln d)\big(\frac{\ln \ln h + \ln d}{\ln \ln h + \ln \max(1,\ln d)}\big)^2$
      & $O(d^2 (\ln d)^3 (\ln h)(\ln \ln h)^2)$\\ 
      $\alg^\eta$ 
      & $c_{\alg,\eta} \cdot d^3 (\ln h + \ln d)\frac{\ln \ln h + \ln d}{(1+\ln d)^2}$
      & $O(d^3 (\ln d)^2(\ln h)(\ln \ln h))$\\
      $\ln \alg$ & $c_\alg \cdot d^2 \frac{\ln h + d \ln d}{1+\ln d}$ & $O(d^3 (\ln d) \ln h)$\\ 
      $\frac{\ln \alg}{\ln \beta}$ 
      & $c_{\alg,\beta} \cdot d^3 \frac{\ln h + d \ln d}{(1+\ln d)^2}$
      & $O(d^4 (\ln d) \ln h)$\\
    \end{tabular}
    \vspace{10pt}
  \end{center}
\caption{Transcendence measures for some classical real numbers. 
For convenience only, the table assumes $h \geq 16$ (so that $\ln \ln h \geq 1$; replace $h$ by $h+15$ to avoid this assumption).
The numbers $\alg > 0$, $\beta > 0$ and $\eta$ are fixed algebraic numbers, with $\beta \neq 1$.
The integers $c_{\eta}$, $c_{\alg,\eta}$, $c_{\alg}$ and $c_{\alg,\beta}$ are constants that depend on, and can be computed from, polynomials representing $\alg$, $\beta$ and $\eta$.
In the case of $\alg^\eta$, $\eta$ is assumed to be irrational.
In the last line of the table, $\frac{\ln \alpha}{\ln \beta}$ is assumed to be irrational.}
\label{table:transcendence-degrees}%
\end{table}


\begin{restatable}{lemma}{LemmaCheckRationality}
  \label{lemma:check-rationality}
  There is an algorithm deciding whether an input algebraic number $\eta$ 
  represented by $(q,\ell,u)$ is rational.
  When $\eta$ is rational, the algorithm returns $m,n \in \Q$ such that 
  $\eta = \frac{m}{n}$.
\end{restatable}
\begin{proof}
  By relying on the LLL-based algorithm from~\cite{lenstra1982}, 
  we can compute (in fact, in polynomial time)
  a decomposition of the univariate polynomial $q$ into 
  irreducible polynomials (below, factors) with rational coefficients. 
  Let $E$ be the (finite) set of those factors having degree~$1$. 
  Since $\eta$ is a root of $q$, we have that $\eta$ is rational if and only if it is a root of a polynomial in~$E$.
  Every element of $E$ is a linear polynomial of the form $n \cdot x - m$, where $n,m \in \Q$, having root $\frac{m}{n}$. 
  Recall that $\eta$ is the only root of $q$ in the interval $[\ell,u]$, and therefore, in order to check whether $\eta$ is rational, 
  it suffices to check whether there is $(n \cdot x - m) \in E$
  such that $\ell \leq \frac{m}{n} \leq u$. If the answer is positive, $\eta = \frac{m}{n}$. Otherwise, $\eta$ is irrational. 
\end{proof}

When $\eta$ is irrational the polynomial root barrier for $\alg^\eta$ is given
in~\Cref{table:transcendence-degrees}. Otherwise, $\eta = \frac{m}{n}$ and the
number $\alg^{\frac{m}{n}}$ is algebraic. In this case, rely on the following
lemma to construct a representation of $\alg^{\frac{m}{n}}$, and then derive a
polynomial root barrier by applying~\Cref{theorem:alg-root-barrier}.

\begin{restatable}{lemma}{LemmaRepresentationPowerOfAlgebraic}
  \label{lemma:representation-power-of-algebraic}
  There is an algorithm that given a rational $r$ and an algebraic number $\alg > 0$ 
  represented by $(q,\ell,u)$, computes a representation $(q',\ell',u')$ 
  of the algebraic number $\alg^r$.
\end{restatable}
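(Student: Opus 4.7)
Write $r = m/n$ with $n \in \N_{\geq 1}$, $m \in \Z$, and $\gcd(m,n)=1$. The proof proceeds in two stages: (i) construct a non-zero integer polynomial $q'(x)$ vanishing at $\alg^r$, and (ii) compute rationals $\ell' \leq u'$ isolating $\alg^r$ from the other real roots of $q'$. The degenerate case $m=0$ is handled by returning $(x-1,1,1)$, so assume $m\neq 0$ henceforth.

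For stage (i), observe that $\beta \coloneqq \alg^{m/n}$ satisfies $\beta^n = \alg^m$. I first build an integer polynomial $q_m(y)$ vanishing at $\alg^m$, and then set $q'(x) \coloneqq q_m(x^n)$. When $m > 0$, take $q_m(y) \coloneqq \mathrm{Res}_x(q(x),\, y - x^m)$; since $\gcd(q(x), y - x^m) = 1$ in $\Q(y)[x]$, this resultant is a non-zero element of $\Z[y]$ having $\alg^m$ among its roots. When $m < 0$, first refine $[\ell,u]$ via the dichotomy search underlying Lemma~\ref{lemma:approx-algebraic-body} to ensure $\ell > 0$ (possible since $\alg > 0$), so that $0$ is not a root of $q$; then the reciprocal polynomial $\widetilde{q}(x) \coloneqq x^{\deg q} q(1/x)$ lies in $\Z[x]$ and has $1/\alg$ as a root, and $q_m(y) \coloneqq \mathrm{Res}_x(\widetilde{q}(x),\, y - x^{|m|})$ gives $(1/\alg)^{|m|} = \alg^m$ as a root.

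For stage (ii), I use the machinery already assembled in this section to construct a polynomial-time Turing machine $T$ computing $\alg^r$: apply Lemma~\ref{lemma:approx-algebraic-body} to get a machine for $\alg$, then mimic the identity $\alg^r = e^{r \ln \alg}$ using the logarithm/exponential machines of Lemma~\ref{lemma:poly-time-exp-log} together with the product machine of Lemma~\ref{lemma:turing-machine-products} (to multiply by the rational constant $r$). Next, invoke an effective lower bound on the minimum separation between distinct real roots of $q'$, expressed in terms of $\deg(q')$ and $\height(q')$ (e.g., Mahler's gap bound), to obtain a positive rational $\delta$ strictly smaller than this minimum separation. Finally, compute from $T$ a rational approximation $T_N$ with $|\alg^r - T_N| < \delta/4$ and set $\ell' \coloneqq T_N - \delta/4$, $u' \coloneqq T_N + \delta/4$; then $\alg^r$ is the unique real root of $q'$ lying in $[\ell',u']$, and $(q',\ell',u')$ is a valid representation.

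\textbf{Main obstacle.} The only subtle point is the handling of negative $m$, which requires refining $[\ell,u]$ to exclude $0$ before passing to the reciprocal polynomial, and checking that the associated resultant is non-zero. The remainder of the argument is a routine assembly of resultant calculus, effective root-separation bounds, and the Turing-machine constructions already established in the excerpt.
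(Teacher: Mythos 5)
Your proof is correct and reaches the same conclusion but takes a genuinely different route at both stages. For the defining polynomial of $\alg^{m}$, the paper works in the power basis $1,\alg,\dots,\alg^{d-1}$ and finds a rational linear dependence among $\mu(0),\mu(m),\dots,\mu(dm)$, whereas you compute $\mathrm{Res}_x(q(x),\,y-x^m)$, a standard and equivalent device; the resultant formulation is arguably tidier since it avoids explicitly carrying the reduction rules for $\alg^d$. For root isolation, the paper transfers the interval $[\ell,u]\ni\alg$ to $[\ell^r,u^r]\ni\alg^r$ via a mean-value-theorem bound on the derivative of $x\mapsto x^r$, then builds Turing machines for $\ell^r$ and $u^r$ separately and fudges the endpoints to make them rational; you instead build a single machine for $\alg^r$ and take a symmetric interval of radius $\delta/4$ around a sufficiently accurate approximant, which is more direct and avoids the asymmetric endpoint bookkeeping. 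Both stages use the same underlying toolkit (\Cref{lemma:approx-algebraic-body,lemma:poly-time-exp-log,lemma:turing-machine-products} and an effective root-separation bound---the paper uses \cite[Theorem A.2]{Bugeaud04} where you invoke Mahler; note that Mahler's bound as classically stated wants a squarefree polynomial, so you should pass to the squarefree part of $q'$, a routine GCD computation). One small slip worth flagging: in your $m<0$ case, refining $[\ell,u]$ so that $\ell>0$ does not "ensure $0$ is not a root of $q$"---it only ensures $\alg\neq 0$---but that is in fact all the reciprocal-polynomial construction needs, since $\widetilde q(1/\alg)=\alg^{-\deg q}\,q(\alg)=0$ holds regardless of whether $0$ happens to be some other root of $q$.
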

\begin{proof}
  Let $r = \frac{m}{n}$ with $m \in \Z$ and $n \geq 1$, and let $q(x) =
  \sum_{i=0}^d a_i \cdot x^i$, with $\deg(q) = d$, and $h \coloneqq \height(q)$.
  Since we are not interested in the runtime of this algorithm, we can apply the
  procedure explained in the last paragraph of the preliminaries
  (\Cref{sec:preliminaries}) to impose that (in addition to $\alg$ being the
  only root of $q$ in the interval $[\ell,u]$) either $\ell = u$ or $\alg \in
  (\ell,u)$ and $(\ell,u) \cap \Z = \emptyset$ holds. Since $\alg > 0$, by
  applying~\Cref{theorem:alg-root-barrier} to the polynomial~$x$ we derive $\alg
  \geq 2^{-d}(h(d+1))^{-1}$, and so we can update $\ell$ and $u$ to be both
  strictly positive.
  
  First, let us reduce the problem to the case $m \geq 1$. If $m = 0$ or then
  $\alg^0 = 1$ and we can simply return $(x-1,1,1)$. To handle the case $m < 0$,
  we remark that $\alg^{-1}$ is a root of the Laurent polynomial $\sum_{i = 0}^d
  a_i \cdot x^{-i}$, and thus also of $x^d \cdot \sum_{i=0}^d a_i \cdot x^{-i}$.
  So, the polynomial $q''(x) \coloneqq \sum_{i=0}^d a_i \cdot x^{d-i}$ is such
  that $(q'',u^{-1},\ell^{-1})$ represents $\alg^{-1}$ (note that no
  root~$\beta$ of $q''$ that is distinct from $\alg^{-1}$ can lie in the
  interval $[u^{-1},\ell^{-1}]$, else $\beta^{-1} \neq \alg$ would lie in
  $[\ell,u]$). We can then compute the representation of $\alg^r$ starting from
  $(q'',u^{-1},\ell^{-1})$, and considering the positive rational $-r$ instead
  of $r$.

  Below, assume $m,n \geq 1$. We start by computing a polynomial $Q(x)$ 
  having $\alg^m$ as a root.
  Since $q(\alg)=0$, for every $j \in \N$, we can express $\alg^j$ as a
  rational linear combination $\mu(j)$ of the terms $1,\alg,\dots,\alg^{d-1}$: 
  \[ 
    \mu(j) \coloneqq  
      \begin{cases}
        \alg^j &\text{if $j \in [0..d-1]$}\\
        \sum_{i=0}^{d-1}\frac{-a_i}{a_d} \alg^{i} &\text{if $j = d$}\\
        b_{d-1}\mu(d) + \sum_{i=0}^{d-2}b_i\alg^{i+1} &\text{if $j > d$, where $\mu(j-1) = \sum_{i=0}^{d-1} b_i \alg^i$}.
      \end{cases}
  \]
  (Note that the last line in the definition of $\mu(j)$ is obtained by
  multiplying $\mu(j-1)$ by $\alg$, to then replace $\alg^d$, which is the only
  monomial with degree above $d-1$, by $\mu(d)$.)

  We can represent the polynomial $\mu(j) = \sum_{i=0}^{d-1} b_i \alg^i$ as the
  vector $(b_0,\dots,b_{d-1}) \in \Q^d$. Consider now the family of polynomials
  $\mu(0),\mu(m),\mu(2m),\dots,\mu(i \cdot m),\dots,\mu(d \cdot m)$. These
  correspond to a set of $d+1$ vectors in $\Q^d$, and therefore they are
  rationally dependent: there is a non-zero vector $(k_0,\dots,k_{d}) \in
  \Q^{d+1}$ such that 
  \[ 
    k_0 \cdot \mu(0) + k_1 \cdot \mu(m) + \dots + k_d \cdot \mu(d \cdot m) = 0.
  \]
  Since $\mu(j) = \alg^j$ for all $j \in \N$, we then conclude that
  $\sum_{j=0}^d k_j \alg^{j \cdot m} = 0$. Let $g$ be the least common multiple
  of the denominators of the rational numbers $k_0,\dots,k_{d}$, and define
  $\hat{k}_j = g \cdot k_j$ for all $j \in [0..d]$. Then, $\alg^m$ is a root of
  the non-zero integer polynomial $Q(x) \coloneqq \sum_{j=0}^d \hat{k}_j \cdot
  x^{j}$.

  We can now take $q'(x) \coloneqq Q(x^n)$ in order to obtain a polynomial
  having $\alg^{\frac{m}{n}}$ as a root.


  Now we move on to the problem of isolating $\alg^{\frac{m}{n}}$ from all other
  roots of $q'(x)$ by opportunely defining a separating interval $[\ell',u']$
  where $\ell'$, $u'\in\Q$.

  If $q'$ has degree $1$, then $\alg^{\frac{m}{n}}$ is its only root and it is
  rational. Finding an interval is in this case trivial: given $q'(x) = b \cdot
  x - a$, we have $\alg^{\frac{m}{n}} = \frac{a}{b}$ and so we can take $\ell' =
  u' = \frac{a}{b}$. Hence, below, let us assume $\deg(q') \geq 2$.
  To compute $\ell'$ and $u'$ we use the following result. 

  \begin{claim}
    \label{claim:mean-value}
    Let $0<\ell\leq u$ be rational numbers. Consider a function $f(x)$
    that is both increasing and continuously differentiable in the
    interval~$[\ell,u]$. Let $\delta > 0$ be an upper bound to the maximum of
    its derivative over $[\ell,u]$. If $\abs{u-\ell}\leq \frac{D}{\delta}$, then
    $\abs{f(\ell)-f(u)}\leq D$.
  \end{claim}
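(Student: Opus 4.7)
The plan is to apply the Mean Value Theorem, which is directly tailored to this kind of statement. Since $f$ is continuously differentiable on the closed interval $[\ell,u]$, the MVT yields some $c \in (\ell,u)$ satisfying $f(u) - f(\ell) = f'(c) \cdot (u-\ell)$.

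From here, the argument is essentially a two-line chain of inequalities. Because $f$ is increasing on $[\ell,u]$, we have $f(u) \geq f(\ell)$, so $\abs{f(u) - f(\ell)} = f(u) - f(\ell)$. Using the MVT identity and the bound $f'(c) \leq \delta$ (valid since $c \in [\ell,u]$ and $\delta$ upper bounds the derivative there), we obtain $\abs{f(u) - f(\ell)} = f'(c) \cdot (u-\ell) \leq \delta \cdot \abs{u - \ell}$. Combining with the hypothesis $\abs{u-\ell} \leq \frac{D}{\delta}$ and the fact that $\delta > 0$, the right-hand side is at most $\delta \cdot \frac{D}{\delta} = D$, which proves the claim.

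There is no real obstacle here: the statement is a textbook corollary of the Mean Value Theorem, and all hypotheses (continuous differentiability, monotonicity to fix the sign, positivity of $\delta$ to divide safely) line up exactly with what the MVT needs. The only very minor point to state cleanly is that $f'(c) \geq 0$, which follows from $f$ being increasing and continuously differentiable (so the derivative is non-negative on $[\ell,u]$); this ensures we may replace $f'(c)$ by $\abs{f'(c)}$ and hence upper bound it by $\delta$ without worrying about sign issues. The claim will then be used in the surrounding lemma to turn rational approximations of $\alg$ into rational approximations of $\alg^{m/n}$ with controlled error.
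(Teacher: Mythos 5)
Your proof is correct and takes essentially the same route as the paper: both invoke the Mean Value Theorem and use the monotonicity of $f$ to drop the absolute value before chaining the bound $\delta \cdot \abs{u-\ell} \leq D$. (Both you and the paper tacitly ignore the degenerate case $\ell = u$, where the conclusion is immediate anyway.)
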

  \begin{proof}
    Since $f(x)$ is continuously differentiable over $[\ell,u]$, 
    by the mean value theorem we have $\frac{f(u)-f(\ell)}{u-\ell} \leq \delta$. 
    Moreover, since $f(x)$ is increasing inside $[\ell,u]$, then 
    $\frac{f(u)-f(\ell)}{u-\ell} = \frac{\abs{f(u)-f(\ell)}}{\abs{u-\ell}}$.
    We conclude that $\abs{f(u)-f(\ell)}\leq \delta\cdot\abs{u-\ell} \leq \delta \cdot  \frac{D}{\delta} \leq D$.
  \end{proof}


  Below, let $h' \coloneqq \height(q')$ and $\deg(q') \coloneqq d'$.
  By applying~\cite[Theorem~A.2]{Bugeaud04}, any two distinct roots $\alg_1$ and
  $\alg_2$ of $q'$ satisfy:
  \begin{equation}
    \label{eq:close-roots}
    \abs{\alg_1 - \alg_2} > D \coloneqq 2^{-d'-1} (d')^{-4d'} (h')^{-2d'}.
  \end{equation}
  Let $\delta \coloneqq \max_{x\in\{\ell,u\}}\{r\cdot x^{r-1}\}$, which is
  maximum of the derivative of $f(x) \coloneqq x^r$ in the interval $[\ell,u]$.
  Let us apply the algorithm in~\Cref{lemma:approx-algebraic} in order to refine
  the interval $[\ell,u]$ containing $\alg$ so that we achieve $\abs{\ell-u}\leq
  \frac{D}{2\delta}$.

  Note that, since $r > 0$, the function $f$ is increasing and continuously
  differentiable in $[\ell,u]$, from $\alg \in [\ell,u]$ we have $\alg^r\in
  [\ell^r,u^r]$. Moreover, by~\Cref{claim:mean-value}, 
  we have $\abs{u^r-\ell^r} \leq \frac{D}{2}$.
  From~\Cref{eq:close-roots}, 
  we conclude that $\alg^r$ is the only root in the interval $[\ell^r,u^r]$.

  Note that, in general, $\ell^r$ and $u^r$ are not rational numbers, 
  hence we cannot use $(q',\ell^r,u^r)$ in order to represent $\alg^r$. 
  Instead, we now compute two rational numbers $\ell'<\ell^r$ and $u'>u^r$ 
  such that $\alg^r \in [\ell',u']$ and, crucially, $\abs{u'-\ell'}\leq D$. 
  Again, by~\Cref{eq:close-roots}, we will conclude that 
  $\alg^r$ is the only root of $q'$ in $[\ell',u']$, 
  and therefore $(q',\ell',u')$ represents $\alg^r$.

  In order to compute $\ell'$ and $u'$, we rely on two Turing machines $T$ and
  $T'$ computing $\ell^r$ and $u^r$, respectively. To construct these machines,
  we simply apply~\Cref{lemma:turing-machine-products}, \Cref{lemma:poly-time-exp}
  and~\Cref{lemma:poly-time-log}, seeing $\ell^r$ as $e^{r \cdot
  \ln(\ell)}$ and $u^r$ as $e^{r \cdot \ln(u)}$ (note that $\ell,u > 0$, hence
  the two logarithms are well-defined). Since $\ell^r$ and $u^r$ are positive,
  w.l.o.g.~we can assume the outputs of $T$ and $T'$ to be always non-negative.
  Indeed, to force this condition on, e.g., $T$, we can consider a new Turing
  machine that on input $n \in \N$ returns $\abs{T_n}$; this new Turing machine
  still computes $\ell^r$.
  Let $M \coloneqq -\floor{\log(D)}$, and observe that $M \geq 1$, since $D \in (0,1)$.
  
  We are now ready to define the rationals $\ell'$ and $u'$: 
  \[ 
    \ell'\coloneqq T_{M+3}-\frac{1}{2^{M+3}}
    \qquad\text{and}\qquad 
    u'\coloneqq T'_{M+3}+\frac{1}{2^{M+3}}.
  \]
  Recall that $\abs{\ell^r-T_{M+3}} \leq \frac{1}{2^{M+3}}$, and similarly
  $\abs{u^r-T_{M+3}} \leq \frac{1}{2^{M+3}}$. Therefore, $\ell' \leq \ell^r \leq u^r \leq u'$, which in turn implies that $\alg^r \in [\ell',u']$. Moreover,
  we also conclude that $\ell^r-\frac{1}{2^{M+2}} \leq \ell'$ and $u'\leq
  u^r+\frac{1}{2^{M+2}}$.
  At last, let us show that $\abs{u'-\ell'} \leq D$:
  \begin{align*}
    \abs{u'-\ell'}&\leq \abs{u^r+\frac{1}{2^{M+2}}-\Big(\ell^r-\frac{1}{2^{M+2}}\Big)}
    &\quad\text{since $\ell^r-\frac{1}{2^{M+2}} \leq \ell' \leq u'\leq
  u^r+\frac{1}{2^{M+2}}$}\\
    &\leq\abs{u^r-\ell^r}+\frac{1}{2^{M+1}}\\
    &\leq \frac{D}{2} + \frac{1}{2^{-\floor{\log(D)}+1}}
    &\text{by def.~of~$M$ and since~$\abs{u^r-\ell^r} \leq \frac{D}{2}$}\\ 
    &\leq \frac{D}{2} + \frac{D}{2} \leq D.
    &&\qedhere
  \end{align*}
\end{proof}

\paragraph{Case: $\cn = \frac{\ln(\alg)}{\ln(\beta)}$.}
In this case,~\Cref{table:transcendence-degrees} assumes $\cn$ to be irrational.
Since $\cn$ is positive, $\alg,\beta \not\in \{0,1\}$. The following lemma
illustrates how $\cn = \frac{m}{n}$ if and only if~${\alg^n \beta^{-m} = 1}$.

\begin{restatable}{lemma}{LemmaMultIndependeceRationality}
  \label{lemma:mult-independence-rationality}
  Let $\alg$ and $\beta$ be two algebraic numbers different from $0$ and $1$. Then, $\alpha$ and $\beta$ are multiplicatively dependent 
  if and only if $\frac{\ln(\alg)}{\ln(\beta)}$ is rational.
\end{restatable}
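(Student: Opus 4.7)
The plan is to establish both directions by directly manipulating the identity $\alpha^n = \beta^m$ and its logarithmic form $n \ln(\alpha) = m \ln(\beta)$, being careful about the cases where $m$ or $n$ is zero. Note that the statement tacitly assumes $\alpha, \beta > 0$ so that $\ln(\alpha)$ and $\ln(\beta)$ are real numbers; since $\beta \neq 1$, we also have $\ln(\beta) \neq 0$, so the quotient $\frac{\ln(\alpha)}{\ln(\beta)}$ is well-defined.

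For the direction $(\Leftarrow)$, assume $\frac{\ln(\alpha)}{\ln(\beta)} = \frac{m}{n}$ for some integers $m,n$ with $n \geq 1$. First I would observe that $m \neq 0$: otherwise $\ln(\alpha) = 0$, contradicting $\alpha \neq 1$. Multiplying through yields $n \ln(\alpha) = m \ln(\beta)$, hence (by applying $\exp$) $\alpha^n = \beta^m$. This gives $\alpha^n \beta^{-m} = 1$ with $(m,n) \neq (0,0)$, witnessing multiplicative dependence.

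For the direction $(\Rightarrow)$, assume $\alpha^n = \beta^m$ for some $(m,n) \in \Z^2 \setminus \{(0,0)\}$. I would first argue that both $m$ and $n$ are non-zero: if $n = 0$ then $\beta^m = 1$, which together with $\beta > 0$ and $\beta \neq 1$ forces $m = 0$, contradicting $(m,n) \neq (0,0)$; the symmetric argument using $\alpha > 0$ and $\alpha \neq 1$ rules out $m = 0$. Taking logarithms of both sides then gives $n \ln(\alpha) = m \ln(\beta)$, and since $\ln(\beta) \neq 0$ we conclude $\frac{\ln(\alpha)}{\ln(\beta)} = \frac{m}{n} \in \Q$.

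There is no real obstacle here; the only subtlety is the bookkeeping to ensure that neither exponent can degenerate to zero, which is handled by the hypotheses $\alpha, \beta \neq 1$.
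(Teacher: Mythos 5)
Your proof is correct and follows essentially the same route as the paper's: pass between $\alpha^n = \beta^m$ and $n\ln(\alpha) = m\ln(\beta)$ via the (real-valued) logarithm, using $\alpha,\beta \neq 1$ to rule out degenerate exponents. You spell out the non-degeneracy of $m$ and $n$ a bit more carefully than the paper does, and you rightly flag that the lemma tacitly needs $\alpha,\beta > 0$ for $\ln$ to make sense; otherwise the argument is the same.
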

\begin{proof}
  Let $n,m \in \Z$. With either $n$ or $m$ distinct from zero.
  We have 
  \begin{align*}
    \alg^n = \beta^m 
    \iff \ln(\alg^n) = \ln(\beta^m)
    \iff n\ln(\alg) = m\ln(\beta)
    \iff \frac{\ln(\alg)}{\ln(\beta)} = \frac{m}{n},
  \end{align*}
  where we note that one of the two sides of the equality $n\ln(\alg) =
  m\ln(\beta)$ must be non-zero (because $n$ or $m$ are non-zero, and
  $\alg,\beta \neq 1$) which makes non-zero also the other side.
\end{proof}


We can then compute a polynomial root barrier for $\cn =\frac{\log(\alpha)}{\log(\beta)}$
as described below:

\begin{remark}\label{remark:multiplicative-independence}
From celebrated result of Masser~\cite{Masser88}, 
the set ${\{(m,n) \in \Z^2 : \alg^n \beta^{-m} = 1\}}$ is a
finitely-generated integer lattice for which we can explicitly construct a
basis~$K$ (see~\cite{CaiLZ00} for a polynomial-time procedure). If $K =
\{(0,0)\}$, then $\cn$ is irrational and its polynomial
root barrier is given in~\Cref{table:transcendence-degrees}. Otherwise, since
$\alg,\beta \not \in \{0,1\}$, there is~$(m,n) \in K$ with $n \neq 0$, and
$\cn = \frac{m}{n}$. Then, a polynomial
root barrier can be derived by~\Cref{theorem:alg-root-barrier}.
\end{remark}

The proof of \Cref{theorem:result-root-barrier}.\ref{theorem:result-root-barrier:point2} is now complete.

\section{An application: the entropic risk threshold problem}
\label{sec:application-entropic-risk}

We apply some of the machinery developed
for~$\exists\R(\ipow{\cn})$ to remove the appeal to Schanuel's conjecture from
the decidability proof of the entropic risk threshold problem
for stochastic games from~\cite{BaierCMP23}.
Briefly, a \emph{(turn-based) stochastic game} is a tuple $\mathsf{G} = (S_{\max},S_{\min},A,\Delta)$ where $S_{\max}$ and $S_{\min}$ are disjoint finite set of \emph{states} controlled by two players, $A$ is a function from states to a finite set of \emph{actions}, 
and $\Delta$ is a function taking as input a state $s$ and an action from $A(s)$, 
and returning a \emph{probability distribution} on the set of states. Below, we write $\Delta(s,a,s')$ for the probability associated to $s'$ in $\Delta(s,a)$, 
and set $S \coloneqq S_{\max} \cup S_{\min}$.

Starting from an initial state $\hat{s}$, a play of the game produces an
infinite sequence of states~$\rho=s_1s_2s_3\dots$ (a path), to which we
associate the \emph{total reward} $\sum_{i=1}^{\infty} r(s_i)$, where $r \colon
S \to \R_{\geq 0}$ is a given \emph{reward function}. 
A classical problem is to determine the strategy for one of the
players that optimises (minimises or maximises) its expected total reward.
Instead of expectation, the \emph{entropic risk} yields the normalised logarithm
of the average of the function $b^{-\eta X}$, where the \emph{base} $b > 1$
and the \emph{risk aversion factor} $\eta > 0$ are real numbers, and $X$ is a
random variable ranging over total rewards. We refer the reader
to~\cite{BaierCMP23} for motivations behind this notion, as well as all formal
definitions.

Fix a base~$b > 1$ and a risk aversion factor $\eta \in \R$.
The \emph{entropic risk threshold problem} $\ERisk[b^{-\eta}]$ asks to
determine if the entropic risk is above a threshold $t$. The inputs are
a stochastic game $\mathsf{G}$ having rational probabilities~$\Delta(s,a,s')$,
an initial state $\hat{s}$, a reward function $r \colon S \to \Q_{\geq 0}$ and a
threshold $t \in \Q$. In~\cite{BaierCMP23}, this problem is proven to be in \pspace
for $b$ and $\eta$ rationals, and decidable subject to Schanuel's conjecture
if~$b = e$ and $\eta \in \Q$ (both results also hold when $b$ and $\eta$
are not fixed). We improve upon the latter result, by establishing the following
theorem (that assumes having representations of $\alpha$ and~$\eta$):

\begin{theorem}
    \label{thoerem:Erisk}
    Both $\ERisk[e^{-\eta}]$ and $\ERisk[\alpha^{-\eta}]$ are in~\exptime for every fixed algebraic numbers $\alpha,\eta$.
    When $\alpha,\eta$ are not fixed but part of the input, 
    these problems~are~decidable.
\end{theorem}

\begin{proof}[Proof sketch]
Ultimately, in~\cite{BaierCMP23} the authors show that the problem $\ERisk[b^{-\eta}]$ is reducible in polynomial time to the problem of checking the satisfiability of a system of constraints of the following form (see~\cite[Equation~7]{BaierCMP23} for an equivalent formula):
\vspace{-6pt}
\begin{equation}%
    \label{formula:BCMP}%
    v(\hat{s}) \leq (b^{-\eta})^{t}
    \land\!\! \bigwedge_{s \in T} v(s) = d_s
    \land\!\! \bigwedge_{s \in S} v(s) = \oplus_{a \in A(s)} \Big((b^{-\eta})^{r(s)} \sum_{s' \in S} \Delta(s,a,s') \cdot v(s')\Big),
\end{equation} 
where $T$ is some subset of the states $S$ of the game, $d_s \in \{0,1\}$, and in the notation $\oplus_{a \in
A(s)}$ the symbol $\oplus$ stands for the functions $\min$ or
$\max$, depending on which of the two players controls~$s$.
The formula has one variable $v(s)$
for every $s \in S$, ranging over $\R$.

Since $z = \max(x,y)$ is equivalent to $z \geq x
\land z \geq y \land (z = x \lor z = y)$, and $z = \min(x,y)$ is
equivalent to $z \leq x \land z \leq y \land (z = x \lor z = y)$, 
except for the rationality of the exponents $t$ and $r(s)$ (which we handle below),
Formula~\ref{formula:BCMP} belongs to $\exists \R((b^{-\eta})^{\mathbb{Z}})$.

Fix $b > 1$ to be either $e$ or algebraic, 
and $\eta > 0$ to be algebraic. 
Assume to have access to representations for these algebraic numbers, 
so that if $\eta$ is represented by $(q(x),\ell,u)$, then $-\eta$ is represented by $(q(-x),-u,-\ell)$.
Consider the problem of checking whether a formula $\phi$ of the form given by~Formula~\ref{formula:BCMP} is satisfiable. 
Since $\phi$ does not feature predicates~$(b^{-\eta})^{\mathbb{Z}}$, but only the constant~$b^{-\eta}$, 
instead of~\Cref{algo:main-procedure} we can run the following simplified procedure:%
\begin{enumerate}[I.]
    \item \emph{Update all exponents $t$ and $r(s)$ of $\phi$ to be over $\N$
    and written in unary.} \textbf{(1)}~Compute the l.c.m.~$d \geq 1$ of the
    denominators of these exponents. \textbf{(2)}~Rewrite every
    term~$(b^{-\eta})^{\frac{p}{q}}$, where $\frac{p}{q}$ is one such
    exponent, into $(b^{\frac{-\eta}{d}})^{\frac{p \cdot d}{q}}$. Note that
    $\frac{p \cdot d}{q} \in \Z$. \textbf{(3)}~Rewrite $\phi$ into
    $\phi\sub{x}{b^{\frac{-\eta}{d}}} \land  x^d = b^{-\eta} \land x \geq
    0$, with $x$ fresh variable. \textbf{(4)}~Opportunely multiply both sides of
    inequalities by integer powers of~$x$ to make all exponents range over~$\N$.
    \textbf{(5)}~Change to a unary encoding for the exponents by adding further
    variables, as done in the proof
    of~\Cref{theorem:result-root-barrier}.\ref{theorem:result-root-barrier:point1}
    (\Cref{sec:poly-evaluation}). This step takes polynomial time
    in~$\size(\phi)$.
    \item \emph{Eliminate $x$ and all variables $v(s)$ with~$s \in S$.} This is
    done by appealing to~\Cref{theorem:basu}, treating~$b^{-\eta}$ as a free
    variable. The result is a Boolean combination $\psi$ of polynomial
    inequalities over $b^{-\eta}$. This step runs in time exponential in
    $\size(\phi)$.
    \item \emph{Evaluate $\psi$.} Call~\Cref{algo:sign-evaluation} on each
    inequality, to then return $\top$ or $\bot$ according to the Boolean
    structure of $\psi$. Since we can construct a polynomial-time Turing machine
    for $b^{-\eta}$ (\Cref{sec:poly-evaluation}),
    by~\Cref{lemma:runtime-sign-poly-root-barrier} this step takes polynomial
    time in $\size(\psi)$. 
    \qedhere
\end{enumerate}
\end{proof}
\section{Conclusion and future directions}
\label{sec:conclusion}

With the goal of identifying unconditionally decidable fragments or variants of
$\R(e^x)$, we have studied the complexity of the theory~$\exists \R(\ipow{\cn})$
for different choices of~${\cn > 0}$. Particularly valuable turned out to be the
introduction of root barriers~(\Cref{definition:intro:root-barrier}): by relying
on this notion, we have established that~$\exists \R(\ipow{\cn})$ is in
\expspace if $\cn$ is algebraic, and in \threeexptime for natural choices of
$\cn$ among the transcendental numbers, such as $e$ and $\pi$.

A first natural question is how far are we from the exact complexity of these
existential theories, considering that the only known lower bound  is
inherited from the existential theory of the reals, which lies in
\pspace~\cite{Canny88}. While we have no answer to this question, we remark that
strengthening the hypotheses on $\cn$ may lead to better complexity bounds.
For example, we have shown that our \expspace result for algebraic
numbers improves to \nexptime when $\cn$ is a natural number.


We have presented natural examples of bases $\cn$ having polynomial root
barriers. 
More exotic instances are known: setting $\cn
= q(\pi,\Gamma(\frac{1}{4}))$, where $q$ is an integer polynomial and~$\Gamma$
is Euler's Gamma function, results in one such base. This follows from a theorem
by Bruiltet~\cite[Theorem B$^\prime$]{Bruiltet02} on the algebraic independence
of $\pi$ and $\Gamma(\frac{1}{4})$.
This leads to a second natural question: are there $a,b \in \R$ satisfying
$\ipow{a} \cap \ipow{b} = \{1\}$ for which the existential theory of the reals
enriched with both the predicates $\ipow{a}$ and $\ipow{b}$ is decidable? The
undecidability proof of the full FO theory proven in~\cite{Hieronymi10} relies
heavily on quantifier~alternation. 

\bibliographystyle{alphaurl}
\bibliography{bibliography}



\end{document}